\numberwithin{equation}{section}
\newtheorem{theorem}{Theorem}[section]
\newtheorem{lemma}[theorem]{Lemma}
\newtheorem{definition}[theorem]{Definition}
\newtheorem{remark}[theorem]{Remark}
\newtheorem{proposition}[theorem]{Proposition}
 \newcommand{\mt}{\mathbf{t}}
 \newcommand{\p}{\partial}
\newcommand{\teta}{\tilde{\eta}}
\newcommand{\tth}{\tilde{H}}
\begin{document}

\title[matrix-valued $\theta$-deformed bi-orthogonal polynomials and Toda]{Matrix-valued $\theta$-deformed bi-orthogonal polynomials, Non-commutative Toda theory and B\"acklund transformation}

\subjclass[2020]{39A36,~15A15}
\date{}

\dedicatory{}

\keywords{matrix-valued orthogonal polynomials, $\theta$-deformation, Wronski quasi-determinant technique, non-commutative Toda-type lattices}

\author{Claire Gilson}
\address{School of Mathematics and Statistics, University of Glasgow, Glasgow G12 8SQ, UK}
\email{claire.gilson@glasgow.ac.uk}

\author{Shi-Hao Li}
\address{Department of Mathematics, Sichuan University, Chengdu, 610064, PR China}
\email{shihao.li@scu.edu.cn}

\author{Ying Shi}
\address{School of Science, Zhejiang University of Science and Technology, Hangzhou, 310023, PR China}
\email{yingshi@zust.edu.cn}

\begin{abstract}
\noindent 
This paper is devoted to revealing the relationship between matrix-valued $\theta$-deformed bi-orthogonal polynomials and non-commutative Toda-type hierarchies. In this procedure, Wronski quasi-determinants are widely used and play the role of non-commutative $\tau$-functions. At the same time, B\"acklund transformations are realized by using a moment modification method and non-commutative $\theta$-deformed Volterra hierarchies are obtained, which contain the known examples of the Itoh-Narita-Bogoyavlensky lattices and the fractional Volterra hierarchy. 
\end{abstract}

\maketitle

\section{Introduction}

The studies of connections between orthogonal polynomials and integrable systems doesn't only promote the development in their own respective directions, but stimulates disciplinary researches on random matrices, combinatorics, probability and so on. A famous example in this field is the connection between standard orthogonal polynomials and the Toda equation \cite{deift00,forrester10}. Starting from a non-negative weight function $\omega(x)$, one can define an inner product 
\begin{align*}
\langle \cdot,\cdot\rangle: \mathbb{R}[x]\times\mathbb{R}[x]\to\mathbb{R}, \quad \langle f(x),g(x)\rangle=\int_{\mathbb{R}} f(x)g(x)\omega(x)dx,
\end{align*}
and a sequence of monic orthogonal polynomials $\{P_n(x)\}_{n\in\mathbb{N}}$ is then defined by the orthogonality $$\langle P_n(x),P_m(x)\rangle=h_n\delta_{n,m}$$ for some non-singular normalization constant $h_n$, where deg $P_n=n$. It is then known that there exists a three-term recurrence relation for the orthogonal polynomials 
\begin{align}\label{ttrr}
xP_n(x)=P_{n+1}(x)+a_nP_n(x)+b_nP_{n-1}(x),\quad P_{-1}(x)=0,\,P_0(x)=1
\end{align}	
for some coefficients $a_n$ and $b_n$.
This recurrence relation plays a role as a spectral problem and is a key ingredient to connect with the Toda equation. If we assume that there exist time evolutions in the weight function such that 
$$\omega(x)\mapsto\omega(x;\mt):=\exp\left(\sum_{i=1}^\infty t_ix^i\right)\omega(x),$$ 
then the orthogonal polynomials are dependent with time flows, and the Toda equation could be derived by the compatibility condition of the spectral equation and evolution part. Therefore, given a sequence of orthogonal polynomials, we could construct integrable structures for classical integrable systems, such as a Lax pair, wave functions, dressing structures, $\tau$-functions, and symmetries \cite{adler95}. These exactly integrable structures lay a solid foundation for applications into different branches in mathematics like integrable geometry \cite{mansfield13}, integrable combinatorics \cite{difrancesco14}, integrable probability \cite{okounkov01} etc. This idea was later generalized to find more connections among bi-orthogonal polynomials, integrable systems, random matrices and related fields, see e.g.  \cite{adler97,adler99,chang18,li19} and references therein.

In the late 1940s, Krein carried out research into matrix-valued measures and matrix-valued orthogonal polynomials \cite{krein49}. 
In recent years, as matrix-valued orthogonal polynomials play an important role in solving problems such as quasi-birth-and-death processes \cite{grunbaum08} and two periodic Aztec diamond problems \cite{duits18}, they have gradually become an essential algebraic tool. 
In particular, as an important application of the quasi-determinant, it was shown in \cite{gelfand05} that matrix-valued orthogonal polynomials could be expressed in closed form, making different applications such as non-commutative Hermite-Pad\'e approximation \cite{doliwa22}, Wynn recurrence \cite{doliwa222} and non-commutative integrable systems possible. 
Regarding with connections between matrix-valued orthogonal polynomials and non-commutative integrable systems, studies are mainly on non-commutative Painlev\'e equation \cite{cafasso14} and the non-commutative Toda equation \cite{branquinho20,ismail19,li20}. 
Moreover, an attempt to understand other non-commutative Toda-type equations has been made by considering the matrix-valued Cauchy bi-orthogonal polynomials,   and the corresponding non-commutative version of the C-Toda lattice was recently proposed in \cite{li22}. 

This article will continue this thread by considering matrix versions of bi-orthogonal polynomials, thus establishing more connections between matrix-valued polynomials and non-commutative integrable systems. We start with a special bilinear form (in the scalar case)
\begin{align}\label{bf}
\langle f(x),g(x)\rangle_\theta:=\int_{\mathbb{R}}f(x)g(x^\theta)\omega(x)dx.
\end{align}
When $\theta\in\mathbb{Z}_+$, this bilinear form can be dated back to earlier works of Konhauser \cite{konhauser67} and Carlitz \cite{carlitz68} as a generalization of the Laguerre polynomials. In fact,  \eqref{bf} is valid for arbitrary $\theta\in\mathbb{R}_+$, and the corresponding bi-orthogonal polynomials were used to describe a Hermitian matrix model with additional interaction, which is called the Muttalib-Borodin model \cite{borodin98,muttalib95}. 
One can see that $\theta$ is a free parameter in \eqref{bf} compared with the standard inner product, and thus the corresponding bi-orthogonal polynomials are sometimes considered as a $\theta$-deformation of the standard orthogonal polynomials. On the other hand, as standard orthogonal polynomials are connected with the Toda equation, it is expected that these $\theta$-deformed bi-orthogonal polynomials should correspond to a $\theta$-deformed integrable hierarchy. This result is known from earlier private communication with Ipsen \cite{ipsen}. 

In Sec \ref{sec2}, we demonstrate the corresponding non-commutative $\theta$-deformed Toda theory by rewriting the weight function into a matrix-valued weight function in \eqref{bf}. An important assumption on weight function, which we call the moment condition (see Definition \ref{def1.1}), is made to ensure the existence and uniqueness of the matrix-valued bi-orthogonal polynomials. Moreover, we show that the non-commutative $\theta$-deformed Toda hierarchy is the same as the non-commutative hungry Toda hierarchy (aka non-commutative Blaszak-Marciniak hierarchy, Kuperschmidt hierarchy etc) by making the use of matrix-valued bi-orthogonal polynomials when $\theta\in\mathbb{Z}_+$. In the commutative case, such a hierarchy could be viewed as a special situation of Kostant-Toda hierarchy, whose solutions were constructed by the Wronskian determinant \cite{kodama15}. 
Therefore,  in Section \ref{sec2.3}, we develop a  Wronski quasi-determinant technique and verify the solutions of the non-commutative Blaszak-Marciniak three-field equations. It is then found that for the general non-commutative Blaszak-Marcinak hierarchy, the non-commutative nonlinear variables can be written in terms of a single Wronski quasi-determinant with higher-order derivatives. Therefore, such a method could be regarded as a modification of Hirota's direct method into non-commutative integrable systems. Although the equations written using a single quasi-determinant are no longer bilinear, the ideas of Hirota's bilinear method could be extensively enlarged into non-commutative circumstances. Therefore, based on reduction techniques in Hirota's bilinear method, we apply the idea of B\"acklund transformation from the commutative case into the non-commutative case.  These details are discussed later in Sec. \ref{sec4}.

In Sec \ref{sec3}, we generalize our choice of $\theta$ to positive rational numbers. For $\theta=a/b$, where $a,b\in\mathbb{Z}_+$, we show that the corresponding matrix-valued bi-orthogonal polynomials satisfy an $(a+b+1)$-term recurrence relation. Thus it leads us to two-parameter deformed non-commutative equations. In the commutative case, the corresponding two-parameter deformed theory is related to the extended bigraded Toda hierarchy studied in \cite{carlet04,carlet06,dubrovin04} arising from the theory of Frobenius manifolds and geometric structures. 
In \cite{li11}, solutions of the bigraded Toda hierarchy were given by using string orthogonal polynomials, which are wave functions for the 2-dimensional Toda hierarchy \cite{adler97}. Therefore, works in \cite{li11} is the realization of a bigraded Toda hierarchy as a reduction of 2d-Toda theory. In this paper, we provide an explanation for such a bigraded Toda hierarchy by using matrix-valued $\theta$-deformed bi-orthogonal polynomials and characterize its solution in closed form by making use of block Wronski quasi-determinants. We show that $\theta$-deformed bi-orthogonal polynomials act exactly as the wave function for an extended bigraded Toda hierarchy.

In Sec \ref{sec4},  we mainly use the moment reduction technique to grade the matrix-valued orthogonal polynomial space, and find out the B\"acklund transformation for the $\theta$-deformed integrable hierarchies. 
In Sec. \ref{sec2}, we showed that a Wronski quasi-determinant could be used to construct solutions for the Blaszak-Marciniak equation. According to a property of Wronski (quasi-)determinants, it is known that (quasi-)determinants of the same order but with different phases can be regarded as solutions to the same equation. Moreover, these solutions could be linked to a simple equation according to Hirota's idea of bilinear B\"acklund transformation \cite[\S 4]{hirota04}. Such an idea has been widely applied to the correspondence between the Toda and the Lotka-Volterra equation (i.e. Kac-van Moerbeke lattice) \cite{gesztesy93}, and later was used in orthogonal polynomials by making constraints on the weight function \cite{tsujimoto00}. 

Revisiting the connection between orthogonal polynomials and the Toda equation mentioned at the beginning,  we can set the weight function symmetrically so that moments admit the form 
\begin{align*}
m_{i,j}=\langle x^i,x^j\rangle=\left\{\begin{array}{ll}
m_{i+j},&\text{if $i+j$ is even},\\
0,&\text{if $i+j$ is odd}.
\end{array}
\right.
\end{align*}
Then corresponding orthogonal polynomials are symmetric, and their normalization factors are dependent with $\tau_n^{(0)}=\det(d_{i+j})_{i,j=0}^{n-1}$ and $\tau_n^{(1)}=\det(d_{i+j+1})_{i,j=0}^{n-1}$  where $d_i=m_{2i}$. The Wronski technique tells us that if we assume that $\p_t d_i=d_{i+1}$, then each $\tau_n^{(0)}$ and $\tau_n^{(1)}$ are solutions of the Toda equation. Therefore, the equation derived by using symmetric orthogonal polynomials will be an integrable equation connecting different solutions of the Toda equation. In Sec \ref{sec4.1}, we deduce in detail the B\"acklund transformation corresponding to the non-commutative Blaszak-Marcininak lattice when $\theta$ is a positive integer. The original solution space of $\tau$-functions is divided into $\theta+1$ different families, together with matrix-valued polynomial space. It is shown that the polynomial space could be graded as a direct sum of equivalence classes, in which the powers of polynomials are the same modulu $\theta+1$. As a result, the B\"acklund transformation of the non-commutative Blaszak-Marciniak lattice can be written as addition and multiplication forms of the Itoh-Narita-Bogoyavlensky (INB) lattice in the non-commutative version. Since the INB lattice hierarchy can be regarded as a discretization of the Gelfand-Dickey hierarchy, we understand that the $\theta$-deformed integrable hierarchy can be regarded as the discrete Gelfand-Dickey flows under the perspective of orthogonal polynomials. In Sec. \ref{sec4.2}, we again use the Wronski quasi-determinant technique to verify solutions of a specific INB lattice, and demonstrate that the graded $\tau$-functions are simply connected by non-commutative Jacobi identities. Moreover, the case where $\theta\in\mathbb{Q}_+$ is discussed in Sec. \ref{sec4.3}. We find that the spectral problem in this case corresponds to the fractional Volterra hierarchy proposed in \cite{liu18}, and under certain time flows, we obtain the corresponding integrable equations.

The highlights of this article are the following:
\begin{enumerate}
\item Matrix-valued $\theta$-deformed bi-orthogonal polynomials are proposed, and corresponding non-commutative integrable systems are obtained, with Lax pairs and solutions;
\item The direct method of Wronski quasi-determinants is developed. We verify solutions of several non-commutative integrable systems by using the quasi-Wronski technique;
\item The B\"acklund transformation can be understood as a gradation of the solution space. We make the moment reduction approach to grade wave function space and solution space, and thus realize the corresponding B\"acklund transformation.
\end{enumerate}

%section
\section{Matrix-valued bi-orthogonal polynomials and Recurrence relation}\label{sec2}	

Before we work on the matrix-valued bi-orthogonal polynomials, we need to introduce a matrix-valued Radon measure $\mu$: $(-\infty,\infty)\to\mathbb{R}^{p\times p}$. Firstly, according to   
Riesz-Markov-Kakutani representation theory, it is known that for any  positive linear functional $\psi$ on $C_c(\mathbb{R})$ (the space of continuous compactly supported real-valued functions on $\mathbb{R}$), there is a unique Radon measure $\mu$ on $\mathbb{R}$ such that
\begin{align}\label{int}
\psi(f)=\int_{\mathbb{R}} f(x)d\mu(x).
\end{align}
Therefore, for any matrix-valued polynomials $f(x)\in\mathbb{R}^{p\times p}[x]$, the integration \eqref{int} is well-defined for a Radon measure $\mu$. Moreover, if we normalize the Radon measure $\mu(\mathbb{R})=\mathbb{I}_p$, where $\mathbb{I}_p$ is a $p\times p$ identity matrix, then according to the Radon-Nikodym theorem, the normalized Radon measure $\mu$ is related to a matrix-valued weight function $W(x)$ such that $d\mu(x)=W(x)dx$. Please refer to \cite{damanik07} for details.

Therefore, the weight function $W(x)$ can induce a bilinear form
\begin{align}\label{innerproduct}
{\langle \cdot,\cdot\rangle}_\theta: \mathbb{R}^{p\times p}[x]\times \mathbb{R}^{p\times p}[x]\rightarrow \mathbb{R}^{p\times p},\quad 
{\langle f(x),g(x)\rangle}_\theta=\int_{\mathbb{R}}f(x)W(x){g^\top(x^\theta)} dx,
\end{align}
with dependence of $\theta\in\mathbb{Z}_+$.  Here $\top$ represents the transpose of a matrix. To make the bilinear form well-defined, we need to make some assumptions about $W(x)$. Similar to the weight function discussed in \cite{duits18}, the matrix-valued weight function is not necessarily symmetric or Hermitian. However, to ensure the existence and uniqueness of corresponding matrix-valued bi-orthogonal polynomials, details of requirements on $W(x)$ are addressed in Definition \ref{def1.1}. Firstly, we state some properties of the bilinear form.

\begin{proposition}
The bilinear form \eqref{innerproduct}  has the following properties:
\begin{enumerate}
\item {\textbf{Bimodule structures}}. For any $L_1,\,L_2,\,R_1,\,R_2\in\mathbb{R}^{p\times p}$ and $f_1(x),f_2(x),g_1(x),g_2(x)\in\mathbb{R}^{p\times p}[x]$, we have
 \begin{align}\label{bimodule}
 \begin{aligned}
& {\langle L_1f_1(x)+L_2f_2(x),g(x)\rangle}_\theta=L_1{\langle f_1(x), g(x)\rangle}_\theta+L_2{\langle f_2(x),g(x)\rangle}_\theta,\\
& {\langle f(x),R_1 g_1(x)+R_2 g_2(x)\rangle}_\theta={\langle f(x),g_1(x)\rangle}_\theta R_1^\top+{\langle f(x),g_2(x)\rangle}_\theta R_2^\top.
\end{aligned}
\end{align}

\item {\textbf{Quasi-symmetry property}.} For any $f(x),g(x)\in\mathbb{R}^{p\times p}[x]$, it holds that
\begin{align}\label{qs}
&\langle x^\theta f(x), g(x)\rangle_\theta = \langle f(x),x g(x)\rangle_\theta,\quad \text{~for~} \theta\in\mathbb{Z}_+.
\end{align}
\end{enumerate}
\end{proposition}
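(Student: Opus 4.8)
The plan is to reduce both properties to three elementary facts: that matrix multiplication distributes over addition, that constant matrices may be pulled through the integral sign, and that transposition reverses the order of a product while acting trivially on scalars. No analytic input beyond the well-definedness of \eqref{int} is needed.

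For the bimodule structure I would substitute the definition \eqref{innerproduct} directly. For the left identity, writing $\langle L_1f_1+L_2f_2,g\rangle_\theta=\int_{\mathbb{R}}(L_1f_1(x)+L_2f_2(x))W(x)g^\top(x^\theta)\,dx$ and using that $L_1,L_2$ are constant, I factor them to the left of the integral to obtain the claimed decomposition. For the right identity the only extra ingredient is the transpose: since $(R_1g_1(x^\theta)+R_2g_2(x^\theta))^\top=g_1^\top(x^\theta)R_1^\top+g_2^\top(x^\theta)R_2^\top$, the constant factors $R_1^\top,R_2^\top$ appear on the \emph{right} of the integrand and pull out to the right, which is precisely why the transposes show up in \eqref{bimodule}.

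For the quasi-symmetry property the essential step is bookkeeping of how the substitution $x\mapsto x^\theta$ acts on the second slot. Treating $xg(x)$ as a single matrix polynomial $\tilde g(x):=xg(x)$, its image in the integrand is $\tilde g^\top(x^\theta)=(x^\theta g(x^\theta))^\top=x^\theta g^\top(x^\theta)$, where I have used that $x^\theta$ is a scalar and so commutes with the transpose. Hence $\langle f,xg\rangle_\theta=\int_{\mathbb{R}}f(x)W(x)\,x^\theta g^\top(x^\theta)\,dx$. On the other side, $\langle x^\theta f,g\rangle_\theta=\int_{\mathbb{R}}x^\theta f(x)W(x)g^\top(x^\theta)\,dx$, and since $x^\theta$ is scalar it may be moved freely past $f(x)$ and $W(x)$. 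The two integrands therefore coincide pointwise and \eqref{qs} follows.

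The computation is short, and the only point requiring care is not so much an obstacle as a well-definedness remark: the restriction $\theta\in\mathbb{Z}_+$ is exactly what guarantees that $x^\theta f(x)$ again lies in $\mathbb{R}^{p\times p}[x]$, so that the left-hand side of \eqref{qs} makes sense within the polynomial framework. Once this is noted, the identity is forced by the matching of the exponent $\theta$ carried on the first slot against the inflation $x\mapsto x^\theta$ produced on the second slot. This is precisely the mechanism that will later allow multiplication by $x^\theta$ to serve as the spectral operation in the recurrence relation.
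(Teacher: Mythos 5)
Your proof is correct and is exactly the routine direct verification from the definition \eqref{innerproduct} that the paper leaves implicit (the proposition is stated there without proof): pull constant matrices through the integral, transpose the second slot, and use that the scalar $x^\theta$ commutes with matrices. Your closing remark correctly identifies why $\theta\in\mathbb{Z}_+$ is needed in \eqref{qs} — it keeps $x^\theta f(x)$ polynomial — consistent with the paper's later rational-$\theta$ replacement $\langle x^b f,g\rangle_\theta=\langle f,x^a g\rangle_\theta$ in \eqref{qscon}.
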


Moreover, the bilinear form \eqref{innerproduct} can induce a family of monic bi-orthogonal polynomial  sequences $\{P_n(x), Q_n(x)\}_{n\in\mathbb{N}}$ such that
\begin{align}\label{or}
\langle P_n(x),Q_m(x)\rangle_\theta=H_n\delta_{n,m}, 
\end{align} 
where $H_n\in\mathbb{R}^{p\times p}$ is a nonsingular normalization factor and deg $P_n$=deg $Q_n$=$n$. Since $\mathbb{R}^{p\times p}[x]$ is a free module and $\{x^k\mathbb{I}_p\}_{k\in\mathbb{N}}$ form its basis, we know that $P_n(x)$ can be expanded as 
\begin{align}\label{p-poly}
P_n(x)=\mathbb{I}_px^n+\xi_{n,n-1}x^{n-1}+\cdots+\xi_{n,0},\quad \xi_{n,j}\in\mathbb{R}^{p\times p}, \quad j=0,\cdots,n-1.
\end{align}
Moreover, according to bimodule property \eqref{bimodule}, the orthogonal condition \eqref{or} is equivalent to 
\begin{align}\label{or-2}
\begin{aligned}
{\langle P_n(x),x^j\mathbb{I}_p\rangle}_\theta=0,\quad 0\leq j\leq n-1.
\end{aligned}
\end{align}
Therefore, if we denote moments
\begin{align}\label{moments}
m_{i+j\theta}=\langle x^i\mathbb{I}_p,x^j\mathbb{I}_p\rangle_\theta=\int_{\mathbb{R}}(x^i\mathbb{I}_p)W(x)(x^j\mathbb{I}_p)^\theta dx=\int_{\mathbb{R}}x^{i+j\theta}W(x)dx,
\end{align}
then the orthogonal condition \eqref{or-2} is equivalent to a linear system with matrix-valued coefficients
\begin{align}\label{ls}
\xi_{n,0}m_{j\theta}+\xi_{n,1}m_{1+j\theta}+\cdots+\xi_{n,n-1}m_{n-1+j\theta}=-m_{n+j\theta},\quad j=0,\cdots,n-1.
\end{align}
It is noted that the existence and uniqueness of bi-orthogonal polynomials $\{P_n(x)\}_{n\in\mathbb{N}}$ are equivalent to the existence and uniqueness of solutions in \eqref{ls}.
Therefore, we make the following assumptions about the weight function $W(x)$, which we call the moment condition.
\begin{definition}\label{def1.1}
The weight function $W(x)$ satisfies the moment condition if
\begin{enumerate}
\item all moments $m_{i+j\theta}$ exist and are finite;
\item the moment matrices $\left(
m_{i+j\theta}
\right)_{i,j=0,1,\cdots}$ are invertible.
\end{enumerate}
\end{definition}
It is known from Proposition \ref{p-ls} that if $W(x)$ satisfies the moment condition, then coefficients of $P_n(x)$ are given by\footnote[3]{~For self-consistent, we give the definitions and basic properties of quasi-determinants in the appendix.}
\begin{align}\label{xi}
\xi_{n,j}=-\left(
m_{n},m_{n+\theta},\cdots,m_{n+(n-1)\theta}
\right)\left(\begin{array}{cccc}
m_{0}&m_{\theta}&\cdots&m_{(n-1)\theta}\\
m_1&m_{\theta+1}&\cdots&m_{(n-1)\theta+1}\\
\vdots&\vdots& &\vdots\\
m_{n-1}&m_{n-1+\theta}&\cdots&m_{n-1+(n-1)\theta}
\end{array}
\right)^{-1}e_{j+1}^\top,
\end{align}
where 
\begin{align}\label{ue}
e_j=(0,\cdots,\mathbb{I}_p,\cdots,0)
\end{align} is the block unit vector, whose $j$-th element is the unity $\mathbb{I}_p$ and the others are zeros. Therefore, by substituting \eqref{xi} into the expansion \eqref{p-poly}, we have the quasi-determinant formula
 \begin{align}\label{pn}
P_n(x)=\left|\begin{array}{cccc}
m_{0}&\cdots&m_{(n-1)\theta}&\mathbb{I}_p\\
\vdots& &\vdots&\vdots\\
m_{n-1}&\cdots&m_{n-1+(n-1)\theta}&x^{n-1}\mathbb{I}_p\\
m_{n}&\cdots&m_{n+(n-1)\theta}&\boxed{x^n\mathbb{I}_p}
\end{array}
\right|.
\end{align}

On the other hand, if we assume that  
\begin{align}\label{q-poly}
Q^\top_n(x)=\mathbb{I}_px^n+\eta_{n,n-1} x^{n-1}+\cdots+\eta_{n,0},\quad \eta_{n,j}\in\mathbb{R}^{p\times p},\quad j=0,\cdots,n-1,
\end{align}
then the orthogonal relation \eqref{or} gives the linear system
\begin{align}\label{ls-eta}
m_{j}{\eta_{j, 0}}+m_{j+\theta}{\eta_{j, 1}}+\cdots+m_{j+(n-1)\theta}{\eta_{j,n-1}}=-m_{j+n\theta},\quad j=0,\cdots,n-1.
\end{align}
Therefore, coefficients of $Q^\top_n(x)$ are given by 
\begin{align}\label{eta}
\eta_{n,j}=-e_{j+1}
\left(\begin{array}{ccccc}
m_{0}&m_{\theta}&\cdots&m_{(n-1)\theta}\\
m_1&m_{1+\theta}&\cdots&m_{1+(n-1)\theta}\\
\vdots&\vdots& &\vdots\\
m_{n-1}&m_{n-1+\theta}&\cdots&m_{n-1+(n-1)\theta}
\end{array}
\right)^{-1}
\left(\begin{array}{cc}
m_{n\theta}\\
m_{1+n\theta}\\
\vdots\\
m_{n-1+n\theta}
\end{array}
\right),
\end{align}
and $Q_n^\top(x)$ admits the quasi-determinant formula
\begin{align}\label{qn}
{Q^\top_n(x)}=\left|\begin{array}{cccc}
m_{0}&\cdots&m_{(n-1)\theta}&m_{n\theta}\\
m_1&\cdots&m_{(n-1)\theta+1}&m_{n\theta+1}\\
\vdots& &\vdots&\vdots\\
m_{n-1}&\cdots&m_{n-1+(n-1)\theta}&m_{n-1+n\theta}\\
\mathbb{I}_p&\cdots&x^{n-1}\mathbb{I}_p&\boxed{x^n\mathbb{I}_p}
\end{array}
\right|.\end{align}

To conclude, we have the following definition.
\begin{definition}
With bilinear form \eqref{innerproduct} in which $W(x)$ satisfies the moment condition, a family of matrix-valued bi-orthogonal polynomials $\{P_n(x), Q_n(x)\}_{n\in\mathbb{N}}$ are defined by 
\begin{align*}
\langle P_n(x),Q_m(x)\rangle_\theta=H_n\delta_{n,m},
\end{align*}
where $P_n(x)$ and $Q(x)$ are given by \eqref{pn} and \eqref{qn} respectively, and
\begin{align}\label{nf}
H_n=\left|\begin{array}{cccc}
m_{0}&\cdots&m_{(n-1)\theta}&m_{n\theta}\\
\vdots&&\vdots&\vdots\\
m_{n-1}&\cdots&m_{n-1+(n-1)\theta}&m_{n-1+n\theta}\\
m_{n}&\cdots&m_{n+(n-1)\theta}&\boxed{m_{n+n\theta}}
\end{array}
\right|.
\end{align}
\end{definition}

% subsection 2.1
\subsection{Recurrence relations for $\theta\in\mathbb{Z}_+$}
Hereafter, we are going to discuss the recurrence relations for $\{P_n(x), Q_n(x)\}_{n\in\mathbb{N}}$ when $\theta\in\mathbb{Z}_+$.

\begin{proposition}
The matrix-valued bi-orthogonal polynomials $\{P_n(x), Q_n(x)\}_{n\in\mathbb{N}}$ satisfy the following recurrence relations
\begin{subequations}
\begin{align}
x^\theta P_n(x)=P_{n+\theta}(x)+\sum^{n+\theta-1}_{j=n-1}\alpha_{n, j}P_j(x),\label{re-1}\\
x Q_n(x)=Q_{n+1}(x)+\sum^{n}_{j=n-\theta}\beta_{n, j}Q_j(x),\label{re-2}
\end{align}
\end{subequations}
for $\theta\in\mathbb{Z}_+$ and some certain $\alpha_{n, j},\,\beta_{n, j} \in\mathbb{R}^{p\times p}$.
\end{proposition}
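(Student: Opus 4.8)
The plan is to treat each recurrence as the expansion of a known polynomial in the relevant bi-orthogonal basis, and then to use orthogonality together with the quasi-symmetry property \eqref{qs} to annihilate all but a band of coefficients. Since $\mathbb{R}^{p\times p}[x]$ is a free module with basis $\{x^k\mathbb{I}_p\}_{k\in\mathbb{N}}$ and each $P_n$ (resp.\ $Q_n$) is monic of degree $n$, the change of basis between $\{x^k\mathbb{I}_p\}$ and $\{P_k\}$ (resp.\ $\{Q_k\}$) is block-triangular with identity diagonal, hence invertible; so $\{P_k\}_{k\le N}$ and $\{Q_k\}_{k\le N}$ each span the polynomials of degree $\le N$ and may be used to expand with matrix coefficients on the left. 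I would first record the two ``one-sided'' orthogonalities coming from \eqref{or-2} and the bimodule rule \eqref{bimodule}: expanding a low-degree $g$ in the $\{x^k\mathbb{I}_p\}$ basis gives $\langle P_n(x),g(x)\rangle_\theta=0$ whenever $\deg g\le n-1$, and expanding a low-degree $f$ in the $\{P_k\}$ basis gives symmetrically $\langle f(x),Q_n(x)\rangle_\theta=0$ whenever $\deg f\le n-1$.

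For \eqref{re-1}, I write $x^\theta P_n(x)=\sum_{j=0}^{n+\theta}\alpha_{n,j}P_j(x)$; comparing leading terms and using that $P_n$ is monic forces $\alpha_{n,n+\theta}=\mathbb{I}_p$, which is the displayed $P_{n+\theta}(x)$ term. To locate the lower cutoff I pair with $Q_j$ and use the left module structure in \eqref{bimodule} to obtain $\alpha_{n,j}H_j=\langle x^\theta P_n(x),Q_j(x)\rangle_\theta$. Applying the quasi-symmetry \eqref{qs} moves the factor $x^\theta$ across, $\langle x^\theta P_n(x),Q_j(x)\rangle_\theta=\langle P_n(x),xQ_j(x)\rangle_\theta$, and since $\deg(xQ_j)=j+1$ the one-sided orthogonality shows this vanishes as soon as $j+1\le n-1$, i.e.\ $j\le n-2$; as $H_j$ is nonsingular, $\alpha_{n,j}=0$ there. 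This leaves precisely the band $n-1\le j\le n+\theta-1$ of \eqref{re-1}.

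For \eqref{re-2}, I expand $xQ_n(x)=\sum_{j}\beta_{n,j}Q_j(x)$ with left coefficients (again legitimate by triangularity), the monic comparison giving $\beta_{n,n+1}=\mathbb{I}_p$, i.e.\ the $Q_{n+1}(x)$ term. To extract $\beta_{n,m}$ I pair on the \emph{left} with $P_m$; here the right module rule in \eqref{bimodule} contributes a transpose, yielding $\langle P_m(x),xQ_n(x)\rangle_\theta=H_m\beta_{n,m}^\top$. Running \eqref{qs} in the opposite direction gives $\langle P_m(x),xQ_n(x)\rangle_\theta=\langle x^\theta P_m(x),Q_n(x)\rangle_\theta$, whose first entry $x^\theta P_m$ has degree $m+\theta$; by the one-sided orthogonality in the second argument this is zero whenever $m+\theta\le n-1$, i.e.\ $m\le n-\theta-1$, and nonsingularity of $H_m$ gives $\beta_{n,m}=0$ below $m=n-\theta$. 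The degree bound caps the sum at $m=n$, producing the band $n-\theta\le j\le n$ of \eqref{re-2}.

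The triangularity and degree bookkeeping are routine; the substantive step is the invocation of \eqref{qs}, which is exactly what trades a factor $x^\theta$ in one slot for a single $x$ in the other and so explains why the two bands, though each containing $\theta+1$ terms, sit in asymmetric positions (reaching above the index $n$ in the first relation, below it in the second). The one genuine obstacle to watch is the non-commutative bookkeeping: one must consistently separate the left expansion used to read off coefficients from the transpose-twisted right action in \eqref{bimodule}, and check that ``monic of degree $n$'' really pins the top coefficient to $\mathbb{I}_p$ even after the transpose appearing in the definition \eqref{q-poly} of $Q_n$. Once these are handled, no moment computation or quasi-determinant manipulation is required.
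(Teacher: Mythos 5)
Your proof is correct and follows essentially the same route as the paper: expand in the bi-orthogonal basis, read off coefficients via $\langle x^\theta P_n,Q_j\rangle_\theta H_j^{-1}$ (resp.\ the transpose-twisted pairing with $P_m$), and use the quasi-symmetry \eqref{qs} plus degree counting to truncate the band. You are in fact slightly more thorough than the paper, which proves only \eqref{re-1} in detail and dispatches \eqref{re-2} with ``similarly,'' whereas you track the transpose in the right-module rule explicitly.
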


\begin{proof}
Since $\{P_n(x)\}_{n\in\mathbb{N}}$ form a  basis of the left module $\mathbb{R}^{p\times p}[x]$ under the bilinear form \eqref{innerproduct}, it is known that any polynomial in $\mathbb{R}^{p\times p}[x]$ can be written as a left linear combination of $\{P_n(x)\}_{n\in\mathbb{N}}$. Therefore, we have
\begin{align*}
x^\theta P_{n}(x)=P_{n+\theta}(x)+\sum_{i=0}^{n+\theta-1}\alpha_{n,i}P_i(x),\quad \alpha_{n,i}\in\mathbb{R}^{p\times p},
\end{align*}
and the recurrence coefficients
\begin{align}\label{rc}
\alpha_{n,i}=\langle x^\theta P_n(x),Q_i(x)\rangle_\theta\cdot H_i^{-1}=\langle P_n(x),xQ_i(x)\rangle_\theta \cdot H_i^{-1},\quad i=0,\cdots,n+\theta-1,
\end{align}
where the last equality is due to the quasi-symmetry property \eqref{qs}.
When $i<n-1$, we know that deg $xQ_i<n$, and according to \eqref{or}, the bilinear form is definitely zero. It means that $\alpha_{n,i}=0$ when $0\leq i< n-1$, and the recurrence relation \eqref{re-1} is obtained. 
The recurrence relation \eqref{re-2} for $\{Q_{n}(x)\}_{n\in\mathbb{N}}$ can be proved similarly.
\end{proof}
\begin{remark}
The recurrence relations \eqref{re-1} and \eqref{re-2} can alternatively be written in a matrix form.
If we denote
\begin{align}\label{wave1}
\Phi=\left(\begin{array}{c}
P_0(x)\\
P_1(x)\\
\vdots
\end{array}
\right),\quad \Psi=\left(\begin{array}{c}
Q_0(x)\\
Q_1(x)\\
\vdots
\end{array}
\right),
\end{align}
then \eqref{re-1} can be written as 
\begin{align*}
x^\theta \Phi=L\Phi,\quad L=\Lambda^{\theta}+a_{\theta-1}\Lambda^{\theta-1}+\cdots+a_{-1}\Lambda^{-1},
\end{align*}
where $\{a_i\}_{i=-1,\cdots,\theta-1}$ are block diagonal matrices $a_i=\text{diag}(\alpha_{0,i},\alpha_{1,i+1},\alpha_{2,i+2},\cdots)$, $\Lambda$ is a block shift operator
\begin{align*}
\Lambda=\left(\begin{array}{ccccc}
0&\mathbb{I}_p&0&0&\cdots\\
0&0&\mathbb{I}_p&0&\cdots\\
0&0&0&\mathbb{I}_p&\cdots\\
\vdots&\vdots&\vdots&\vdots&\ddots
\end{array}
\right),
\end{align*}
and $\Lambda^{-1}$ is defined by the transpose of $\Lambda$\footnote[4]{~It should be noted that $\Lambda^{-1}$ is not the inverse of $\Lambda$. We use the notation $\Lambda^{-1}$ to denote the transpose of $\Lambda$. In our later computations, we use the positive power and negative power of $\Lambda$ to indicate the block upper triangular part and block lower triangular part of a matrix, respectively. }.
Similarly, the recurrence \eqref{re-2} could be written by
\begin{align*}
x\Psi=M\Psi,\quad M=\Lambda+\beta_{0}\Lambda^0+\cdots+\beta_{\theta}\Lambda^{-\theta},
\end{align*}
where $\beta_i=\text{diag}(\beta_{i,0},\beta_{i+1,1},\beta_{i+2,2},\cdots)$ for $i=0,\cdots,\theta$.
\end{remark}
The following proposition states that those recurrence coefficients $\alpha_{n,j}$ and $\beta_{n,j}$ could be written in terms of quasi-determinants.
\begin{proposition}
Recurrence coefficients $\alpha_{n,j}$  could be written in terms of quasi-determinants and
\begin{align*}
\alpha_{n,j}=\left(Z_{n,j}+\sum_{k=n-1}^{j-1}Z_{n,k}\eta_{j,k}\right)H_j^{-1},\quad j=n-1,\cdots,n+\theta-1,
\end{align*}
where 
\begin{align}\label{znj}
Z_{n,j}={\langle P_n(x), x^{j+1}\mathbb{I}_p\rangle}_\theta
=\left|
\begin{array}{ccccc}
m_{0}&m_{\theta}&\cdots&m_{(n-1)\theta}&m_{(j+1)\theta}\\
\vdots&\vdots& &\vdots&\vdots\\
m_{n-1}&m_{n-1+\theta}&\cdots&m_{n-1+(n-1)\theta}&m_{n-1+(j+1)\theta}\\
m_{n}&m_{n+\theta}&\cdots&m_{n+(n-1)\theta}&\boxed{m_{n+(j+1)\theta}}
\end{array}
\right|,
\end{align}
and $\eta_{j,k}$ is the coefficient of $Q_j(x)$ given in \eqref{eta}. Moreover, if we introduce the notation
\begin{align*}
Y_{n, j+\theta-1}= \langle {x^{j+\theta}\mathbb{I}_p,Q_n(x)\rangle}_\theta
=\left|
\begin{array}{ccccc}
m_{0}&m_{\theta}&\cdots&m_{(n-1)\theta}&m_{n\theta}\\
\vdots&\vdots& &\vdots&\vdots\\
m_{n-1}&m_{n-1+\theta}&\cdots&m_{n-1+(n-1)\theta}&m_{n-1+n\theta}\\
m_{j+\theta}&m_{j+2\theta}&\cdots&m_{j+n\theta}&\boxed{m_{j+(n+1)\theta}}
\end{array}
\right|,
\end{align*}
then $\beta^\top_{n,j}$ could be expressed by
\begin{align*}
\beta^\top_{n,j}=H_j^{-1}\cdot\left(
Y_{n,j+\theta-1}+\sum_{k=n-\theta}^{j-1}\xi_{j,k}Y_{n,k+\theta-1}
\right),\quad j=n-\theta,\cdots,n,
\end{align*}
where $\xi_{j,k}$ is the coefficient of $P_j(x)$ given in \eqref{xi}.
\end{proposition}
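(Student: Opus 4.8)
The plan is to reduce both identities to the coefficient formula already obtained in \eqref{rc}, and then to expand the remaining bilinear forms against the monomial basis $\{x^k\mathbb{I}_p\}$, using the bimodule property \eqref{bimodule}, the quasi-symmetry \eqref{qs}, and the two orthogonality relations to truncate the resulting sums. The computation is essentially bookkeeping, so most of the work lies in keeping transposes and left/right actions straight and in recognizing the coefficient expressions \eqref{xi}, \eqref{eta} as Schur-complement (quasi-determinant) expansions.

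For the $\alpha_{n,j}$ formula I would start from \eqref{rc}, which gives $\alpha_{n,j}=\langle P_n(x),xQ_j(x)\rangle_\theta H_j^{-1}$. Using the expansion \eqref{q-poly} I would write $Q_j(x)=\sum_{k=0}^{j}\eta_{j,k}^\top(x^k\mathbb{I}_p)$ with $\eta_{j,j}=\mathbb{I}_p$, so that $xQ_j(x)=\sum_{k=0}^{j}\eta_{j,k}^\top(x^{k+1}\mathbb{I}_p)$. Applying the right-module part of \eqref{bimodule} then produces $\langle P_n(x),xQ_j(x)\rangle_\theta=\sum_{k=0}^{j}\langle P_n(x),x^{k+1}\mathbb{I}_p\rangle_\theta\,\eta_{j,k}=\sum_{k=0}^{j}Z_{n,k}\eta_{j,k}$, where the transposes cancel and $Z_{n,k}$ is as in \eqref{znj}. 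Next, by the orthogonality \eqref{or-2} each $Z_{n,k}=\langle P_n(x),x^{k+1}\mathbb{I}_p\rangle_\theta$ vanishes whenever $k+1\le n-1$, i.e.\ for $k<n-1$; this truncates the sum to $k=n-1,\dots,j$, and separating the top term $Z_{n,j}$ (with $\eta_{j,j}=\mathbb{I}_p$) gives the stated formula. Finally I would verify the quasi-determinant expression \eqref{znj} by evaluating $\langle P_n(x),x^{j+1}\mathbb{I}_p\rangle_\theta$ through the quasi-determinant \eqref{pn}: the left-linearity replaces the last column $(x^i\mathbb{I}_p)_i$ by $(\langle x^i\mathbb{I}_p,x^{j+1}\mathbb{I}_p\rangle_\theta)_i=(m_{i+(j+1)\theta})_i$, and the coefficient formula \eqref{xi} is precisely the Schur-complement expansion of this quasi-determinant along its last column.

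For $\beta_{n,j}$ I would pair the recurrence \eqref{re-2} against $P_i$ in the first slot. The right-module identity together with \eqref{or} gives $\langle P_i(x),xQ_n(x)\rangle_\theta=H_i\beta_{n,i}^\top$, hence after relabeling $\beta_{n,j}^\top=H_j^{-1}\langle P_j(x),xQ_n(x)\rangle_\theta$. Invoking quasi-symmetry \eqref{qs} I rewrite $\langle P_j(x),xQ_n(x)\rangle_\theta=\langle x^\theta P_j(x),Q_n(x)\rangle_\theta$, expand $x^\theta P_j(x)=x^{j+\theta}\mathbb{I}_p+\sum_{k=0}^{j-1}\xi_{j,k}x^{k+\theta}\mathbb{I}_p$ via \eqref{p-poly}, and use left-linearity to obtain $Y_{n,j+\theta-1}+\sum_{k=0}^{j-1}\xi_{j,k}Y_{n,k+\theta-1}$. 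Since $Q_n$ is orthogonal to every polynomial of degree below $n$, the quantity $Y_{n,k+\theta-1}=\langle x^{k+\theta}\mathbb{I}_p,Q_n(x)\rangle_\theta$ vanishes when $k+\theta\le n-1$, i.e.\ for $k<n-\theta$, which truncates the sum to the stated range; the quasi-determinant form of $Y$ is confirmed exactly as for $Z$, now expanding along the last row and matching \eqref{eta}.

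The one point requiring genuine care, rather than a true obstacle, is the side bookkeeping: one must check that the right-module law in \eqref{bimodule} returns $\eta_{j,k}$ rather than $\eta_{j,k}^\top$, and that the left-linearity in the first slot places $\xi_{j,k}$ on the correct side of $Y_{n,k+\theta-1}$. Once the two orthogonality truncations are applied and the coefficient formulas \eqref{xi} and \eqref{eta} are identified with the quasi-determinant expansions of \eqref{znj} and of $Y$, the argument closes.
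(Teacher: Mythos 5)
Your proposal is correct and follows essentially the same route as the paper: substitute the expansion \eqref{q-poly} into \eqref{rc}, recognize the resulting bilinear forms as the quasi-determinants $Z_{n,k}$, and truncate the sum by the vanishing of $Z_{n,k}$ for $k<n-1$. The only cosmetic differences are that you justify the truncation via the orthogonality \eqref{or-2} where the paper invokes the identical-columns property of the quasi-determinant (Proposition \ref{p-equi}) --- the two arguments are equivalent --- and that you write out the $\beta_{n,j}^\top$ half in full (correctly, including the transpose bookkeeping), which the paper dismisses as ``similarly verified.''
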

\begin{proof}
Here we only prove the quasi-determinant expression for $\alpha_{n,j}$,  that for $\beta_{n,j}^\top$ can be similarly verified. By substituting the expansion of $Q_j(x)$ in \eqref{q-poly} into the formula \eqref{rc}, we obtain  
\begin{align*}
\alpha_{n,j}=\left(\langle P_n(x),x^{j+1}\mathbb{I}_p\rangle_\theta+\sum_{k=0}^{j-1} \langle P_n(x),x^{k+1}\mathbb{I}_p\rangle_\theta\cdot \eta_{j,k}\right)H_j^{-1}.
\end{align*}
To simplify above equation, the notation $Z_{n,j}$ is thus introduced as the bilinear form $\langle P_n(x), x^{j+1}\mathbb{I}_p\rangle_\theta$, which has the quasi-determinant expression as in \eqref{znj}. Moreover, when $j=0,\cdots,n-2$, there are two identical columns in $Z_{n,j}$. According to Proposition \ref{p-equi}, we know that $Z_{n,j}=0$ for $j=0,\cdots,n-2$, indicating that recurrence coefficients are truncated.
\end{proof}
\begin{remark}
It should be remarked that $\{\alpha_{n,j}\}_{j=n-1}^{n+\theta-1}$ admit another expression. Since we can expand $P_n(x)$ in terms of \eqref{p-poly}, then by comparing coefficients in \eqref{re-1} on both sides, we have
\begin{align*}
&\alpha_{n,n+\theta-1}=\xi_{n,n-1}-\xi_{n+\theta,n+\theta-1},\\
&\alpha_{n,n+\theta-2}=\xi_{n,n-2}-\xi_{n+\theta,n+\theta-2}-\alpha_{n,n+\theta-1}\xi_{n+\theta-1,n+\theta-2},\\
&\cdots\cdots
\end{align*}
This formula is useful in the verification of solutions of the Blaszak-Marcininak three-field equations.
\end{remark}

% subsection 2.2
\subsection{Time evolutions and non-commutative lattices}
In this part, we are going to discuss how to introduce time flows into the matrix-valued bi-orthogonal polynomials. 
We assume that there is a family of time variables $\mt=(t_1,t_2,\cdots)$ added into the weight function such that
\begin{align}\label{td}
W(x;\mt)=\exp\left(\sum_{i=1}^\infty t_{i}x^{i}\right)W(x).
\end{align}
Such deformation plays the role of a ladder operator in orthogonal polynomials theory since the action of $\p_{t_i}$ is equivalent to the action of $x^i$, i.e. $\p_{t_i}W(x;\mt)=x^iW(x;\mt)$. Under such evolution assumptions, we have 
\begin{align}\label{momentevo}
\p_{t_i}m_k=m_{k+i}
\end{align}
for the moments $\{m_k\}_{k\in\mathbb{N}}$ defined in \eqref{moments}.

Therefore, we have a time-deformed bilinear form 
\begin{align*}
{\langle f(x),g(x)\rangle}_\theta=\int_{\mathbb{R}}f(x)W(x;\mt){g^\top(x^\theta)} dx,
\end{align*}
and time-dependent matrix-valued bi-orthogonal polynomials can be defined by 
\begin{align}\label{t-or}
\langle P_n(x;\mt),Q_m(x;\mt)\rangle_\theta=H_n(\mt)\delta_{n,m}.
\end{align} 
The derivative formulas for the time-dependent matrix-valued bi-orthogonal polynomials are then investigated.
\begin{proposition}\label{proptp}
$\{P_n(x;\mt)\}_{n\in\mathbb{N}}$ satisfying \eqref{t-or}  have the evolution equation
\begin{align}\label{t-p}
\p_{t_\theta}P_n(x;\mt)=-\alpha_{n,n-1}P_{n-1}(x;\mt),
\end{align}
where $\alpha_{n,n-1}$ is the recurrence coefficient in \eqref{re-1}; and  $\{Q_n(x;\mt)\}_{n\in\mathbb{N}}$  satisfy the evolution equation
\begin{align}\label{t-q}
\p_{t_\theta}Q_n(x;\mt)=-\!\!\sum_{j=n-\theta}^{n-1}\beta_{n,j} Q_{j}(x;\mt),
\end{align}
where $\{\beta_{n,j}\}_{j=n-\theta}^{n-1}$ are recurrence coefficients in \eqref{re-2}.
\end{proposition}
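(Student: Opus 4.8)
The plan is to differentiate the orthogonality relation \eqref{t-or} with respect to $t_\theta$ and to read off the expansion coefficients of $\p_{t_\theta}P_n$ and $\p_{t_\theta}Q_n$ in the two bi-orthogonal bases. The starting observation is a degree count: since $P_n$ is monic with constant leading coefficient $\mathbb{I}_p$ in \eqref{p-poly}, and likewise for $Q_n^\top$ in \eqref{q-poly}, differentiation kills the top term, so both $\p_{t_\theta}P_n$ and $\p_{t_\theta}Q_n$ are matrix polynomials of degree at most $n-1$. Because $\{P_k\}$ and $\{Q_k\}$ are bases of $\mathbb{R}^{p\times p}[x]$, I may write $\p_{t_\theta}P_n=\sum_{k=0}^{n-1}c_{n,k}P_k$ and $\p_{t_\theta}Q_n=\sum_{k=0}^{n-1}\rho_{n,k}Q_k$ with $c_{n,k},\rho_{n,k}\in\mathbb{R}^{p\times p}$, and the whole problem reduces to computing these coefficients.

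The second ingredient is the product rule for the time-dependent pairing. From \eqref{td} one has $\p_{t_\theta}W(x;\mt)=x^\theta W(x;\mt)$, so differentiating any pairing gives
\begin{align*}
\p_{t_\theta}\langle f,g\rangle_\theta=\langle \p_{t_\theta}f,g\rangle_\theta+\langle x^\theta f,g\rangle_\theta+\langle f,\p_{t_\theta}g\rangle_\theta,
\end{align*}
where the middle term records the differentiation of the weight and uses that the scalar factor $x^\theta$ commutes into the first slot. For $P_n$, pairing the expansion against $Q_k$ and invoking \eqref{or} together with the left-linearity in \eqref{bimodule} yields $c_{n,k}=\langle \p_{t_\theta}P_n,Q_k\rangle_\theta H_k^{-1}$. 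Applying the product rule to $\langle P_n,Q_k\rangle_\theta=0$ for $k<n$, the term $\langle P_n,\p_{t_\theta}Q_k\rangle_\theta$ vanishes because $\deg\p_{t_\theta}Q_k\le k-1<n$ while $P_n$ is orthogonal to lower degrees by \eqref{or-2}; this leaves $\langle \p_{t_\theta}P_n,Q_k\rangle_\theta=-\langle x^\theta P_n,Q_k\rangle_\theta$. Substituting the recurrence \eqref{re-1} and using orthogonality then selects exactly $k=n-1$, giving $c_{n,n-1}=-\alpha_{n,n-1}$ and $c_{n,k}=0$ for $k<n-1$, which is \eqref{t-p}.

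For $Q_n$ I would run the mirror argument, now pairing $\p_{t_\theta}Q_n$ against $P_j$. Differentiating $\langle P_j,Q_n\rangle_\theta=0$ for $j<n$, the term $\langle \p_{t_\theta}P_j,Q_n\rangle_\theta$ again drops out by the degree bound, and the quasi-symmetry \eqref{qs} rewrites $\langle x^\theta P_j,Q_n\rangle_\theta=\langle P_j,xQ_n\rangle_\theta$. Feeding in the recurrence \eqref{re-2} and the orthogonality \eqref{or} isolates the coefficients $\beta_{n,j}$ for $n-\theta\le j\le n-1$, with everything below $j=n-\theta$ truncated, and this reproduces \eqref{t-q}.

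The main obstacle is the careful book-keeping of the bimodule structure \eqref{bimodule}: scaling the second argument on the left produces a transpose on the right, so for the $Q$-evolution the extracted quantities appear naturally as $\beta_{n,j}^\top$ bracketed with the normalization factors $H_j$, and one must track these transposes so as to recover $\beta_{n,j}$ without a spurious transposition. The only other point requiring attention is verifying that the degree bounds truncate the sums correctly at both the upper and lower limits appearing in \eqref{t-q}; both are routine once the pairing identities above are in place.
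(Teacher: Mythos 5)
Your proposal is correct and follows essentially the same route as the paper: differentiate the time-dependent orthogonality relation \eqref{t-or}, discard the cross term $\langle P_n,\p_{t_\theta}Q_m\rangle_\theta$ (resp.\ $\langle \p_{t_\theta}P_n,Q_m\rangle_\theta$) by the degree count coming from monicity, and identify the surviving coefficients via the quasi-symmetry \eqref{qs} and the recurrences \eqref{re-1}--\eqref{re-2}. The only cosmetic differences are that you substitute the recurrence into $\langle x^\theta P_n,Q_k\rangle_\theta$ directly where the paper first rewrites it as $\langle P_n,xQ_m\rangle_\theta$, and that you make explicit the transpose book-keeping from \eqref{bimodule} which the paper leaves implicit.
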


\begin{proof}
To obtain the derivative formulas, let us consider the derivative of the orthogonal relation \eqref{t-or}, from which we have
\begin{align}\label{t-innerp}
\begin{split}
\p_{t_\theta}H_n(\mt)\delta_{n,m}=&\langle \p_{t_\theta}P_n(x;\mt),Q_m(x;\mt)\rangle_\theta
+\langle P_n(x;\mt),\p_{t_\theta}Q_m(x;\mt)\rangle_\theta\\
&+\langle x^\theta P_n(x;\mt),Q_m(x;\mt)\rangle_\theta.
\end{split}
\end{align}
By assuming that $m<n$, the above equation implies
\begin{align}\label{de1}
\langle \p_{t_\theta}P_n(x;\mt),Q_m(x;\mt)\rangle_\theta=-\langle  P_n(x;\mt),xQ_m(x;\mt)\rangle_\theta.
\end{align}
Since $\p_{t_\theta}P_n(x;\mt)$ is a polynomial of degree $n-1$, it could be expanded as a left linear combination of the orthogonal basis $\{P_k(x;\mt)\}_{k=0}^{n-1}$ and
\begin{align*}
\p_{t_\theta}P_n(x;\mt)=\sum_{k=0}^{n-1}\gamma_{n,k}P_k(x;\mt).
\end{align*}
By taking $m=0, 1, \dots, n-1$ in \eqref{de1}, we obtain \eqref{t-p}. On the other hand, by assuming that $n<m$, then
\begin{align*}
\langle P_n(x;\mt), \p_{t_\theta}Q_m(x;\mt)\rangle_\theta=-\langle x^\theta P_n(x;\mt),Q_m(x;\mt)\rangle_\theta,
\end{align*}
and in a similar manner, we can obtain \eqref{t-q}.
\end{proof}

With matrix-form notation, equations \eqref{t-p} and \eqref{t-q} can  alternatively be  written as 
\begin{align*}
\p_{t_\theta}\Phi=-L_{<0}\Phi,\quad \p_{t_\theta}\Psi=-M_{<0}\Psi,
\end{align*}
where $L_{<0}$ means the strictly lower triangular part of the block matrix.
Thus the compatibility conditions $x^\theta\p_{t_\theta}\Phi=\p_{t_\theta}(x^\theta\Phi)$ and $x\p_{t_\theta}\Psi=\p_{t_\theta}(x\Psi)$ result in integrable lattices
\begin{align*}
\p_{t_\theta}L=[L,L_{<0}],\quad \p_{t_\theta}M=[M,M_{<0}].
\end{align*}
If these equations are written into explicit elements, then we have
\begin{align}\label{nc-bm}
\left\{\begin{array}{ll}
\p_{t_\theta}\alpha_{n,n+\theta-1}=\alpha_{n+\theta,n+\theta-1}-\alpha_{n,n-1},\\
\p_{t_\theta}\alpha_{n,i}=\alpha_{n, i+1}\alpha_{i+1,i}-\alpha_{n,n-1}\alpha_{n-1,i},& i=n-1,  \dots, n+\theta-2,
\end{array}
\right.
\end{align}
and 
\begin{align}\label{nc-kl}
\left\{\begin{array}{ll}
\p_{t_\theta}\beta_{n,n}=\beta_{n+1,n}-\beta_{n,n-1},\\
\p_{t_\theta}\beta_{n,j}=\beta_{n+1, j}-\beta_{n,j-1}+\beta_{n,n}\beta_{n,j}-\beta_{n,j}\beta_{j,j}, & j=n-\theta, \dots, n-1,
\end{array}
\right.
\end{align}
where $\beta_{n,n-\theta-1}$ is defined to be zero.
Equations \eqref{nc-bm} and \eqref{nc-kl} are non-commutative generalizations of the Toda-type equations. 
In this paper, we call them non-commutative hungry Toda lattices, as the corresponding commutative cases are hungry Toda lattices, which are special situations of Kostant-Toda hierarchy \cite{shinjo20}.
Moreover, in the commutative case, these equations were studied from different perspectives, such as the discrete Lax formalism, the $r$-matrix approach and Hamiltonian structures \cite{blaszak94,kuperschmidt85}. 
In particular, we can also refer to equation \eqref{nc-bm} as the non-commutative Blaszak-Marciniak equation since Blaszak and Marciniak studied the commutative case \cite{blaszak94} via the r-matrix approach, and to equation \eqref{nc-kl} as the non-commutative Kuperschmidt lattice since the Hamiltonian structure of the commutative equation was considered by Kuperschmidt \cite{kuperschmidt85}. \\

\begin{definition}\label{wq}
Given that $\{m_i\}_{i=1}^N$ is dependent on a time parameter $t$, if the $k$-th derivative of $m_i$ is denoted by $m_i^{(k)}$,  then the corresponding Wronski quasi-determinant is defined by
\begin{align*}
\left|\begin{array}{cccc}
m_1^{(0)}&m_1^{(1)}&\cdots&m_{1}^{(N-1)}\\
m_2^{(0)}&m_2^{(1)}&\cdots&m_{2}^{(N-1)}\\
\vdots&\vdots&&\vdots\\
m_N^{(0)}&m_N^{(1)}&\cdots&\boxed{m_N^{(N-1)}}
\end{array}
\right|.
\end{align*}  
\end{definition}
\begin{remark}
With time parameters involved, the normalization factors $\{H_n\}_{n\in\mathbb{N}}$ in \eqref{nf} are  Wronski quasi-determinants. Moreover, these Wronski quasi-determinants play the role of non-commutative $\tau$-functions for non-commutative integrable systems \eqref{nc-bm} and \eqref{nc-kl}.
\end{remark}

% subsection 2.3
\subsection{A quasi-determinant solution to non-commutative Blaszak-Marciniak three-field equations}\label{sec2.3}
In this part, we intend to make use of quasi-Wronski techniques to show that the  Blaszak-Marciniak equation admits quasi-determinant solutions.  
The first non-trivial example in the Blaszak-Marciniak equation is when $\theta=1$, which gives rise to the non-commutative Toda equation
\begin{align*}
\p_{t_1} a_n=b_{n+1}-b_n,\quad \p_{t_1}b_n=a_nb_n-b_n a_{n-1}
\end{align*}
with $a_n=\left(\p_{t_1}H_n\right) H_n^{-1}$ and $b_n=H_n H_{n-1}^{-1}$, where $H_n$ is the corresponding Hankel quasi-determinant. Verifications of the non-commutative Toda lattice have been exhibited in \cite{gelfand05,li08,retakh10,li20} and so we omit the details here. 

Therefore, we pay attention to the second non-trivial example (i.e. $\theta=2$ case), the non-commutative Blaszak-Marciniak three-field equations. The commutative Blaszak-Marciniak three-field equations attracted much attention. For example, its B\"acklund transformation and superposition formula was given in \cite{hu98}, its algebro-geometric solution was given in \cite{geng17}, and its connection with moving frame was given in \cite{wang21}. 
For the non-commutative case, the Hamiltonian structure and recursion operator were recently given in \cite{casati21}. 
To demonstrate the quasi-determinant solution of the non-commutative Blaszak-Marciniak three-field equations, we have the following theorem.
\begin{theorem}
The non-commutative Blaszak-Marciniak three-field equations 
\begin{subnumcases} 
{\label{nc-bm-2}}
\p_{t_2}\alpha_{n,n+1}=\alpha_{n+2,n+1}-\alpha_{n,n-1},\label{nc-bm-2-1}\\
\p_{t_2}\alpha_{n,n}=\alpha_{n, n+1}\alpha_{n+1,n}-\alpha_{n,n-1}\alpha_{n-1,n},\label{nc-bm-2-2}\\
\p_{t_2}\alpha_{n,n-1}=\alpha_{n, n}\alpha_{n,n-1}-\alpha_{n,n-1}\alpha_{n-1,n-1},\label{nc-bm-2-3}
\end{subnumcases}
admit following solutions
\begin{align*}
&\alpha_{n,n-1}=H_nH_{n-1}^{-1},\quad \alpha_{n,n}=(Z_{n,n}+H_n\eta_{n,n-1})H_n^{-1},\\
&\alpha_{n,n+1}=\xi_{n,n-1}-\xi_{n+2,n+1}=\left(
Z_{n,n+1}+Z_{n,n}\eta_{n+1,n}+H_n\eta_{n+1,n-1}
\right)H_{n+1}^{-1}
\end{align*}
in which
\begin{align*}
\xi_{n+1,n}=\!\!\left|\begin{array}{cccc}
\!\!m_{0}&\cdots&\!\!\!\!m_{2n}&\!\!\!\!0\\
\vdots&&\vdots&\vdots\\
\!\!m_{n}&\cdots&\!\!\!\!m_{3n}&\!\!\!\!\mathbb{I}_p\\
\!\!m_{n+1}&\!\!\!\!\cdots&m_{3n+1}&\!\!\!\!\boxed{0}\end{array}\right|,
\quad
\eta_{n+1,j}&=\!\!\left|\begin{array}{cccc}
\!\!m_{0}&\!\!\!\!\cdots&\!\!\!\!m_{2n}&\!\!\!\!m_{2n+2}\\
\vdots&&\!\!\!\!\vdots&\vdots\\
\!\!m_{n}&\!\!\!\!\cdots&\!\!\!\!m_{3n}&\!\!\!\!m_{3n+2}\\
\!\!&\!\!\!\!e_{j+1}&&\!\!\!\!\boxed{0}\end{array}\right|,
\end{align*}
and 
\begin{align*}
H_{n}=\!\!\left|\begin{array}{cccc}
\!\!m_{0}&\cdots&\!\!\!\!m_{2n-2}&\!\!\!\!m_{2n}\\
\vdots&&\vdots&\vdots\\
\!\!m_{n-1}&\cdots&\!\!\!\!m_{3n-3}&\!\!\!\!m_{3n-1}\\
\!\!m_{n}&\!\!\!\!\cdots&m_{3n-2}&\!\!\!\!\boxed{m_{3n}}\end{array}\right|,\quad 
Z_{n,j}=\!\!\left|\begin{array}{cccc}
\!\!m_{0}&\cdots&\!\!\!\!m_{2n-2}&\!\!\!\!m_{2j+2}\\
\vdots&&\vdots&\vdots\\
\!\!m_{n-1}&\cdots&\!\!\!\!m_{3n-3}&\!\!\!\!m_{n+1+2j}\\
\!\!m_{n}&\!\!\!\!\cdots&m_{3n-2}&\!\!\!\!\boxed{m_{n+2+2j}}\end{array}\right|,
\end{align*}
under time evolution $\p_{t_2}m_i=m_{i+2}$.
\end{theorem}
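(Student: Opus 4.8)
The plan is to verify the three equations by combining two ingredients: derivative formulas for the Wronski quasi-determinants induced by the flow $\p_{t_2}m_i=m_{i+2}$, and the coefficient-comparison expressions for the $\alpha_{n,j}$ obtained by matching powers of $x$ in \eqref{re-1}. First I would note that under $\p_{t_2}m_i=m_{i+2}$ every block quasi-determinant in the statement is a Wronski quasi-determinant in the sense of Definition \ref{wq}: differentiating a column of $H_n$, $Z_{n,j}$, $\xi_{n,j}$ or $\eta_{n,j}$ advances each moment index by $2$, i.e. it sends a column of offset $2k$ to the column of offset $2(k+1)$. This is exactly the structure that drives the scalar Wronskian technique, and I would isolate the two derivative formulas I need, namely $\p_{t_2}H_n=\alpha_{n,n}H_n$ and $\p_{t_2}\xi_{n,k}=-\alpha_{n,n-1}\xi_{n-1,k}$ (with the convention $\xi_{n-1,n-1}=\mathbb{I}_p$). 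The second is immediate from the evolution equation \eqref{t-p} of Proposition \ref{proptp}: comparing the coefficient of $x^k$ on both sides of $\p_{t_2}P_n=-\alpha_{n,n-1}P_{n-1}$ gives it at once. The first I would obtain by differentiating $H_n=\langle P_n,Q_n\rangle_\theta$ and using \eqref{t-p}, \eqref{t-q}, the recurrence \eqref{re-1} and the orthogonality \eqref{or}: all cross terms vanish and only $\langle x^2P_n,Q_n\rangle_\theta=\alpha_{n,n}H_n$ survives; equivalently, it is the pure quasi-determinant identity $\p_{t_2}H_n=Z_{n,n}+H_n\eta_{n,n-1}$.

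With these in hand, equation \eqref{nc-bm-2-3} is the quickest. Writing $\alpha_{n,n-1}=H_nH_{n-1}^{-1}$ and using the noncommutative product rule $\p_{t_2}(AB^{-1})=(\p_{t_2}A)B^{-1}-AB^{-1}(\p_{t_2}B)B^{-1}$ together with $\p_{t_2}H_n=\alpha_{n,n}H_n$ produces $\p_{t_2}\alpha_{n,n-1}=\alpha_{n,n}\alpha_{n,n-1}-\alpha_{n,n-1}\alpha_{n-1,n-1}$ directly. Equation \eqref{nc-bm-2-1} follows from the comparison expression $\alpha_{n,n+1}=\xi_{n,n-1}-\xi_{n+2,n+1}$: differentiating and inserting $\p_{t_2}\xi_{n,n-1}=-\alpha_{n,n-1}$ and $\p_{t_2}\xi_{n+2,n+1}=-\alpha_{n+2,n+1}$ gives $\alpha_{n+2,n+1}-\alpha_{n,n-1}$ immediately.

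Equation \eqref{nc-bm-2-2} is the one I would treat most carefully. Here I would start from the second comparison formula $\alpha_{n,n}=\xi_{n,n-2}-\xi_{n+2,n}-\alpha_{n,n+1}\xi_{n+1,n}$, differentiate it, and substitute $\p_{t_2}\xi_{n,n-2}=-\alpha_{n,n-1}\xi_{n-1,n-2}$, $\p_{t_2}\xi_{n+2,n}=-\alpha_{n+2,n+1}\xi_{n+1,n}$, $\p_{t_2}\xi_{n+1,n}=-\alpha_{n+1,n}$, together with the already-proved \eqref{nc-bm-2-1} for $\p_{t_2}\alpha_{n,n+1}$. After cancellation the two $\alpha_{n+2,n+1}\xi_{n+1,n}$ contributions drop out, and using the shifted comparison identity $\alpha_{n-1,n}=\xi_{n-1,n-2}-\xi_{n+1,n}$ the surviving linear terms collapse to $-\alpha_{n,n-1}\alpha_{n-1,n}$, leaving exactly $\alpha_{n,n+1}\alpha_{n+1,n}-\alpha_{n,n-1}\alpha_{n-1,n}$. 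The delicate point throughout is that every product must be kept in its correct left/right order, since the $\alpha$'s and $\xi$'s are genuine $p\times p$ matrices.

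I expect the main obstacle to be the clean establishment of the quasi-determinant derivative formula $\p_{t_2}H_n=Z_{n,n}+H_n\eta_{n,n-1}$ purely at the level of quasi-determinants, because the noncommutative correction term $H_n\eta_{n,n-1}$ (which is absent in the scalar Wronskian identity $\p_tH_n=Z_{n,n}$) must be produced by a noncommutative Jacobi/Sylvester identity of the kind collected in the appendix; controlling this extra term, and more generally keeping the multiplication order straight in the quadratic identity \eqref{nc-bm-2-2}, is where the noncommutativity genuinely enters. If one is instead willing to read the required derivative formulas off the orthogonality via Proposition \ref{proptp} rather than from direct quasi-determinant manipulation, the entire verification reduces to the short algebra sketched above.
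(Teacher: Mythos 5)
Your proposal is correct --- I checked the coefficient-comparison algebra, including the cancellation of the two $\alpha_{n+2,n+1}\xi_{n+1,n}$ terms in \eqref{nc-bm-2-2} and all left/right orderings, and it closes --- but it takes a genuinely different route from the paper. The paper's Section~\ref{sec2.3} is deliberately a \emph{direct} quasi-determinant verification that never invokes orthogonality: it proves the homological relation $Z_{n,n}=-H_n\eta_{n+1,n}$ in \eqref{eta-z} and the three derivative lemmas $\p_{t_2}\xi_{n,n-1}=-H_nH_{n-1}^{-1}$ \eqref{pt-xi}, $\p_{t_2}H_n=Z_{n,n}+H_n\eta_{n,n-1}$ \eqref{pt-H}, and $\p_{t_2}\eta_{n+1,n}=\eta_{n+1,n}\eta_{n+1,n}-H_n^{-1}Z_{n,n+1}-\eta_{n+1,n-1}$ \eqref{ddd}, using only the Wronskian-type derivative formulas \eqref{dqd-2}--\eqref{dqd-3} and the noncommutative Jacobi identity \eqref{ncj1}, and then assembles \eqref{nc-bm-2-1}--\eqref{nc-bm-2-3} from these; in particular \eqref{nc-bm-2-2} rests on the hardest lemma \eqref{ddd}. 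You instead extract $\p_{t_2}H_n=\alpha_{n,n}H_n$ and $\p_{t_2}\xi_{n,k}=-\alpha_{n,n-1}\xi_{n-1,k}$ from Proposition~\ref{proptp} and the orthogonality, and dispose of \eqref{nc-bm-2-2} by pure coefficient algebra via $\alpha_{n,n}=\xi_{n,n-2}-\xi_{n+2,n}-\alpha_{n,n+1}\xi_{n+1,n}$ and the shifted identity $\alpha_{n-1,n}=\xi_{n-1,n-2}-\xi_{n+1,n}$ --- a route the paper itself half-anticipates in the remark calling these comparison formulas ``useful in the verification,'' though the paper only exploits it for \eqref{nc-bm-2-1}. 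Your approach buys brevity (it replaces the delicate $\eta_{n+1,n}$ lemma with two lines) and a structural explanation of the cancellations as compatibility of \eqref{re-1} with \eqref{t-p}; it is logically sufficient because the formal bilinear form $\langle x^i\mathbb{I}_p,x^j\mathbb{I}_p\rangle_2=m_{i+2j}$, quasi-symmetry, and the bimodule rules hold for any abstract moment sequence satisfying the moment condition. What the paper's route buys is exactly what your proof omits: a certification of the solutions purely at the level of Wronski quasi-determinants, which is the ``noncommutative direct method'' the article advertises and the technique reused for the shifted solutions $H_n^{(\ell)}$ of Proposition~\ref{propbt} and the INB verification in Section~\ref{sec4.2}. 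One small simplification to your write-up: the obstacle you flag --- establishing \eqref{pt-H} by quasi-determinant calculus --- is not actually needed on your route, since your orthogonality computation already gives $\p_{t_2}H_n=\alpha_{n,n}H_n$, and matching this with the stated expression for $\alpha_{n,n}$ is just the already-proved formula $\alpha_{n,j}=\bigl(Z_{n,j}+\sum_{k=n-1}^{j-1}Z_{n,k}\eta_{j,k}\bigr)H_j^{-1}$ at $j=n$ together with $Z_{n,n-1}=H_n$.
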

\begin{remark}
Below, we show that $\xi_{n+1,n}$, $\eta_{n+1,j}$ and $Z_{n,j}$ are quantities related to the Wronski quasi-determinant $H_n$ and its derivatives. Therefore, we say that the non-commutative Blaszak-Marciniak three field equation could be simply expressed in terms of the Wronski quasi-determinant.
\end{remark}

To prove this theorem, firstly, we notice the following homological relation.
\begin{lemma}
It holds that
\begin{align}\label{eta-z}
Z_{n,n}=-H_n\eta_{n+1,n}
\end{align}
\end{lemma}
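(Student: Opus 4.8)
The plan is to notice that $H_n$, $Z_{n,n}$ and $\eta_{n+1,n}$ are all built from a single moment matrix together with one bordering column, and then to read the identity off from the block solution of a linear system. Set $A=(m_{i+k\theta})_{i,k=0}^{n}$ and let $c=(m_{i+(n+1)\theta})_{i=0}^{n}$ be the column vector whose $i$-th block entry is $m_{i+(n+1)\theta}$. Denote by $\hat A=(m_{i+k\theta})_{i,k=0}^{n-1}$ the leading $n\times n$ block of $A$, by $r=(m_{n+k\theta})_{k=0}^{n-1}$ its last block row truncated to the first $n$ columns, and by $\hat a=(m_{i+n\theta})_{i=0}^{n-1}$, $\hat c=(m_{i+(n+1)\theta})_{i=0}^{n-1}$ the truncations to their first $n$ entries of, respectively, the last column of $A$ and the column $c$; finally write $a_{nn}=m_{n+n\theta}$ and $c_n=m_{n+(n+1)\theta}$ for the two bottom entries.

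With this bookkeeping I would first apply the boxed-entry (Schur-complement) expansion of a quasi-determinant recorded in the appendix. Comparing with \eqref{nf} and \eqref{znj} identifies $H_n$ as the quasi-determinant of $A$ with the $(n,n)$ entry boxed, so that $H_n=a_{nn}-r\hat A^{-1}\hat a$, and identifies $Z_{n,n}$ as the quasi-determinant of the matrix obtained from $A$ by overwriting its last column with $c$, so that $Z_{n,n}=c_n-r\hat A^{-1}\hat c$. Crucially, the two Schur complements share the very same $\hat A$ and $r$ and differ only through the last column. Independently, formula \eqref{eta} (with $n$ replaced by $n+1$ and $j=n$) gives $\eta_{n+1,n}=-e_{n+1}A^{-1}c$, i.e. the negative of the last block component of the solution $y=A^{-1}c$ of the linear system $Ay=c$.

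The identity then drops out of one step of block elimination. Splitting $y=A^{-1}c$ into a top part $\hat y$ and a bottom block $y_n$ and eliminating $\hat y=\hat A^{-1}(\hat c-\hat a\,y_n)$ from the two block rows of $Ay=c$ produces $(a_{nn}-r\hat A^{-1}\hat a)\,y_n=c_n-r\hat A^{-1}\hat c$, that is $H_n\,y_n=Z_{n,n}$. Since $\eta_{n+1,n}=-y_n=-H_n^{-1}Z_{n,n}$, multiplying on the left by $H_n$ yields $H_n\eta_{n+1,n}=-Z_{n,n}$, which is precisely \eqref{eta-z}.

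No commutativity is used beyond keeping all products in a fixed left-to-right order, so the argument is valid for every $\theta\in\mathbb{Z}_+$ and not only for $\theta=2$. The only genuinely delicate points are bookkeeping ones: applying the quasi-determinant expansion on the correct side (each inverse $\hat A^{-1}$ is sandwiched with the row $r$ acting on the left and a truncated column on the right) and guaranteeing that $\hat A$ is invertible, which is exactly the content of the moment condition in Definition \ref{def1.1}. I expect the main obstacle to be notational rather than conceptual: matching the moment indices so that the two Schur complements are recognized as $H_n$ and $Z_{n,n}$, after which the computation is immediate.
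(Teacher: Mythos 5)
Your proof is correct, and it reaches \eqref{eta-z} by a genuinely different route from the paper. The paper's proof stays entirely inside the quasi-determinant calculus: it represents $\eta_{n+1,n}$ as a bordered $(n+2)\times(n+2)$ quasi-determinant whose last row is $(0,\cdots,0,\mathbb{I}_p,\boxed{0})$, and then applies the non-commutative Jacobi identity \eqref{ncj1}; with that bordering the leading term vanishes, the factor playing the role of $h$ is $\mathbb{I}_p$, and the surviving product is exactly $-H_n^{-1}Z_{n,n}$. You instead bypass the bordered representation altogether and argue at the level of the defining linear system: from \eqref{eta} with $n\mapsto n+1$ and $j=n$, the quantity $-\eta_{n+1,n}$ is the bottom block $y_n$ of $y=A^{-1}c$, and one step of block elimination — legitimate in the non-commutative setting because $y_n$ sits on the right in both occurrences, so $\bigl(a_{nn}-r\hat A^{-1}\hat a\bigr)y_n=c_n-r\hat A^{-1}\hat c$ factors correctly — identifies $H_ny_n=Z_{n,n}$ through the shared Schur data $(\hat A,r)$, after matching \eqref{nf} and \eqref{znj} with the two Schur complements. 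The two arguments are of course cousins, since \eqref{ncj1} is itself a Schur-complement statement, but yours is self-contained (needing only Definition \ref{qd-def} and the invertibility guaranteed by the moment condition of Definition \ref{def1.1}), whereas the paper's is a one-line invocation of \eqref{ncj1} once the right bordered matrix has been guessed. Your version also makes explicit that the identity holds for all $\theta\in\mathbb{Z}_+$ in the indices $m_{i+j\theta}$, while the paper states and verifies it only in the $\theta=2$ notation needed for the three-field system, even though the general case is what gets reused in Section \ref{sec4}. One cosmetic simplification: you never need to invert $H_n$ — from $H_ny_n=Z_{n,n}$ and $\eta_{n+1,n}=-y_n$ the identity $Z_{n,n}=-H_n\eta_{n+1,n}$ follows by substitution, so the appeal to $H_n^{-1}$ in your last step can be dropped.
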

\begin{proof}
By using the non-commutative Jacobi identity \eqref{ncj1}, we have
\begin{align*}
\eta_{n+1,n}&=\left|\begin{array}{ccccc}
m_0&\cdots&m_{2n-2}&m_{2n}&m_{2n+2}\\
\vdots&&\vdots&\vdots&\vdots\\
m_{n-1}&\cdots&m_{3n-3}&m_{3n-1}&m_{3n+1}\\
m_n&\cdots&m_{3n-2}&m_{3n}&m_{3n+2}\\
0&\cdots&0&\mathbb{I}_p&\boxed{0}
\end{array}
\right|\\
&=-\left|\begin{array}{cccc}
m_0&\cdots&m_{2n-2}&m_{2n}\\
\vdots&&\vdots&\vdots\\
m_{n-1}&\cdots&m_{3n-3}&m_{3n-1}\\
m_n&\cdots&m_{3n-2}&\boxed{m_{3n}}
\end{array}
\right|^{-1}
\left|\begin{array}{cccc}
m_0&\cdots&m_{2n-2}&m_{2n+2}\\
\vdots&&\vdots&\vdots\\
m_{n-1}&\cdots&m_{3n-3}&m_{3n+1}\\
m_n&\cdots&m_{3n}&\boxed{m_{3n+2}}
\end{array}
\right|\\
&=-H_n^{-1}Z_{n,n}.
\end{align*}
\end{proof}

Moreover, we have the following derivative formulas for $\xi_{n,n-1}$. 

\begin{lemma}
Regarding with the derivative of $\xi_{n,n-1}$, it holds that
\begin{align}\label{pt-xi}
\p_{t_2}\xi_{n,n-1}=-H_{n}H_{n-1}^{-1}.
\end{align}
\end{lemma}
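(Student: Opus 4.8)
The plan is to turn the boxed-zero quasi-determinant defining $\xi_{n,n-1}$ into a Schur-complement expression over $\mathbb{R}^{p\times p}$ and then differentiate under the flow $\p_{t_2}m_i=m_{i+2}$. Setting $\theta=2$ in \eqref{xi}, I would record that $\xi_{n,n-1}=-r_nG_n^{-1}\varepsilon$, where $G_n=(m_{i+2j})_{i,j=0}^{n-1}$ is the block moment matrix, $r_n=(m_n,m_{n+2},\dots,m_{3n-2})$ is the bordering row, and $\varepsilon=(0,\dots,0,\mathbb{I}_p)^\top$ is the constant last block unit vector that extracts the last block column of $G_n^{-1}$. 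This is exactly the Schur-complement reading of the quasi-determinant for $\xi_{n,n-1}$ displayed in the theorem.

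The mechanism behind the computation is that the flow acts as a column shift. Because $\p_{t_2}m_{i+2j}=m_{i+2(j+1)}$, the derivative of the $j$-th moment column of $G_n$ is its $(j{+}1)$-th column; writing $N$ for the nilpotent block shift (identity blocks on the first subdiagonal, zero elsewhere) and $c_n=(m_{2n},\dots,m_{3n-1})^\top$ for the extra column produced from the last one, one finds $\p_{t_2}G_n=G_nN+c_n\varepsilon^\top$ and, similarly, $\p_{t_2}r_n=r_nN+m_{3n}\varepsilon^\top$. I would then substitute these into
\begin{align*}
\p_{t_2}\xi_{n,n-1}=-(\p_{t_2}r_n)G_n^{-1}\varepsilon-r_n(\p_{t_2}G_n^{-1})\varepsilon,\qquad \p_{t_2}G_n^{-1}=-G_n^{-1}(\p_{t_2}G_n)G_n^{-1}.
\end{align*}
The two contributions proportional to $r_nNG_n^{-1}\varepsilon$ cancel, and what survives collapses to $\p_{t_2}\xi_{n,n-1}=\big(-m_{3n}+r_nG_n^{-1}c_n\big)\,(G_n^{-1})_{n-1,n-1}$, where $(G_n^{-1})_{n-1,n-1}=\varepsilon^\top G_n^{-1}\varepsilon$ is the bottom-right $p\times p$ block of $G_n^{-1}$.

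To finish I would identify the two surviving factors. The bracket $-m_{3n}+r_nG_n^{-1}c_n$ is minus the Schur complement of $G_n$ inside the bordered matrix whose quasi-determinant is $H_n$, hence equals $-H_n$. The second factor is controlled by the quasi-determinant inverse property already used for \eqref{eta-z}: the bottom-right block of $G_n^{-1}$ is the inverse of the quasi-determinant of $G_n$ with its $(n-1,n-1)$ entry boxed, and since the underlying matrix of $G_n$ coincides with that of $H_{n-1}$, this gives $(G_n^{-1})_{n-1,n-1}=H_{n-1}^{-1}$. Multiplying in the correct order yields $\p_{t_2}\xi_{n,n-1}=-H_nH_{n-1}^{-1}$.

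I expect the only real obstacle to be the non-commutative bookkeeping in the middle step: since every factor is a $p\times p$ block one must keep left/right order throughout and check that the $N$-terms cancel without invoking commutativity. As an independent cross-check I would verify the result through the evolution equation \eqref{t-p}, which reads $\p_{t_2}P_n=-\alpha_{n,n-1}P_{n-1}$; comparing the coefficient of $x^{n-1}$ on both sides gives $\p_{t_2}\xi_{n,n-1}=-\alpha_{n,n-1}$, and since $\alpha_{n,n-1}=Z_{n,n-1}H_{n-1}^{-1}=H_nH_{n-1}^{-1}$ (the recurrence-coefficient proposition together with $Z_{n,n-1}=H_n$), the same identity follows.
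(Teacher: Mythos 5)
Your proof is correct, but it takes a genuinely different route from the paper. The paper differentiates the quasi-determinant $\xi_{n,n-1}$ via the Wronskian-type derivative formula \eqref{dqd-2}, which produces a sum of $n$ quasi-determinants; it then kills all but one term by the repeated-column observation \eqref{eq0} (via Proposition \ref{p-equi}) and collapses the survivor with the non-commutative Jacobi identity \eqref{ncj1}. You instead bypass the quasi-determinant calculus entirely: you differentiate the Schur-complement representation $\xi_{n,n-1}=-r_nG_n^{-1}\varepsilon$ directly, encode the flow $\p_{t_2}m_i=m_{i+2}$ as the column-shift identities $\p_{t_2}G_n=G_nN+c_n\varepsilon^\top$ and $\p_{t_2}r_n=r_nN+m_{3n}\varepsilon^\top$, and let the $N$-terms cancel algebraically — the cancellation $r_nG_n^{-1}(G_nN)G_n^{-1}\varepsilon=r_nNG_n^{-1}\varepsilon$ indeed needs no commutativity. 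The two surviving factors are identified correctly: $m_{3n}-r_nG_n^{-1}c_n=H_n$ is just the definition of the quasi-determinant, and $\varepsilon^\top G_n^{-1}\varepsilon=H_{n-1}^{-1}$ is the standard corner-block inverse formula $(A^{-1})_{nn}=(|A|_{nn})^{-1}$ (note the underlying $n\times n$ matrix of $H_{n-1}$ coincides with $G_n$); strictly speaking the paper obtained the analogous fact in \eqref{eta-z} through the Jacobi identity rather than the inverse property, but the two are equivalent and both elementary. What your approach buys is self-containedness — it needs only block linear algebra and the Schur-complement definition, and the single global cancellation replaces the term-by-term vanishing argument — whereas the paper's version stays inside the quasi-determinant toolkit (derivative formulas, Jacobi and homological identities) that it reuses verbatim in the subsequent lemmas for $\p_{t_2}H_n$ and $\p_{t_2}\eta_{n+1,n}$, so its method scales to those computations where your shift-matrix bookkeeping would have to be redone. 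Your cross-check via \eqref{t-p} is also sound ($Z_{n,n-1}=H_n$ and the empty sum give $\alpha_{n,n-1}=H_nH_{n-1}^{-1}$), with the minor caveat that it presupposes moments arising from a weight satisfying the moment condition, while the direct computation holds formally for any moments with $\p_{t_2}m_i=m_{i+2}$, which is the generality the theorem actually needs.
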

\begin{proof}
According to the derivative formula for quasi-determinants, we obtain 
\begin{align*}
\p_{t_2}\xi_{n,n-1}&=\left|\begin{array}{cccc}
m_0&\cdots&m_{2n-4}&0\\
\vdots&&\vdots&\vdots\\
m_{n-2}&\cdots&m_{3n-6}&\mathbb{I}_p\\
m_{n+1}&\cdots&m_{3n-2}&\boxed{0}
\end{array}
\right|\\
&+\sum_{k=1}^{n-1} \left|\begin{array}{cccc}
m_0&\cdots&m_{2n-4}&m_{2k}\\
\vdots&&\vdots&\vdots\\
m_{n-2}&\cdots&m_{3n-6}&m_{2k+n-2}\\
m_{n-1}&\cdots&m_{3n-5}&\boxed{0}
\end{array}
\right|\cdot\left|\begin{array}{cccc}
m_0&\cdots&m_{2n-4}&0\\
\vdots&&\vdots&\vdots\\
m_{n-2}&\cdots&m_{3n-6}&\mathbb{I}_p\\
&e_k&&\boxed{0}
\end{array}
\right|,
\end{align*}
where $e_k$ is the $k$-th unit vector given in \eqref{ue}. Moreover, by noting that 
\begin{align}\label{eq0}
\left|\begin{array}{cccc}
m_0&\cdots&m_{2n-4}&m_{2k}\\
\vdots&&\vdots&\vdots\\
m_{n-2}&\cdots&m_{3n-6}&m_{2k+n-2}\\
m_{n-1}&\cdots&m_{3n-5}&\boxed{0}
\end{array}\right|=\left\{\begin{array}{ll}
-m_{2k+n-1},&k=1,\cdots,n-2\\
-m_{3n-3}+H_{n-1},&k=n-1
\end{array}
\right.
\end{align}
we get
\begin{align*}
\p_{t_2}\xi_{n,n-1}=H_{n}\left|\begin{array}{cccc}
m_0&\cdots&m_{2n-2}&0\\
\vdots&&\vdots&\vdots\\
m_{n-1}&\cdots&m_{3n-3}&\mathbb{I}_p\\
0&\cdots&\mathbb{I}_p&\boxed{0}
\end{array}
\right|=-H_nH_{n-1}^{-1},
\end{align*}
where a non-commutative Jacobi identity \eqref{ncj1} is applied to the last step.
\end{proof}
Therefore, equation \eqref{nc-bm-2-1} can be verified directly from \eqref{pt-xi}.

\begin{lemma}
Regarding the derivative of $H_{n}$, it holds that
\begin{align}\label{pt-H}
\p_{t_2}H_{n}=Z_{n,n}+H_n\eta_{n,n-1}.
\end{align}
\end{lemma}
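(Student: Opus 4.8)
\medskip
The plan is to differentiate the quasi-determinant $H_n$ directly, using the moment evolution \eqref{momentevo}, which for $\theta=2$ reads $\p_{t_2}m_i=m_{i+2}$. The decisive structural point is that $\p_{t_2}$ acts on the entry $m_{i+2j}$ of the defining array by $\p_{t_2}m_{i+2j}=m_{i+2(j+1)}$, so differentiation shifts a column one step to the right in the underlying Wronski-type matrix. First I would expand $H_n$ about its boxed corner in Schur-complement form,
\[
H_n=m_{3n}-R_nE_n^{-1}C_n,
\]
where $E_n=(m_{i+2j})_{i,j=0}^{n-1}$ is the denominator block, $R_n=(m_n,m_{n+2},\dots,m_{3n-2})$ is the last row without the corner, and $C_n=(m_{2n},\dots,m_{3n-1})^\top$ is the last column without the corner. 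In the same notation one reads off $Z_{n,n}=m_{3n+2}-R_nE_n^{-1}\,\p_{t_2}C_n$ and, from \eqref{eta}, $\eta_{n,n-1}=-e_nE_n^{-1}C_n$.

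Next I would differentiate term by term, using $\p_{t_2}(E_n^{-1})=-E_n^{-1}(\p_{t_2}E_n)E_n^{-1}$, to obtain
\[
\p_{t_2}H_n=m_{3n+2}-(\p_{t_2}R_n)E_n^{-1}C_n+R_nE_n^{-1}(\p_{t_2}E_n)E_n^{-1}C_n-R_nE_n^{-1}\,\p_{t_2}C_n.
\]
The first and last terms already assemble into $Z_{n,n}$, so the claim reduces to identifying the two middle terms with $H_n\eta_{n,n-1}$.

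The heart of the argument is a column-shift cancellation. Because $\p_{t_2}$ shifts columns, $\p_{t_2}E_n$ consists of columns $1,\dots,n-1$ of $E_n$ together with $C_n$ in the last slot, while $\p_{t_2}R_n=(m_{n+2},\dots,m_{3n})$. Using $R_nE_n^{-1}E_n=R_n$, I would show that $R_nE_n^{-1}(\p_{t_2}E_n)-\p_{t_2}R_n$ collapses to the single block $-(m_{3n}-R_nE_n^{-1}C_n)\,e_n=-H_n\,e_n$, since every entry but the last cancels against the shifted row $\p_{t_2}R_n$. Multiplying this on the right by $E_n^{-1}C_n$ gives exactly $-H_n\,e_nE_n^{-1}C_n=H_n\eta_{n,n-1}$, which finishes the identification.

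In the quasi-determinant language used elsewhere in the paper, the same computation is the derivative formula for quasi-determinants applied to $H_n$: every term in which an interior column is differentiated carries two equal columns and so vanishes by Proposition \ref{p-equi}, the boxed-column derivative produces $Z_{n,n}$, and the surviving cross-term is reorganised by a non-commutative Jacobi identity \eqref{ncj1} into $H_n\eta_{n,n-1}$. I expect the main obstacle to be this last bookkeeping step: since quasi-determinants are not multilinear, the product rule must be applied in the Schur-complement form above, and one must keep strict control of the noncommutative order of the cofactor factors so that the cross-term collapses to the clean left product $H_n\eta_{n,n-1}$ rather than to some other arrangement.
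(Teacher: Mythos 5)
Correct, and essentially the paper's own proof: the paper applies the quasi-determinant derivative formula \eqref{dqd-3} together with the evaluation \eqref{eq0}, which kills the cross terms for $j=1,\dots,n-1$ and leaves only the $j=n$ term --- exactly your Schur-complement product rule and column-shift cancellation, with the surviving piece $-H_n e_n E_n^{-1} C_n = H_n\eta_{n,n-1}$. One small indexing slip in your write-up: $\p_{t_2}E_n$ consists of columns $2,\dots,n$ of $E_n$ followed by $C_n$ (not columns $1,\dots,n-1$), which is in fact the version your cancellation against the shifted row $\p_{t_2}R_n$ uses.
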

\begin{proof}
The derivative formula for $H_n$ is given by
\begin{align*}
\!\!\p_{t_2}H_{n}\!\!
&=\left|\begin{array}{cccc}
m_0&\cdots&m_{2n-2}&m_{2n+2}\\
m_1&\cdots&m_{2n-1}&m_{2n+3}\\
\vdots&&\vdots&\vdots\\
m_{n-1}&\cdots&m_{3n-3}&m_{3n+1}\\
m_n&\cdots&m_{3n-2}&\boxed{m_{3n+2}}
\end{array}\right|
+\left|\begin{array}{cccc}
m_0&\cdots&m_{2n-2}&m_{2n}\\
m_1&\cdots&m_{2n-1}&m_{2n+1}\\
\vdots&\vdots&&\vdots\\
m_{n-1}&\cdots&m_{3n-3}&m_{3n-1}\\
m_{n+2}&\cdots&m_{3n}&\boxed{0}
\end{array}
\right|\\
&+\sum_{j=1}^n\left|\begin{array}{cccc}
m_0&\cdots&m_{2n-2}&m_{2j}\\
m_1&\cdots&m_{2n-1}&m_{2j+1}\\
\vdots&\vdots&&\vdots\\
m_{n-1}&\cdots&m_{3n-3}&m_{n-1+2j}\\
m_{n}&\cdots&m_{3n-2}&\boxed{0}
\end{array}\right|\cdot
\left|\begin{array}{cccc}
m_0&\cdots&m_{2n-2}&m_{2n}\\
m_1&\cdots&m_{2n-1}&m_{2n+1}\\
\vdots&&\vdots&\vdots\\
m_{n-1}&\cdots&m_{3n-3}&m_{3n-1}\\
&e_j&&\boxed{0}
\end{array}
\right|.
\end{align*}
By applying the equation \eqref{eq0}, we know that only $j=n$ is left in the sum, and \eqref{pt-H} is verified.
\end{proof}
\begin{remark}
This lemma indicates that $\alpha_{n,n}=\left(\p_{t_2}H_n\right) H_n^{-1}$.
\end{remark}
Therefore, by substituting formulas \eqref{eta-z} and \eqref{pt-H} into \eqref{nc-bm-2-3}, we know that the third equation of the non-commutative Blaszak-Marciniak three-field equations is verified. To verify \eqref{nc-bm-2-2}, a derivative formula for $\eta_{n+1,n}$ is needed. 

\begin{lemma}
Regarding with the derivative of $\eta_{n+1,n}$, we have that
\begin{align}\label{ddd}
\p_{t_2}\eta_{n+1,n}=\eta_{n+1,n}\eta_{n+1,n}-H_n^{-1}Z_{n,n+1}-\eta_{n+1,n-1}.
\end{align}
\end{lemma}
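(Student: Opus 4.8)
The plan is to imitate the proofs of \eqref{pt-xi} and \eqref{pt-H}: write $\eta_{n+1,n}$ as the $(n+2)\times(n+2)$ quasi-determinant displayed in the proof of \eqref{eta-z}, differentiate it, and then collapse the resulting terms using the homological relation \eqref{eta-z}, the vanishing identity \eqref{eq0}, and the non-commutative Jacobi identity \eqref{ncj1}. Since $\p_{t_2}m_i=m_{i+2}$, differentiation raises every moment index by two: the bottom row $(0,\dots,0,\mathbb{I}_p,\boxed{0})$ carries no moment and is inert, the bordering column $(m_{2n+2},\dots,m_{3n+2})^\top$ is carried onto $(m_{2n+4},\dots,m_{3n+4})^\top$, which is exactly the bordering column of $Z_{n,n+1}$, and each interior moment column is carried onto the next one.

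I find the mechanism most transparent through the closed form $\eta_{n+1,n}=-e_{n+1}M^{-1}v$ coming from \eqref{eta}, where $M=(m_{i+2k})_{i,k=0}^{n}$ is the $(n+1)\times(n+1)$ moment matrix whose bottom-right quasi-determinant is $H_n$ and $v=(m_{2n+2+i})_{i=0}^{n}$ is the bordering column of $Z_{n,n}$. Differentiating through $\p_{t_2}M^{-1}=-M^{-1}(\p_{t_2}M)M^{-1}$ and observing that $\p_{t_2}M$ is $M$ with its columns shifted one place to the left, the vacated last column being filled by $v$, splits $\p_{t_2}\eta_{n+1,n}$ cleanly into three contributions. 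The column shift produces $e_n M^{-1}v=-\eta_{n+1,n-1}$; the filled column produces the outer-product term $(e_{n+1}M^{-1}v)(e_{n+1}M^{-1}v)=\eta_{n+1,n}\eta_{n+1,n}$; and the remaining piece $-e_{n+1}M^{-1}(\p_{t_2}v)$ equals $-H_n^{-1}Z_{n,n+1}$, since $\p_{t_2}v=(m_{2n+4+i})_{i=0}^{n}$ is the bordering column of $Z_{n,n+1}$ and the standard quasi-determinant relation $(M^{-1})_{n+1,n+1}=H_n^{-1}$ gives $e_{n+1}M^{-1}(\p_{t_2}v)=H_n^{-1}Z_{n,n+1}$ — this is the same Schur-complement rearrangement that turns $\eta_{n+1,n}$ into $-H_n^{-1}Z_{n,n}$ in \eqref{eta-z}. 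Summing the three contributions yields \eqref{ddd}.

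Organized purely in the quasi-determinant language of \eqref{pt-H}, the same three contributions reappear: the bordering-column derivative gives $-H_n^{-1}Z_{n,n+1}$; the interior-column derivatives produce quasi-determinants with two coincident columns that vanish by Proposition \ref{p-equi} (the trimming already effected by \eqref{eq0}), leaving one surviving term that reassembles to $-\eta_{n+1,n-1}$ through \eqref{ncj1}; and the product correction term of the derivative formula yields the quadratic $\eta_{n+1,n}\eta_{n+1,n}$. The main obstacle is the non-commutative bookkeeping in that correction term: I must check that its two factors are genuinely $\eta_{n+1,n}$ on both the left and the right, with no $H_n^{-1}$ inserted between them, and that the surviving interior term carries the identity block on the correct side so as to produce $-\eta_{n+1,n-1}$. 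I would control this by keeping the boxed entry in the bottom-right corner at every stage, so that each application of \eqref{ncj1} preserves the left/right placement of factors, exactly as in \eqref{eta-z} and \eqref{pt-H}.
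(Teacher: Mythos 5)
Your proof is correct, and it produces exactly the paper's three contributions; the difference is in the packaging. The paper applies the Wronskian-type derivative formula \eqref{dqd-3} to the bordered quasi-determinant expression for $\eta_{n+1,n}$, evaluates the $n+1$ correction factors case by case (zero for $j=1,\dots,n-1$, $-\mathbb{I}_p$ for $j=n$, $\eta_{n+1,n}$ for $j=n+1$), and then converts the remaining leading quasi-determinant into $-H_n^{-1}Z_{n,n+1}$ by a single application of the non-commutative Jacobi identity \eqref{ncj1}. You instead differentiate the closed form $\eta_{n+1,n}=-e_{n+1}M^{-1}v$ from \eqref{eta}, using $\p_{t_2}M^{-1}=-M^{-1}(\p_{t_2}M)M^{-1}$ together with the structural decomposition of $\p_{t_2}M$ as $M$ with columns shifted left and last column filled by $v$; this collapses the paper's case analysis into the single identity that the bottom unit row composed with the shift gives $e_n$, so the terms $e_nM^{-1}v=-\eta_{n+1,n-1}$ and $(e_{n+1}M^{-1}v)(e_{n+1}M^{-1}v)=\eta_{n+1,n}\eta_{n+1,n}$ come out with the correct non-commutative ordering automatically, with no stray $H_n^{-1}$ to worry about. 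Your remaining step, $e_{n+1}M^{-1}(\p_{t_2}v)=H_n^{-1}Z_{n,n+1}$, is precisely the Schur-complement content of the paper's Jacobi-identity step (the same rearrangement that proves \eqref{eta-z}), so nothing is missing there: since \eqref{dqd-3} is itself the quasi-determinant packaging of differentiating $M^{-1}$, the two arguments are equivalent, but yours trades the quasi-determinant calculus for mechanical matrix bookkeeping, while the paper's stays uniform with the proofs of \eqref{pt-xi} and \eqref{pt-H}. One cosmetic slip: the vanishing you attribute to \eqref{eq0} is not literally that identity (which concerns the matrices arising in the proofs of \eqref{pt-xi} and \eqref{pt-H}); the analogous evaluation for this lemma is the displayed case analysis in the paper's proof — but your argument does not actually depend on this mis-citation.
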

\begin{proof}
Taking the derivative of $\eta_{n+1,n}$, one obtains
\begin{align*}
\p_{t_2}\eta_{n+1,n}&=\left|\begin{array}{cccc}
m_0&\cdots&m_{2n}&m_{2n+4}\\
m_1&\cdots&m_{2n+1}&m_{2n+5}\\
\vdots&&\vdots&\vdots\\
m_n&\cdots&m_{3n}&m_{3n+4}\\
0&\cdots&\mathbb{I}_p&\boxed{0}\end{array}
\right|\\&+\sum_{j=1}^{n+1}\left|\begin{array}{cccc}
m_0&\cdots&m_{2n}&m_{2j}\\
m_1&\cdots&m_{2n+1}&m_{2j+1}\\
\vdots&&\vdots&\vdots\\
m_n&\cdots&m_{3n}&m_{n+2j}\\
0&\cdots&\mathbb{I}_p&\boxed{0}
\end{array}
\right|\cdot\left|\begin{array}{cccc}
m_0&\cdots&m_{2n}&m_{2n+2}\\
m_1&\cdots&m_{2n+1}&m_{2n+3}\\
\vdots&&\vdots&\vdots\\
m_n&\cdots&m_{3n}&m_{3n+2}\\
&e_j&&\boxed{0}
\end{array}
\right|.
\end{align*}
Noting that
\begin{align*}
\left|\begin{array}{cccc}
m_0&\cdots&m_{2n}&m_{2j}\\
m_1&\cdots&m_{2n+1}&m_{2j+1}\\
\vdots&&\vdots&\vdots\\
m_n&\cdots&m_{3n}&m_{n+2j}\\
0&\cdots&\mathbb{I}_p&\boxed{0}
\end{array}
\right|=\left\{\begin{array}{ll}
0&j=1,\cdots,n-1,\\
-\mathbb{I}_p&j=n,\\
\eta_{n+1,n}&j=n+1,
\end{array}
\right.
\end{align*}
we have a simplified equation
\begin{align*}
\p_{t_2}\eta_{n+1,n}=\left|\begin{array}{cccc}
m_0&\cdots&m_{2n}&m_{2n+4}\\
m_1&\cdots&m_{2n+1}&m_{2n+5}\\
\vdots&&\vdots&\vdots\\
m_n&\cdots&m_{3n}&m_{3n+4}\\
0&\cdots&\mathbb{I}_p&\boxed{0}\end{array}
\right|+\eta_{n+1,n}\eta_{n+1,n}-\eta_{n+1,n-1}.
\end{align*}
The application of non-commutative Jacobi identity \eqref{ncj1} to $(n+1,n+2)$-rows and $(n+1,n+2)$-columns to the above quasi-determinant gives the result of \eqref{ddd}.
\end{proof}
Therefore, by using \eqref{ddd},  \eqref{nc-bm-2-2} can be directly verified.

% Section 3
\section{A non-commutative generalization of the bigraded Toda lattice}\label{sec3}
In this section, we propose to generalize the non-commutative Blaszak-Marciniak lattice and Kuperschmidt lattice to the extended bigraded Toda lattice case, which in the commutative case, admits the Lax operator \cite{carlet04,carlet06,dubrovin04}
\begin{align}\label{bi-lax}
\mathcal{L}(a,b)=\Lambda^a+c_1\Lambda^{a-1}+\cdots+c_{a+b}\Lambda^{-b},\quad a,b\in \mathbb{Z}_+.
\end{align}
To construct this Lax operator by using bi-orthogonal polynomials, we need to consider the parameter $\theta\in\mathbb{Q}_+$ instead of $\theta\in\mathbb{Z}_+$ in bilinear form \eqref{innerproduct}.
Let us first consider recurrence relations for $\{P_n(x),Q_n(x)\}_{n\in\mathbb{N}}$ when $\theta=b/a$, where $a,b$ are positive integers. It should be noted that to uniquely determine the value of $\theta$, we need to require that $a$ and $b$ should be co-prime. However, the recurrence relation is dependent on the values of $a$ and $b$ rather than that of $\theta$, as we demonstrate below.

\begin{proposition}
Given
\begin{align*}
\theta=\frac{b}{a},\quad a,\,b\in\mathbb{Z}_+,
\end{align*}
 the so-called $(a,b)$-graded matrix-valued bi-orthogonal polynomials $\{P_n(x), Q_n(x)\}_{n\in\mathbb{N}}$ satisfying the bi-orthogonal relation under \eqref{or} have the following recurrence relations
\begin{subequations}
\begin{align}
x^b P_n(x)=P_{n+b}(x)+\sum^{n+b-1}_{j=n-a}\alpha_{n, j}P_j(x),\label{rational-re-1}\\
x^a Q_n(x)=Q_{n+a}(x)+\sum^{n+a-1}_{j=n-b}\beta_{n, j}Q_j(x).\label{rational-re-2}
\end{align}
\end{subequations}
Moreover, coefficients $\alpha_{n, j}$ and $\beta_{n, j}$ can be written as
\begin{align*}
\alpha_{n, j}=\left(\mathcal{Z}_{n,j}+\sum^{j-1}_{i=n-a}\mathcal{Z}_{n,i}\eta_{j,i}\right)H_j\!^{-1},\\
\beta^T_{n, j}=H_j\!^{-1}\left(\mathcal{Y}_{n,j}+\sum^{j-1}_{i=n-b}\xi_{j,i}\mathcal{Y}_{n,i}\right),
\end{align*}
where $H_n$ has the same expression as in \eqref{nf}, and $\mathcal{Z}_{n,j}$ and $\mathcal{Y}_{n,j}$ can be written as quasi-determinants
\begin{align*}
\mathcal{Z}_{n,j}={\langle P_n(x), x^{a+j}\mathbb{I}_p\rangle}_\theta
=\left|
\begin{array}{ccccc}
m_{0}&m_{\theta}&\cdots&m_{(n-1)\theta}&m_{(a+j)\theta}\\
\vdots&\vdots& &\vdots&\vdots\\
m_{n-1}&m_{n-1+\theta}&\cdots&m_{n-1+(n-1)\theta}&m_{n-1+(a+j)\theta}\\
m_{n}&m_{n+\theta}&\cdots&m_{n+(n-1)\theta}&\boxed{m_{n+(a+j)\theta}}
\end{array}
\right|,
\end{align*}
and
\begin{align*}
\mathcal{Y}_{n, j}= \langle {x^{b+j}\mathbb{I}_p,Q_n(x)\rangle}_\theta
=\left|
\begin{array}{ccccc}
m_{0}&m_{\theta}&\cdots&m_{(n-1)\theta}&m_{n\theta}\\
\vdots&\vdots& &\vdots&\vdots\\
m_{n-1}&m_{n-1+\theta}&\cdots&m_{n-1+(n-1)\theta}&m_{n-1+n\theta}\\
m_{b+j}&m_{b+j+\theta}&\cdots&m_{b+j+(n-1)\theta}&\boxed{m_{b+j+n\theta}}
\end{array}
\right|.
\end{align*}
\end{proposition}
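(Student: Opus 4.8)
The plan is to run the proof of the integer-$\theta$ recurrence Proposition almost verbatim, the single new ingredient being a rational analogue of the quasi-symmetry \eqref{qs}. The key observation I would record first is that $\theta=b/a$ gives $(x^\theta)^a=x^b$, whence for any $f,g\in\mathbb{R}^{p\times p}[x]$
\begin{align*}
\langle x^bf(x),g(x)\rangle_\theta=\int_\mathbb{R}x^bf(x)W(x)g^\top(x^\theta)\,dx=\int_\mathbb{R}f(x)W(x)(x^\theta)^ag^\top(x^\theta)\,dx=\langle f(x),x^ag(x)\rangle_\theta,
\end{align*}
the scalar $x^b$ commuting freely past the matrix factors. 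This identity replaces \eqref{qs} and is the hinge of everything that follows.

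Next I would derive the two recurrences. Since $\{P_n\}$ is a basis of the left module $\mathbb{R}^{p\times p}[x]$ and $x^bP_n$ is monic of degree $n+b$, I expand $x^bP_n=P_{n+b}+\sum_{j=0}^{n+b-1}\alpha_{n,j}P_j$ and read off $\alpha_{n,j}=\langle x^bP_n,Q_j\rangle_\theta H_j^{-1}$ from \eqref{or}. The rational quasi-symmetry rewrites this as $\langle P_n,x^aQ_j\rangle_\theta H_j^{-1}$; because $x^aQ_j$ has degree $j+a$, the orthogonality \eqref{or-2} kills it when $j+a<n$, so only $n-a\le j\le n+b-1$ survive, yielding \eqref{rational-re-1}. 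For \eqref{rational-re-2} I proceed dually: expand $x^aQ_n$ in $\{Q_j\}$, extract $\beta_{n,j}^\top=H_j^{-1}\langle P_j,x^aQ_n\rangle_\theta$ via the bimodule rule \eqref{bimodule}, rewrite it as $H_j^{-1}\langle x^bP_j,Q_n\rangle_\theta$ by quasi-symmetry, and use that $x^bP_j$ has degree $j+b$ to force the coefficient to vanish for $j<n-b$.

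For the closed forms I would substitute the monomial expansions \eqref{p-poly} and \eqref{q-poly}. Writing $Q_j(x)=\sum_i\eta_{j,i}^\top\,x^i\mathbb{I}_p$ with $\eta_{j,j}=\mathbb{I}_p$ and applying the right action in \eqref{bimodule} converts $\langle P_n,x^aQ_j\rangle_\theta$ into $\sum_i\langle P_n,x^{a+i}\mathbb{I}_p\rangle_\theta\eta_{j,i}=\sum_i\mathcal{Z}_{n,i}\eta_{j,i}$; the same degree count shows $\mathcal{Z}_{n,i}=0$ for $i<n-a$, which is exactly the stated formula for $\alpha_{n,j}$, and the mirror computation gives $\beta_{n,j}^\top$. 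The quasi-determinant representations of $\mathcal{Z}_{n,j}$ and $\mathcal{Y}_{n,j}$ then follow precisely as \eqref{znj} did in the integer case: insert the quasi-determinant \eqref{pn} for $P_n$ (resp. \eqref{qn} for $Q_n$), evaluate the pairing against $x^{a+j}\mathbb{I}_p$ using the moments \eqref{moments}, and observe that this merely borders the $n\times n$ moment quasi-determinant with one extra column (resp. row) of shifted moments.

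The one step that needs genuine care --- and which I regard as the main obstacle --- is the bookkeeping of fractional indices. With $\theta=b/a$ the moments entering the recurrence, $\mathcal{Z}_{n,j}$ and $\mathcal{Y}_{n,j}$ carry indices of the form $i+l\theta$, and I must check that every index produced by the manipulations above is again of this admissible form, so that Definition \ref{def1.1} applies and each quasi-determinant is well defined. The saving identities are $(a+j)\theta=b+j\theta$ together with the integrality of $b+j$, which reduce all indices to the pattern $i+l\theta$; once this is verified the integer-case algebra transfers unchanged, and the only genuinely new content remains the rational quasi-symmetry established at the start.
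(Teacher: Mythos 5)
Your proposal is correct and follows essentially the same route as the paper, whose entire proof consists of noting that the integer-case argument carries over once the quasi-symmetry is rewritten as $\langle x^b P_n(x), Q_m(x)\rangle_\theta = \langle P_n(x), x^a Q_m(x)\rangle_\theta$ for $\theta=b/a$ --- exactly the hinge identity you establish at the start. Your degree counts, coefficient extractions via the bimodule structure, and the check that all moment indices remain of the admissible form $i+l\theta$ faithfully flesh out what the paper leaves implicit.
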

\begin{proof}
The proof of this proposition is similar to the ones for $\theta\in\mathbb{Z}_+$ by rewriting the quasi-symmetry property
\begin{align}\label{qscon}
\langle x^b P_n(x), Q_m(x)\rangle_\theta = \langle P_n(x),x^a Q_m(x)\rangle_\theta,\quad \theta=\frac{b}{a}\in\mathbb{Q}_+.
\end{align}
\end{proof}
If we denote 
\begin{align*}
\mathcal{L}(b,a)=\Lambda^b+\alpha_{b-1}\Lambda^{b-1}+\cdots+\alpha_{-a}\Lambda^{-a},\quad \alpha_i=(\alpha_{0,i},\alpha_{1,i+1},\cdots),\\
\mathcal{M}(a,b)=\Lambda^a+\beta_{a-1}\Lambda^{a-1}+\cdots+\beta_{-b}\Lambda^{-b},\quad \beta_i=(\beta_{i,0},\beta_{i+1,1},\cdots),
\end{align*}
then \eqref{rational-re-1} and \eqref{rational-re-2} can be equivalently written as
\begin{align}\label{spec1}
x^b\Phi=\mathcal{L}(b,a)\Phi,\quad x^a\Psi=\mathcal{M}(a,b)\Psi
\end{align}
where $\Phi$ and $\Psi$ are wave functions defined in \eqref{wave1}.

Similarly, we can introduce infinitely many time flows $\{t_1,t_2,\cdots\}$ such that the weight function is dependent on the time flows as in \eqref{td}, then the evolution of the wave function can be stated as follows.

\begin{proposition}\label{proprationalt}
\begin{subequations}
The matrix-valued bi-orthogonal polynomials $\{P_n(x;\mt)\}_{n\in\mathbb{N}}$  and $\{Q_n(x;\mt)\}_{n\in\mathbb{N}}$  satisfy the evolution equations
\begin{align}\label{rational-t-p}
\p_{t_b}P_n(x;\mt)=-\sum^{n-1}_{i=n-a}\alpha_{n,i}P_{i}(x;\mt),
\end{align}
\begin{align}\label{rational-t-q}
\p_{t_b}Q_n(x;\mt)=-\sum^{n-1}_{i=n-b}\beta_{n,i}Q_{i}(x;\mt),
\end{align}
where $\{\alpha_{n,i}\}_{i=n-a}^{n-1}$ and $\{\beta_{n,i}\}_{i=n-b}^{n-1}$ are the recurrence coefficients in \eqref{rational-re-1} and  \eqref{rational-re-2} respectively.
\end{subequations}
\end{proposition}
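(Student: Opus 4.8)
The plan is to transcribe the proof of Proposition~\ref{proptp}, replacing the single quasi-symmetry exponent $\theta$ by the pair $(b,a)$ and adjusting the degree bookkeeping accordingly. The starting point is the differentiation of the time-dependent orthogonality relation $\langle P_n(x;\mt),Q_m(x;\mt)\rangle_\theta=H_n(\mt)\delta_{n,m}$ with respect to $t_b$. Because the moment flow $\p_{t_b}m_k=m_{k+b}$ is equivalent to $\p_{t_b}W(x;\mt)=x^bW(x;\mt)$, the product rule reproduces the three-term identity \eqref{t-innerp} in the form
\[
\p_{t_b}H_n\,\delta_{n,m}=\langle \p_{t_b}P_n,Q_m\rangle_\theta+\langle P_n,\p_{t_b}Q_m\rangle_\theta+\langle x^bP_n,Q_m\rangle_\theta .
\]

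To obtain \eqref{rational-t-p} I would set $m<n$. Since $P_n$ and $Q_m$ are monic with $t$-independent leading coefficients, both $\p_{t_b}P_n$ and $\p_{t_b}Q_m$ have degree strictly below their original degree, so the middle term vanishes by the orthogonality \eqref{or} and the bimodule law \eqref{bimodule}. The surviving identity is $\langle \p_{t_b}P_n,Q_m\rangle_\theta=-\langle x^bP_n,Q_m\rangle_\theta$, and the rational quasi-symmetry \eqref{qscon} rewrites the right-hand side as $-\langle P_n,x^aQ_m\rangle_\theta$. As $x^aQ_m$ has degree $m+a$, this bracket is nonzero only when $n-a\le m\le n-1$. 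Expanding $\p_{t_b}P_n=\sum_{k=0}^{n-1}\gamma_{n,k}P_k$ as a left-linear combination of the orthogonal basis and pairing with $Q_m$ isolates $\gamma_{n,m}H_m$; on the other hand the recurrence \eqref{rational-re-1} gives $\langle x^bP_n,Q_m\rangle_\theta=\alpha_{n,m}H_m$ on the same index range. Since $H_m$ is invertible this forces $\gamma_{n,m}=-\alpha_{n,m}$ for $n-a\le m\le n-1$ and $\gamma_{n,m}=0$ otherwise, which is exactly \eqref{rational-t-p}.

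For \eqref{rational-t-q} I would run the mirror argument with $n<m$, so that now $\langle \p_{t_b}P_n,Q_m\rangle_\theta$ drops out on degree grounds and one is left with $\langle P_n,\p_{t_b}Q_m\rangle_\theta=-\langle x^bP_n,Q_m\rangle_\theta$. Applying \eqref{qscon} and then the recurrence \eqref{rational-re-2} shows that $\langle x^bP_n,Q_m\rangle_\theta=H_n\beta_{m,n}^\top$ precisely when $m-b\le n\le m-1$; expanding $\p_{t_b}Q_m$ over the basis $\{Q_k\}$ and reading off coefficients through the transpose in \eqref{bimodule} then gives $-\beta_{m,n}$, and relabelling produces \eqref{rational-t-q}.

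The only genuinely delicate point is the bookkeeping imposed by the asymmetry of \eqref{qscon}: trading $x^b$ on the left for $x^a$ on the right makes the two evolutions truncate at different depths, namely $a$ terms for $P_n$ and $b$ terms for $Q_n$, and one must carry the transposes created by the right-hand bimodule rule correctly when extracting the $\beta$-coefficients. Every remaining step is a routine repetition of the integer-$\theta$ computation.
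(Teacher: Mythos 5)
Your proposal is correct and takes essentially the same route as the paper, which proves this proposition simply by remarking that the argument of Proposition \ref{proptp} carries over: differentiate the time-dependent orthogonality, treat the cases $m<n$ and $n<m$ separately, drop the middle term on degree grounds, and use the rational quasi-symmetry \eqref{qscon} together with the recurrences \eqref{rational-re-1}--\eqref{rational-re-2} to truncate the expansions, exactly as you do. Your bookkeeping of the asymmetric truncation depths ($a$ terms for $P_n$, $b$ terms for $Q_n$) and of the transposes arising from the right bimodule rule \eqref{bimodule} matches the intended argument, including the paper's accompanying remark on why only the $t_b$-flow truncates.
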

\begin{remark}
The proofs of \eqref{rational-t-p} and \eqref{rational-t-q} are similar to those in Proposition \ref{proptp}. However, one should notice why we consider the $t_b$-flow. When we take the derivative of the orthogonal relation \eqref{or} with respect to the $t_j$-flow, we have
\begin{align*}
\langle \p_{t_j}P_n(x;\mt),Q_m(x;\mt)\rangle_\theta=-\langle x^jP_n(x;\mt), Q_m(x;\mt)\rangle_\theta
\end{align*}
similar to \eqref{de1}. Then the evaluation of $\langle x^j P_n(x;\mt), Q_m(x;\mt)\rangle_\theta$ determines the coefficients when expanding the derivative of polynomials. Therefore, to truncate the derivative formula, one needs to take $j$ as a multiplier of $b$ and makes use of the quasi-symmetry property. Proposition \ref{proprationalt} gives the simplest case.
However, it doesn't include all cases. If we consider a continuous independent variable $x$, then fractional powers and logarithm of the Lax operator $\mathcal{L}$ will be used. Further discussions with Frobenius manifolds have been given in \cite{carlet06} in commutative case, and its connection with Hurwitz numbers has been discussed in \cite{takasaki18}.
\end{remark}
Therefore, if one rewrites \eqref{rational-t-p} and \eqref{rational-t-q} as a matrix form
\begin{align*}
\p_{t_b}\Phi=-\mathcal{L}(b,a)_{<0}\Phi,\quad \p_{t_b}\Psi=-\mathcal{M}(a,b)_{<0}\Psi,
\end{align*}
then compatibility conditions of \eqref{rational-re-1} and \eqref{rational-t-p} give rise to the non-commutative $(b,a)$-graded Toda lattice
\begin{align*}
\p_{t_b}\mathcal{L}(b,a)=[\mathcal{L}(b,a),\mathcal{L}(b,a)_{<0}],
\end{align*}
while \eqref{rational-re-2} and \eqref{rational-t-q} lead to the non-commutative $(a,b)$-graded Toda lattice
\begin{align*}
\p_{t_b}\mathcal{M}(a,b)=[\mathcal{M}(a,b),\mathcal{M}(a,b)_{<0}].
\end{align*}
Moreover, we can generalize the above procedure to higher order time flows, namely $t_{kb}$-flow ($k=1,2,\cdots$), which generates a non-commutative bigraded Toda hierarchy.
\begin{proposition}\label{prop4}
\begin{subequations}
The derivative of $P_n(x;\mt)$ with respect to the $t_{kb}$-flow is given by the formula
\begin{align}
\p_{t_{kb}}\Phi=-\left(\mathcal{L}^k(b,a)\right)_{<0}\Phi,\label{tkb1}
\end{align}
and that of $Q_n(x;\mt)$ is given by
\begin{align}
\p_{t_{kb}}\Psi=-\left(\mathcal{M}^k(a,b)\right)_{<0}\Psi.\label{tkb2}
\end{align}
\end{subequations}
\end{proposition}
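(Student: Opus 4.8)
The plan is to mimic the proof of the base case (Proposition \ref{proprationalt}, which is $k=1$): I differentiate the orthogonality relation \eqref{or} along the $t_{kb}$-flow and then read off the surviving coefficients as matrix entries of $\mathcal{L}^k(b,a)$ and $\mathcal{M}^k(a,b)$. The starting point is that the deformation \eqref{td} gives $\p_{t_{kb}}W(x;\mt)=x^{kb}W(x;\mt)$, so that differentiating $\langle P_n,Q_m\rangle_\theta=H_n\delta_{n,m}$ produces the three-term identity
\begin{align*}
\p_{t_{kb}}\big(H_n\delta_{n,m}\big)=\langle \p_{t_{kb}}P_n,Q_m\rangle_\theta+\langle x^{kb}P_n,Q_m\rangle_\theta+\langle P_n,\p_{t_{kb}}Q_m\rangle_\theta.
\end{align*}
Since $P_n$ and $Q_n$ are monic of degree $n$, the derivatives $\p_{t_{kb}}P_n$ and $\p_{t_{kb}}Q_m$ have degree at most $n-1$ and $m-1$ respectively, hence expand as left combinations of the respective bi-orthogonal bases below their own index.

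For \eqref{tkb1} I would first iterate the spectral problem \eqref{spec1}. Because $\mathcal{L}(b,a)$ acts only on the discrete index and commutes with multiplication by $x$, one gets $x^{kb}\Phi=\mathcal{L}^k(b,a)\Phi$, i.e. $x^{kb}P_n=\sum_m(\mathcal{L}^k)_{n,m}P_m$ with $(\mathcal{L}^k)_{n,m}=\langle x^{kb}P_n,Q_m\rangle_\theta H_m^{-1}$. Taking $m<n$ in the displayed identity kills the left-hand side and the term $\langle P_n,\p_{t_{kb}}Q_m\rangle_\theta$ (the latter by degree count and orthogonality), leaving $\langle \p_{t_{kb}}P_n,Q_m\rangle_\theta=-\langle x^{kb}P_n,Q_m\rangle_\theta$. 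Expanding $\p_{t_{kb}}P_n=\sum_j\gamma_{n,j}P_j$ and pairing against $Q_m$ via the left-module rule in \eqref{bimodule} gives $\gamma_{n,m}=-(\mathcal{L}^k)_{n,m}$ for $m<n$; the iterated quasi-symmetry \eqref{qscon}, $\langle x^{kb}P_n,Q_m\rangle_\theta=\langle P_n,x^{ka}Q_m\rangle_\theta$, shows these vanish once $m+ka<n$ (since $\deg x^{ka}Q_m=m+ka$). Hence $\p_{t_{kb}}P_n=-\sum_{m=n-ka}^{n-1}(\mathcal{L}^k)_{n,m}P_m$, which is exactly $-(\mathcal{L}^k(b,a))_{<0}\Phi$ row by row.

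The proof of \eqref{tkb2} is parallel but with one genuine twist: the weight derivative always supplies the power $x^{kb}$, whereas the $Q$-spectral operator is $\mathcal{M}(a,b)$, built from $x^a\Psi=\mathcal{M}(a,b)\Psi$. The bridge is again the quasi-symmetry \eqref{qscon} applied exactly $k$ times, turning $\langle x^{kb}P_n,Q_m\rangle_\theta$ into $\langle P_n,x^{ka}Q_m\rangle_\theta$; then $x^{ka}\Psi=\mathcal{M}^k(a,b)\Psi$ expresses $x^{ka}Q_m=\sum_j(\mathcal{M}^k)_{m,j}Q_j$. Taking $n<m$ in the differentiated orthogonality relation now isolates $\langle P_n,\p_{t_{kb}}Q_m\rangle_\theta=-\langle P_n,x^{ka}Q_m\rangle_\theta$; writing $\p_{t_{kb}}Q_m=\sum_j\sigma_{m,j}Q_j$ and using the right-module rule in \eqref{bimodule} to strip the $H_n$ and transpose factors yields $\sigma_{m,n}=-(\mathcal{M}^k)_{m,n}$, which vanishes for $n<m-kb$. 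This reproduces $\p_{t_{kb}}\Psi=-(\mathcal{M}^k(a,b))_{<0}\Psi$.

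The only real bookkeeping hazard is the non-commutative bimodule structure: one must track the transposes produced by the right-module identity in \eqref{bimodule} when pairing $P_n$ against the expansion of $\p_{t_{kb}}Q_m$, and one must apply the quasi-symmetry precisely $k$ times so that the mismatched powers $x^{kb}$ and $x^{ka}$ are reconciled — this is the step where using a multiple of $b$ for the flow is essential. Once these are handled, the identification of the surviving coefficients with the strictly-lower-triangular entries of $\mathcal{L}^k(b,a)$ and $\mathcal{M}^k(a,b)$ is automatic from their band structure ($\Lambda^{-ka}$ and $\Lambda^{-kb}$ being the lowest powers present), and the result follows as a direct extension of Proposition \ref{proprationalt}.
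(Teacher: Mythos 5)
Your proposal is correct and follows essentially the same route as the paper: iterate the spectral problem to get $x^{kb}\Phi=\mathcal{L}^k(b,a)\Phi$, differentiate the orthogonality relation along the $t_{kb}$-flow, and isolate the strictly lower-triangular part, using the iterated quasi-symmetry $\langle x^{kb}f,g\rangle_\theta=\langle f,x^{ka}g\rangle_\theta$ on the $\Psi$ side. The only difference is presentational: you argue entrywise with degree counts and the bimodule pairing (as in the paper's proof of Proposition \ref{proprationalt}), whereas the paper packages the same computation as a triangular decomposition of the full moment matrix $\mathcal{H}$ followed by an appeal to nondegeneracy of the pairing, so the two arguments coincide in substance.
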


\begin{proof}
Let us introduce the notation
\begin{align*}
\left(\langle\Phi, \Psi\rangle_\theta\right):=\left(\begin{array}{ccc}
\langle P_0(x;\mt),Q_0(x;\mt)\rangle_\theta&\langle P_0(x;\mt),Q_1(x;\mt)\rangle_\theta&\cdots\\
\langle P_1(x;\mt),Q_0(x;\mt)\rangle_\theta&\langle P_1(x;\mt),Q_1(x;\mt)\rangle_\theta&\cdots\\
\vdots&\vdots&\ddots
\end{array}
\right),
\end{align*}
Using the  orthogonal relations \eqref{or}, we have $\left(\langle\Phi, \Psi\rangle_\theta\right)=\mathcal{H}$, which is a block diagonal matrix with elements $(H_0, H_1, \cdots)$. From the spectral problem \eqref{spec1}, we have $x^{kb}\Phi=\mathcal{L}^{k}(b,a)\Phi$. Therefore 
 \begin{align*}
 (\langle x^{kb}\Phi, \Psi\rangle_\theta)=(\langle\mathcal{L}^{k}(b,a)\Phi, \Psi\rangle_\theta)=\mathcal{L}^{k}(b,a)(\langle\Phi, \Psi\rangle_\theta)=\mathcal{L}^{k}(b,a)\mathcal{H}.
 \end{align*}
On the other hand, by taking the derivative for the orthogonal relation, we get
\begin{align*}
\p_{t_{kb}}\mathcal{H}
&=\left(\langle\p_{t_{kb}}\Phi, \Psi\rangle_\theta\right)+\left(\langle\Phi,\p_{t_{kb}} \Psi\rangle_\theta\right)+\left(\langle x^{kb}\Phi, \Psi\rangle_\theta\right)\\
&=\left(\langle\p_{t_{kb}}\Phi, \Psi\rangle_\theta\right)+\left(\langle\Phi,\p_{t_{kb}} \Psi\rangle_\theta\right)+\mathcal{L}^k (b,a)\mathcal{H},
\end{align*}
where $\left(\langle \p_{t_{kb}}\Phi, \Psi\rangle_\theta\right)$ is a  strictly block lower triangular matrix and 
$\left(\langle \Phi,\p_{t_{kb}} \Psi\rangle_\theta\right)$ is strictly block upper triangular. Thus, according to the decomposition of matrix, we have
\begin{align*}
\left(
\langle \p_{t_{kb}}\Phi,\Psi\rangle_\theta
\right)=-\left(\mathcal{L}^k(b,a)\mathcal{H}\right)_{<0}=-\left(\mathcal{L}^k(b,a)\right)_{<0}\mathcal{H}.
\end{align*}
Moreover, by the Riesz representation theorem \cite{reed80}, we get
\begin{align*}
\p_{t_{kb}}\Phi=-\left(\mathcal{L}^k(a,b)\right)_{<0}\Phi.
\end{align*}
The derivative of $Q_n(x;\mt)$ can be similarly verified.
\end{proof} 
Therefore, the compatibility conditions of \eqref{tkb1}, \eqref{tkb2} and \eqref{spec1} result in the non-commutative bigraded Toda hierarchy
\begin{align*}
\p_{t_{kb}}\mathcal{L}(b,a)&=[\mathcal{L}(b,a),\mathcal{L}^k(b,a)_{<0}],\\
\p_{t_{kb}}\mathcal{M}(a,b)&=[\mathcal{M}(a,b),\mathcal{M}^k(a,b)_{<0}].
\end{align*}

% section 4
\section{Moment reduction and B\"acklund transformation}\label{sec4}

B\"acklund transformations are an important tool in soliton theory. They relate different solutions to the same equation. It is well known that in the commutative case, the Lotka-Volterra equation is the B\"acklund transformation of the Toda equation, see e.g. \cite{chu08,hirota78}. Such a relation has been generalized to the hungry Toda case (i.e. commutative Blaszak-Marciniak/Kuperschmidt lattice hierarchy) and the hungry Lotka-Volterra hierarchy (or so-called Itoh-Narita-Bogoyavlensky lattice hierarchy) \cite{fukuda11,shinjo22}.

Therefore, it is of interest to understand B\"acklund transformations from the orthogonal polynomials perspective, as well as to obtain a non-commutative generalization of the Itoh-Narita-Bogoyavlensky lattice hierarchy. To this end, we start from the Wronski quasi-determinant solutions of the Blaszak-Marciniak/Kuperschmidt lattice hierarchy, and introduce the moment reduction approach.

%subsection
\subsection{A B\"acklund transformation for the Blaszak-Marciniak lattice}\label{sec4.1}
We first consider the B\"acklund transformation for $\theta\in\mathbb{Z}_+$ case. For $\theta=1$, the B\"acklund transformation between the non-commutative Toda and Lotka-Volterra has been revealed in \cite{li20}. 

Let us first claim the following proposition.
\begin{proposition}\label{propbt}
If 
\begin{align*}
H_n=\left|\begin{array}{cccc}
m_0&m_\theta&\cdots&m_{n\theta}\\
m_1&m_{\theta+1}&\cdots&m_{n\theta+1}\\
\vdots&\vdots&&\vdots\\
m_n&m_{\theta+n}&\cdots&\boxed{m_{n\theta+n}}
\end{array}
\right|,\quad n=0,1,\cdots
\end{align*}
are solutions to \eqref{nc-bm} and \eqref{nc-kl} under proper evolutions $\p_t m_i=m_{i+\theta}$, then 
\begin{align*}
H_n^{(\ell)}=\left|\begin{array}{cccc}
m_\ell&m_{\ell+\theta}&\cdots&m_{\ell+n\theta}\\
m_{\ell+1}&m_{\ell+\theta+1}&\cdots&m_{\ell+n\theta+1}\\
\vdots&\vdots&&\vdots\\
m_{\ell+n}&m_{\ell+\theta+n}&\cdots&\boxed{m_{\ell+n\theta+n}}
\end{array}
\right|,\quad n=0,1,\cdots
\end{align*}
are still solutions to \eqref{nc-bm} and \eqref{nc-kl} under the same evolutions.
\end{proposition}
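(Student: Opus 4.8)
The plan is to observe that $H_n^{(\ell)}$ is nothing but the $\tau$-function $H_n$ built from the \emph{shifted} moment sequence, and that this shift respects the time evolution. Concretely, set $\tilde m_i:=m_{i+\ell}$ for all $i\geq 0$. Reading off the entries of the two quasi-determinants, the $(i,j)$-entry of $H_n$ is $m_{i+j\theta}$ while the $(i,j)$-entry of $H_n^{(\ell)}$ is $m_{\ell+i+j\theta}=\tilde m_{i+j\theta}$; hence $H_n^{(\ell)}$ is obtained from the defining formula \eqref{nf} by the substitution $m_\bullet\mapsto\tilde m_\bullet$. The first step is then to check that the shifted sequence obeys the same evolution law: from $\p_t m_i=m_{i+\theta}$ we get $\p_t\tilde m_i=\p_t m_{i+\ell}=m_{i+\ell+\theta}=\tilde m_{i+\theta}$, so $\tilde m_\bullet$ satisfies $\p_t\tilde m_i=\tilde m_{i+\theta}$ exactly as $m_\bullet$ does.

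The second step is to recall that the verification carried out in Section \ref{sec2.3} (and its analogue for general $\theta$) is a purely algebraic identity in the moments and their $t$-derivatives. Indeed, the recurrence coefficients $\alpha_{n,j},\beta_{n,j}$ and all the auxiliary quantities $\xi_{n,j},\eta_{n,j},Z_{n,j}$ are written as quasi-determinants of the moment array, and the proofs that $H_n$ solves \eqref{nc-bm} and \eqref{nc-kl} use only (i) these quasi-determinant representations, (ii) the non-commutative Jacobi identity \eqref{ncj1} together with the derivative formula for quasi-determinants, and (iii) the evolution rule $\p_t m_i=m_{i+\theta}$. None of these ingredients refers to the absolute index of the moments: the Jacobi identity and the derivative formula are invariant under relabelling the entries, and, by the first step, the evolution rule is preserved under $m_\bullet\mapsto\tilde m_\bullet$. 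Performing this substitution therefore turns every identity used in Section \ref{sec2.3} into the corresponding identity for the shifted quantities, so the $\tau$-functions $H_n^{(\ell)}$, together with the coefficients $\alpha_{n,j}^{(\ell)},\beta_{n,j}^{(\ell)}$ they determine, satisfy \eqref{nc-bm} and \eqref{nc-kl}.

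The main point to be careful about is precisely this claim of index-independence: one must confirm that the computations in Section \ref{sec2.3} never use that the $m_i$ arise from a genuine weight function $W(x)$ (e.g. positivity or symmetry), but only their Wronski/quasi-determinant structure and the flow $\p_t m_i=m_{i+\theta}$. A secondary caveat is well-definedness: the formulas involve inverses such as $H_n^{-1}$, so one needs the shifted moment matrices to remain invertible, i.e. the shifted sequence $\tilde m_\bullet$ to satisfy the moment condition of Definition \ref{def1.1}. Granting this standing assumption (or restricting to those $\ell$ for which it holds), the substitution argument is immediate and completes the proof; in effect $\ell$ plays the role of a \emph{phase}, and Wronski quasi-determinant $\tau$-functions of the same order with different phases solve the same hierarchy.
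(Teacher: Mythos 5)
Your proposal is correct and is essentially the argument the paper itself relies on: the paper offers no detailed proof, only a remark that this is the ``Wronski property''---the form of a Wronskian-type solution is independent of the choice of seed functions (citing \cite[\S 4.3.1]{hirota04} and the Marchenko-lemma construction of \cite{etingof98})---and your shift $m_i\mapsto m_{i+\ell}$, preservation of the flow $\p_t \tilde m_i=\tilde m_{i+\theta}$, and index-independence of the quasi-determinant verification is precisely that argument spelled out, with the invertibility caveat (the moment condition for the shifted sequence) correctly flagged as the standing assumption.
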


\begin{remark}
This proposition is an alternative expression of the Wronski property. It is known that if the seed functions are properly chosen such that an equation admits a proper Wronskian-type solution, then the expression of solutions is independent of the choice of seed functions \cite[\S 4.3.1]{hirota04}. Such an idea also appears in the construction of non-commutative integrable systems by using the Marchenko lemma \cite{etingof98}.
\end{remark}

Based on this proposition, there are two questions remaining if we want to construct a B\"acklund transformation for equations \eqref{nc-bm} and \eqref{nc-kl}. One is how to construct $H_n^{(\ell)}$, and another is how to find a possible connection equation for those $H_n^{(\ell)}$. 
To construct $H_n^{(\ell)}$, a moment reduction method is needed.
\begin{proposition}\label{prop5.2}
If we take the moment reduction 
\begin{align}\label{mred}
\langle x^i\mathbb{I}_p,x^j\mathbb{I}_p\rangle_\theta=m_{i+j\theta}:=\left\{\begin{array}{lc}
d_{\frac{i+j\theta}{\theta+1}},& i+j\theta \mod \theta+1=0,\\
0,&i+j\theta\mod \theta+1\ne0,
\end{array}
\right.
\end{align}
then $\{H_{n(\theta+1)+\ell}\}_{\ell=0,1,\cdots,\theta,\,n\in\mathbb{N}}$ are graded and have expressions of Wronski quasi-determinants
\begin{align*}
H_{n(\theta+1)+\ell}=\left|\begin{array}{cccc}
d_\ell&d_{\ell+\theta}&\cdots&d_{\ell+n\theta}\\
d_{\ell+1}&d_{\ell+1+\theta}&\cdots&d_{\ell+1+n\theta}\\
\vdots&\vdots&&\vdots\\
d_{\ell+n}&d_{\ell+n+\theta}&\cdots&\boxed{d_{\ell+n+n\theta}}
\end{array}
\right|,\quad \ell=0,1,\cdots,\theta,\, n\in\mathbb{N}
\end{align*}
\end{proposition}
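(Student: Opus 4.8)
The plan is to start from the general quasi-determinant expression \eqref{nf} for $H_N$ with $N=n(\theta+1)+\ell$, impose the reduction \eqref{mred}, and exploit the sparsity it forces. The $(i,j)$ entry of the $(N+1)\times(N+1)$ moment matrix underlying \eqref{nf} is $m_{i+j\theta}$, so the first step is to rephrase the reduction condition modulo $\theta+1$: since $\theta\equiv-1\pmod{\theta+1}$ we have $i+j\theta\equiv i-j\pmod{\theta+1}$, whence $m_{i+j\theta}$ is nonzero precisely when $i\equiv j\pmod{\theta+1}$, and in that case equals $d_{(i+j\theta)/(\theta+1)}$. Thus the moment matrix acquires a block structure governed by residue classes mod $\theta+1$.

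Next I would identify the entries within a fixed class. Writing $i=r+a(\theta+1)$ and $j=r+b(\theta+1)$, a short computation gives $i+j\theta=(\theta+1)(r+a+b\theta)$, so the residue-$r$ block has $(a,b)$ entry $d_{r+a+b\theta}$. Counting the indices in $\{0,\dots,N\}$ lying in each class shows that the residue-$r$ block has size $n+1$ for $0\le r\le\ell$ and size $n$ for $\ell<r\le\theta$. In particular the boxed corner $(N,N)$ lies in the residue-$\ell$ block, because $N\equiv\ell$, and sits at its bottom-right position; moreover the full residue-$\ell$ block is exactly the $(n+1)\times(n+1)$ matrix $(d_{\ell+a+b\theta})_{a,b=0}^{n}$ appearing on the right-hand side of the claimed identity (its boxed entry is $d_{\ell+n+n\theta}$, matching $m_{N+N\theta}$).

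The core of the argument is to show that the quasi-determinant collapses onto this single block. Here I would invoke two properties of quasi-determinants from the appendix. First, invariance of $|A|_{NN}$ under any permutation of the rows other than row $N$ and of the columns other than column $N$: this lets me sort the rows and columns of the complementary minor $M^{NN}$ by residue class, bringing $M^{NN}$ into block-diagonal form while keeping the boxed entry fixed and the value $H_N$ unchanged. Second, the expansion $|M|_{NN}=m_{N+N\theta}-r\,(M^{NN})^{-1}\,c$, where $r$ and $c$ are the boxed row and column with the corner deleted. Since $M^{NN}$ is now block diagonal its inverse is block diagonal, and since $r$ and $c$ are supported only on residue-$\ell$ positions, the product $r\,(M^{NN})^{-1}\,c$ sees only the residue-$\ell$ block of $M^{NN}$. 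What remains is precisely the quasi-determinant of $(d_{\ell+a+b\theta})_{a,b=0}^{n}$ expanded about its lower-right corner, i.e.\ the asserted Wronski quasi-determinant; a direct index check (for instance $r$ has entries $d_{\ell+n+b\theta}$, which form the last row of that block) confirms the match. The gradation claim is then the observation that the residue $\ell$ of $N$ labels $\theta+1$ disjoint families, each a Wronski quasi-determinant in the reduced moments $\{d_i\}$.

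The step I expect to be the main obstacle is this collapse to a sub-block over a non-commutative ring: permutation invariance and block-diagonal inversion are routine for ordinary determinants but must be justified directly through the defining formula for quasi-determinants, with the boxed entry held fixed throughout, since quasi-determinants are neither multilinear nor sign-covariant under arbitrary permutations.
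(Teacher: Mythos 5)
Your proof is correct and follows essentially the same route as the paper: the paper likewise expands $H_{n(\theta+1)}$ via the defining formula $|M|_{NN}=m_{N+N\theta}-r\,(M^{NN})^{-1}c$, observes the sparsity pattern forced by the reduction (organizing the moment matrix into blocks built from the $d_i$'s), and concludes with ``the action of a permutation matrix gives the result.'' Your residue-class bookkeeping modulo $\theta+1$ and the explicit appeal to permutation invariance and block-diagonal inversion simply spell out that final step, and you treat all $\ell=0,1,\dots,\theta$ uniformly where the paper proves only $\ell=0$ and declares the rest similar.
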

\begin{proof}
We prove the quasi-determinant expressions for $H_{n(\theta+1)}$, and the others can be similarly verified.
Expanding $H_{n(\theta+1)}$ by definition, we have 
\begin{align*}
H_{n(\theta+1)}=d_{n+n\theta}-(v_0,v_1,\cdots,v_{n-1},\tilde{v}_{n})\left(
\begin{array}{cccc}
M_0&M_\theta&\cdots&\hat{M}_{n\theta}\\
M_1&M_{\theta+1}&\cdots&\hat{M}_{n\theta+1}\\
\vdots&\vdots&&\vdots\\
\tilde{M}_n&\tilde{M}_{\theta+n}&\cdots&\hat{\tilde{M}}_{n\theta+n}
\end{array}
\right)^{-1}\left(\begin{array}{c}
w_0\\w_1\\\vdots\\\tilde{w}_{n}
\end{array}
\right),
\end{align*}
where $\tilde{v}_n$ (resp. $\tilde{w}_n$) is a $\theta$ row (resp. column) zero vector, and $v_j$ (resp. $w_j$) are $\theta+1$ row (resp. column) vectors of the form
\begin{align*}
v_j=\left(d_{n+j\theta},0,\cdots,0
\right),\quad w_j=\left(\begin{array}{c}
d_{n\theta+j}\\\vdots\\0\\
0\end{array}
\right),
\end{align*}
and 
\begin{align*}
\hat{\tilde{M}}_j=\left(\begin{array}{cccc}
d_j&0&\cdots&0\\
0&d_{j+1}&\cdots&0\\
\vdots&\vdots&\ddots&\vdots\\
0&0&\cdots&d_{j+\theta-1}\end{array}
\right),\quad \hat{M}_j=\left(\begin{array}{c}
\hat{\tilde{M}}_j\\0\end{array}
\right),\quad \tilde{M}_j=\left(
\hat{\tilde{M}}_j,\,\, 0
\right),\quad M_j=\left(\begin{array}{cc}
\hat{\tilde{M}}_j&0\\
0&d_{j+\theta}
\end{array}
\right).
\end{align*}
The action of a permutation matrix gives the result.

\end{proof}
Therefore, we know that $\{H_n\}_{n\in\mathbb{N}}$ could be divided into $\theta+1$ different families of Wronski quasi-determinant. Let's denote
\begin{align*}
H_{n(\theta+1)+\ell}:=\tilde{H}_n^{(\ell)},\quad \ell=0,1,\cdots,\theta,\,n\in\mathbb{N}
\end{align*}
and if moments $\{d_i,\,i=0,1,\cdots\}$ satisfy evolutions $\p_t d_i=d_{i+\theta}$, we know that for each $\ell=0,\cdots,\theta$, $\{\tilde{H}_n^{(\ell)}\}_{n\in\mathbb{N}}$ are solutions of the non-commutative Blaszak-Marciniak and Kuperschmidt lattices based on Proposition \ref{propbt}.

Our next step is to find a connecting equation for these solutions. For this purpose, we consider how moment reduction influences the matrix-valued bi-orthogonal polynomials. 
\begin{theorem}
Under the moment reduction condition \eqref{mred}, we have
\begin{align*}
P_{n(\theta+1)+\ell}(x)=x^\ell\tilde{P}^{(\ell)}_n(x^{\theta+1}),\quad Q_{n(\theta+1)+\ell}=x^\ell\tilde{Q}_n^{(\ell)}(x^{\theta+1}),\quad \ell=0,1,\cdots,\theta,
\end{align*}
where
\begin{align*}
&\tilde{P}_n^{(\ell)}(x)=\left|\begin{array}{ccccc}
d_\ell&d_{\ell+\theta}&\cdots&d_{\ell+(n-1)\theta}&\mathbb{I}_p\\
d_{\ell+1}&d_{\ell+1+\theta}&\cdots&d_{\ell+1+(n-1)\theta}&x\mathbb{I}_p\\
\vdots&\vdots&&\vdots&\vdots\\
d_{\ell+n}&d_{\ell+n+\theta}&\cdots&d_{\ell+n+(n-1)\theta}&\boxed{x^n\mathbb{I}_p}
\end{array}
\right|,\\
&\tilde{Q}_n^{(\ell)}(x)=\left|\begin{array}{cccc}
d_\ell&d_{\ell+\theta}&\cdots&d_{\ell+n\theta}\\
\vdots&\vdots&&\vdots\\
d_{\ell+n-1}&d_{\ell+n-1+\theta}&\cdots&d_{\ell+n-1+n\theta}\\
\mathbb{I}_p&x\mathbb{I}_p&\cdots&\boxed{x^n\mathbb{I}_p}
\end{array}
\right|.
\end{align*}
Moreover,  the graded orthogonality reads
\begin{align}\label{or2}
\langle P_{n(\theta+1)+\ell}(x),Q_{m(\theta+1)+k}(x)\rangle_\theta=\tilde{H}_n^{(\ell)}\delta_{n,m}\delta_{\ell,k}.
\end{align}
\end{theorem}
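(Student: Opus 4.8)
The plan is to prove the factorization by exhibiting the claimed right-hand side as an explicit candidate and then invoking the uniqueness of the monic bi-orthogonal polynomials guaranteed by the moment condition, rather than manipulating the quasi-determinants \eqref{pn} and \eqref{qn} directly (though the latter permutation-matrix route, mirroring the proof of Proposition \ref{prop5.2}, is also available). The driving observation is a congruence: since $\theta\equiv-1\pmod{\theta+1}$, one has $i+j\theta\equiv i-j\pmod{\theta+1}$, so under the reduction \eqref{mred} the moment $m_{i+j\theta}$ is nonzero precisely when $i\equiv j\pmod{\theta+1}$, in which case $m_{i+j\theta}=d_{(i+j\theta)/(\theta+1)}$. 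This single fact controls the whole argument.

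First I would set $N=n(\theta+1)+\ell$ and define the candidate $\hat P_N(x):=x^\ell\tilde P_n^{(\ell)}(x^{\theta+1})$, which is monic of degree $N$ and contains only the powers $x^{\ell+a(\theta+1)}$, $a=0,\dots,n$, with left coefficients $c_a$ inherited from $\tilde P_n^{(\ell)}$. To identify $\hat P_N$ with $P_N$ it suffices to verify the $N$ conditions $\langle \hat P_N(x),x^j\mathbb{I}_p\rangle_\theta=0$ for $0\le j\le N-1$, which I split into two cases. When $j\not\equiv\ell\pmod{\theta+1}$, every power in $\hat P_N$ is $\equiv\ell\not\equiv j$, so each pairing vanishes by the congruence and the condition holds trivially. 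When $j=\ell+m(\theta+1)$ with $0\le m\le n-1$, the index identity $(\ell+a(\theta+1))+(\ell+m(\theta+1))\theta=(\theta+1)(\ell+a+m\theta)$ gives $\langle x^{\ell+a(\theta+1)},x^{\ell+m(\theta+1)}\rangle_\theta=d_{\ell+a+m\theta}$, hence $\langle \hat P_N,x^j\mathbb{I}_p\rangle_\theta=\sum_a c_a\,d_{\ell+a+m\theta}$. This is exactly the orthogonality relation satisfied by $\tilde P_n^{(\ell)}$ regarded as the $\theta$-deformed bi-orthogonal polynomial for the shifted moment sequence $\hat m_s=d_{\ell+s}$, whose quasi-determinant is \eqref{pn} with $m_s$ replaced by $d_{\ell+s}$, so it vanishes. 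Thus $\hat P_N$ meets all defining conditions and uniqueness forces $P_N=\hat P_N$. The argument for $Q_N$ is identical after transposing: one tests $\langle x^j\mathbb{I}_p,\hat Q_M\rangle_\theta$, uses the right-module relation in \eqref{bimodule} and the quasi-determinant \eqref{qn}, and the same congruence bookkeeping applies.

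For the graded orthogonality \eqref{or2} I would avoid recomputing and instead combine the general bi-orthogonality $\langle P_N,Q_M\rangle_\theta=H_N\delta_{N,M}$ with the identification $H_{n(\theta+1)+\ell}=\tilde H_n^{(\ell)}$ from Proposition \ref{prop5.2}. Writing $N=n(\theta+1)+\ell$ and $M=m(\theta+1)+k$ with $0\le\ell,k\le\theta$, the equality $N=M$ holds iff $n=m$ and $\ell=k$, so $\delta_{N,M}=\delta_{n,m}\delta_{\ell,k}$ and the right-hand side collapses to $\tilde H_n^{(\ell)}\delta_{n,m}\delta_{\ell,k}$, as claimed.

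The hard part will be the verification of orthogonality in the congruent case: one must check that the conditions $j\equiv\ell$ produce exactly the $n$ relations $m=0,\dots,n-1$ needed to pin down the degree-$n$ reduced polynomial (a short count using $j\le N-1$ confirms this), and that the index identity correctly converts the $\theta$-deformed pairing of $x$-powers into the $d$-moments. A secondary point to address is that $\tilde P_n^{(\ell)}$ and $\tilde Q_n^{(\ell)}$ are well defined, i.e. the shifted moment matrices are invertible; this follows from the moment condition of Definition \ref{def1.1} together with the block/permutation reduction already established in the proof of Proposition \ref{prop5.2}.
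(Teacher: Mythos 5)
Your proof is correct, but it takes a genuinely different route from the paper. The paper disposes of this theorem in one line, saying the proof ``is based on the same idea as in Prop.~\ref{prop5.2}, by making use of block matrices and permutations'': i.e.\ one expands the quasi-determinants \eqref{pn} and \eqref{qn}, observes that under \eqref{mred} the moment matrix becomes (after a permutation sorting indices by residue mod $\theta+1$) block diagonal with shifted Hankel blocks $(d_{\ell+i+j\theta})$, and reads off the factorization and \eqref{or2} directly from the reshuffled quasi-determinant. You instead exhibit $x^\ell\tilde P_n^{(\ell)}(x^{\theta+1})$ as a candidate and verify the defining orthogonality conditions \eqref{or-2}, then invoke uniqueness of the monic bi-orthogonal family. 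Your bookkeeping is sound: the congruence $i+j\theta\equiv i-j\pmod{\theta+1}$ correctly characterizes the surviving moments, the identity $(\ell+a(\theta+1))+(\ell+m(\theta+1))\theta=(\theta+1)(\ell+a+m\theta)$ correctly converts the pairing into the $d$-moments, and your count that $0\le j\le N-1$ with $j\equiv\ell$ yields exactly the $n$ conditions $m=0,\dots,n-1$ is right, so the candidate satisfies precisely the linear system \eqref{ls} for the shifted sequence $\hat m_s=d_{\ell+s}$, whose solution is encoded in $\tilde P_n^{(\ell)}$. Two things are worth noting. First, your uniqueness step silently needs invertibility of the shifted blocks $(d_{\ell+i+j\theta})_{i,j=0}^{n-1}$; you correctly flag this and reduce it to the block/permutation structure (taking the truncation of size $n(\theta+1)$, each residue class contributes exactly one such $n\times n$ block, so the moment condition of Definition~\ref{def1.1} does the job) --- but this means your route still borrows the paper's permutation observation for well-definedness, so it is not fully independent of it. Second, your derivation of \eqref{or2} from the ungraded orthogonality $\langle P_N,Q_M\rangle_\theta=H_N\delta_{N,M}$ together with Proposition~\ref{prop5.2}'s identification $H_{n(\theta+1)+\ell}=\tilde H_n^{(\ell)}$ and $\delta_{N,M}=\delta_{n,m}\delta_{\ell,k}$ is clean and non-circular, since the factorization was proved independently. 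What each approach buys: the paper's computation is uniform with Proposition~\ref{prop5.2} and produces the quasi-determinant formulas mechanically; yours is more conceptual, isolating the single congruence that drives the reduction, and it makes the gradation \eqref{span} of $\mathbb{R}^{p\times p}[x]$ transparent as a by-product rather than a consequence.
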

The proof of this theorem is based on the same idea as in Prop. \ref{prop5.2}, by making use of block matrices and permutations. Moreover, this theorem implies that there is a gradation of the module
\begin{align*}
\mathbb{R}^{p\times p}[x]= \mathbb{R}_0^{p\times p}[x]\oplus \mathbb{R}_1^{p\times p}[x]\oplus\cdots\oplus \mathbb{R}_\theta^{p\times p}[x],
\end{align*}
where
\begin{align*}
\mathbb{R}_\ell^{p\times p}[x]=\text{span}\{x^\ell\mathbb{I}_p,x^{\ell+\theta+1}\mathbb{I}_p,\cdots,x^{\ell+n(\theta+1)}\mathbb{I}_p,\cdots\},\quad \ell=0,1,\cdots,\theta.
\end{align*}
Therefore, we have the following important observation
\begin{align}\label{span}
P_{n(\theta+1)+\ell}(x)\in\mathbb{R}^{p\times p}_\ell[x],
\end{align}
from which the terms of recurrence relations would be dramatically decreased. We have the following proposition.
\begin{proposition}
Monic bi-orthogonal polynomials $\{P_n(x)\}_{n\in\mathbb{N}}$ satisfying the orthogonal relation \eqref{or2},  have recurrence relations
\begin{align}\label{ss1}
\begin{aligned}
x^\theta P_{n(\theta+1)}(x)&=P_{n(\theta+1)+\theta}(x)+\xi_{n,0}P_{(n-1)(\theta+1)+\theta}(x),\\
x^\theta P_{n(\theta+1)+\ell}(x)&=P_{(n+1)(\theta+1)+\ell-1}(x)+\xi_{n,\ell}P_{n(\theta+1)+\ell-1}(x),\quad \ell=1,\cdots,\theta,
\end{aligned}
\end{align}
where 
\begin{align*}
\xi_{n,\ell}=\left\{
\begin{array}{ll}
\tth_n^{(0)}\left(
\tth_{n-1}^{(\theta)}
\right)^{-1},& \ell=0,\\
\tth_n^{(\ell)}\left(
\tth_n^{(\ell-1)}
\right)^{-1},&\ell=1,\cdots,\theta
\end{array}
\right.
\end{align*}
The $\{Q_n(x)\}_{n\in\mathbb{N}}$,  satisfy
\begin{align}\label{ss2}
\begin{aligned}
xQ_{n(\theta+1)+\ell}&=Q_{n(\theta+1)+\ell+1}(x)+\eta_{n,\ell}Q_{(n-1)(\theta+1)+\ell+1},\quad \ell=0,\cdots,\theta-1,\\
xQ_{n(\theta+1)+\theta}&=Q_{(n+1)(\theta+1)}(x)+\eta_{n,\theta}Q_{n(\theta+1)},
\end{aligned}
\end{align}
where 
\begin{align*}
\eta_{n,\ell}^\top=\left\{
\begin{array}{ll}
\left(\tth_{n-1}^{(\ell+1)}\right)^{-1}\tth_n^{(\ell)},& \ell=0,\cdots,\theta-1,\\
\left(\tth_n^{(0)}\right)^{-1}\tth_n^{(\theta)},&\ell=\theta.
\end{array}
\right.
\end{align*}
\end{proposition}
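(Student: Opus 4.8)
The plan is to exploit the gradation \eqref{span} together with the graded orthogonality \eqref{or2} in order to collapse the full recurrence relations \eqref{re-1}--\eqref{re-2}, which a priori carry $\theta+1$ terms, into the two-term relations \eqref{ss1}--\eqref{ss2}. The crucial bookkeeping is how multiplication by $x^\theta$ (for $P$) and by $x$ (for $Q$) acts on the graded pieces $\mathbb{R}_\ell^{p\times p}[x]=\mathrm{span}\{x^{\ell+k(\theta+1)}\mathbb{I}_p\}_{k\ge0}$. Writing $\ell+\theta=(\ell-1)+(\theta+1)$, one sees that $x^\theta$ carries $\mathbb{R}_\ell^{p\times p}[x]$ into $\mathbb{R}_{\ell-1}^{p\times p}[x]$ with a block shift $k\mapsto k+1$ when $\ell\geq1$, whereas for $\ell=0$ it wraps into $\mathbb{R}_\theta^{p\times p}[x]$ while preserving the block index. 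This is precisely why the relation for $P_{n(\theta+1)}$ couples grade $0$ to grade $\theta$ at block levels $n$ and $n-1$, while the relations for $\ell\geq1$ couple grade $\ell$ to grade $\ell-1$ at block levels $n+1$ and $n$.

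For the $P$-relations I would run the classical three-term argument inside each graded piece. Fix $\ell\geq1$; since $x^\theta P_{n(\theta+1)+\ell}(x)$ and $P_{(n+1)(\theta+1)+\ell-1}(x)$ are both monic of degree $(n+1)(\theta+1)+\ell-1$ and both lie in $\mathbb{R}_{\ell-1}^{p\times p}[x]$ by the shift just described, their difference lies in $\mathbb{R}_{\ell-1}^{p\times p}[x]$ and has degree at most $n(\theta+1)+\ell-1$. Hence it expands on the left in the monic basis $\{P_{m(\theta+1)+\ell-1}\}_{m\leq n}$ of that piece, and the coefficient of $P_{m(\theta+1)+\ell-1}$ equals $\langle x^\theta P_{n(\theta+1)+\ell},Q_{m(\theta+1)+\ell-1}\rangle_\theta(\tth_m^{(\ell-1)})^{-1}$ by \eqref{or2}. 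Applying the quasi-symmetry \eqref{qs} turns this into $\langle P_{n(\theta+1)+\ell},xQ_{m(\theta+1)+\ell-1}\rangle_\theta(\tth_m^{(\ell-1)})^{-1}$; since $xQ_{m(\theta+1)+\ell-1}$ has degree $m(\theta+1)+\ell<n(\theta+1)+\ell$ for $m<n$, the orthogonality of $P_{n(\theta+1)+\ell}$ annihilates every coefficient with $m<n$, leaving only $m=n$ and hence the two-term relation. The $\ell=0$ case is identical except that the wraparound sends the expansion into the grade-$\theta$ basis and drops the block index, producing the coupling to $P_{(n-1)(\theta+1)+\theta}$.

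To identify the surviving coefficient with $\xi_{n,\ell}$, I would re-expand $xQ_{n(\theta+1)+\ell-1}$ (respectively $xQ_{(n-1)(\theta+1)+\theta}$ when $\ell=0$), which is monic of degree $n(\theta+1)+\ell$ and lies in $\mathbb{R}_\ell^{p\times p}[x]$, in the $Q$-basis of that piece: its leading term is $Q_{n(\theta+1)+\ell}$ and all lower terms are orthogonal to $P_{n(\theta+1)+\ell}$ by \eqref{or2}. Thus $\langle P_{n(\theta+1)+\ell},xQ_{n(\theta+1)+\ell-1}\rangle_\theta=\langle P_{n(\theta+1)+\ell},Q_{n(\theta+1)+\ell}\rangle_\theta=\tth_n^{(\ell)}$, giving $\xi_{n,\ell}=\tth_n^{(\ell)}(\tth_n^{(\ell-1)})^{-1}$, and similarly $\xi_{n,0}=\tth_n^{(0)}(\tth_{n-1}^{(\theta)})^{-1}$. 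The $Q$-relations \eqref{ss2} follow by the mirror-image argument: one multiplies by $x$ rather than $x^\theta$, uses the other side of \eqref{qs}, and extracts the left coefficients $\eta_{n,\ell}$ by pairing in the first slot, where the bimodule identity \eqref{bimodule} converts a left factor in the second slot into a right factor carrying a transpose, which is exactly the source of the transpose in the stated formula for $\eta_{n,\ell}^\top$.

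The main obstacle is not any one computation but the consistent tracking of the grading shifts alongside the two different block-index behaviours, and in particular the correct handling of the two wraparound cases ($\ell=0$ for $P$ and $\ell=\theta$ for $Q$), where the grade index cycles modulo $\theta+1$ while the block index jumps by one; getting the degree count and the surviving neighbour exactly right in these boundary cases is where errors are most likely to creep in. A secondary subtlety is keeping straight that the $P_n$ expand with left coefficients whereas the $Q_n$ must be treated through the transpose-twisted right action of \eqref{bimodule}, so that the normalizations emerge as $\tth_n^{(\ell)}$ rather than their transposes and the asymmetry between $\xi_{n,\ell}$ and $\eta_{n,\ell}^\top$ is faithfully reproduced.
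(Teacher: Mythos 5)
Your proposal is correct and follows essentially the same route as the paper's proof: split into the wraparound case $\ell=0$ and the cases $\ell=1,\dots,\theta$, expand $x^\theta P_{n(\theta+1)+\ell}$ within the appropriate graded piece, use the quasi-symmetry \eqref{qs} together with the graded orthogonality \eqref{or2} to annihilate all but the neighbouring coefficient, and identify the survivor via the monic leading term (with the mirror argument and the transpose-twisted bimodule identity handling the $Q$-side). The only difference is presentational—you subtract the monic leading term before expanding and spell out the degree counts and the source of the transpose in $\eta_{n,\ell}^\top$, which the paper leaves implicit.
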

\begin{proof}
When $x^\theta$ acts on $P_{n(\theta+1)+\ell}(x)$, there are two different cases based on whether $\ell+\theta\leq \theta$ or $\ell+\theta>\theta$ due to the gradation. Therefore, we should discuss $\ell=0$ and $\ell=1,\cdots,\theta$ seperately. When $\ell=0$, then
\begin{align*}
x^\theta P_{n(\theta+1)}(x)=P_{n(\theta+1)+\theta}(x)+\sum_{k=0}^{n-1}\xi_{n,0}^{(k)}P_{k(\theta+1)+\theta}(x),
\end{align*}
where
\begin{align*}
\xi_{n,0}^{(k)}=\langle x^\theta P_{n(\theta+1)}(x),Q_{k(\theta+1)+\theta}(x)\rangle_\theta \left(\tth_k^{(\theta)}\right)^{-1}=\langle P_{n(\theta+1)}(x),xQ_{k(\theta+1)+\theta}(x)\rangle_\theta \left(\tth_k^{(\theta)}\right)^{-1},
\end{align*}
and the last step here is due to the quasi-symmetry \eqref{qs}. Therefore, according to the graded orthogonality  \eqref{or2}, only when $k=n-1$ is the bilinear form  nonzero and we get 
\begin{align*}
\xi_{n,0}^{(k)}=\left\{\begin{array}{ll}
0,&k=0,\cdots,n-2,\\
\tth_{n}^{(0)}\left(\tth_{n-1}^{(\theta)}\right)^{-1},&k=n-1.\end{array}\right.
\end{align*}
When $\ell=1,\cdots,\theta$, by similar discussion, we have
\begin{align*}
x^\theta P_{n(\theta+1)+\ell}(x)=P_{(n+1)(\theta+1)+\ell-1}(x)+\sum_{k=0}^n \xi_{n,\ell}^{(k)}P_{k(\theta+1)+\ell-1}(x),
\end{align*}
and 
\begin{align*}
\xi_{n,\ell}^{(k)}=\left\{\begin{array}{ll}
0&k=0,\cdots,n-1,\\
\tth_n^{(\ell)}\left(
\tth_n^{(\ell-1)}
\right)^{-1}&k=n.
\end{array}
\right.
\end{align*}
Therefore, we can ignore the upper index and the proof is complete for $\{P_n(x)\}_{n\in\mathbb{N}}$. The proof for $\{Q_n(x)\}_{n\in\mathbb{N}}$ is similar.
\end{proof}
\begin{remark}
The corresponding spectral problems \eqref{ss1} and \eqref{ss2} can  be characterized by
\begin{subequations}
\begin{align}
&x^\theta \Phi=(\Lambda^\theta+\mathcal{A}\Lambda^{-1})\Phi:=\mathcal{L}\Phi, &\mathcal{A}=\text{diag}(\xi_{0,0},\cdots,\xi_{0,\theta},\xi_{1,0},\cdots,\xi_{1,\theta},\cdots),\label{sp1}\\
&x\Psi=(\Lambda+\mathcal{B}\Lambda^{-\theta})\Psi:=\mathcal{M}\Psi,&\mathcal{B}=\text{diag}(\eta_{0,0},\cdots,\eta_{0,\theta},\eta_{1,0},\cdots,\eta_{1,\theta},\cdots).\label{sp2}
\end{align}
\end{subequations}
\end{remark}
Time evolutions are then considered. Considering the time-dependent weight function \eqref{td}, we have the following proposition.
\begin{proposition}
Only $t_{c(\theta+1)}$-flows ($c=1,2,\cdots$) are compatible with the moment reduction condition \eqref{mred}. In other words, if $k \,\text{mod} \,(\theta+1)\ne0$,
\begin{align*}
\p_{t_k} P_n(x;\mt)=\p_{t_k}Q_n(x;\mt)=0,
\end{align*}
i.e. the wave function doesn't evolve with respect to $t_k$-flows if $k \,\text{mod} \,(\theta+1)\ne0$.
\end{proposition}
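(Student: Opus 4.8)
The plan is to argue entirely inside the moment-reduced system. By \eqref{mred} the only nonvanishing moments are $d_\ell=m_{\ell(\theta+1)}$, every $m_s$ with $(\theta+1)\nmid s$ being identically zero, and by the preceding theorem the reduced wave functions and normalisation factors are quasi-determinants whose entries are exactly these $d_\ell$. The statement then follows from tracking how the index-shifting flow $\p_{t_k}m_s=m_{s+k}$ of \eqref{momentevo} interacts with the residue of the index modulo $\theta+1$.

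First I would record the decisive congruence. A moment $m_{i+j\theta}$ survives the reduction exactly when $i+j\theta\equiv0\pmod{\theta+1}$, i.e.\ (since $\theta\equiv-1\pmod{\theta+1}$) when $i\equiv j\pmod{\theta+1}$; equivalently $m_s$ can be nonzero only for $s$ in the sublattice $(\theta+1)\mathbb{Z}$. Differentiating a surviving moment gives $\p_{t_k}d_\ell=\p_{t_k}m_{\ell(\theta+1)}=m_{\ell(\theta+1)+k}$. When $(\theta+1)\nmid k$ the index $\ell(\theta+1)+k$ is no longer divisible by $\theta+1$, so \eqref{mred} forces this moment to vanish and hence $\p_{t_k}d_\ell=0$ for every $\ell$. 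Thus on the reduced configuration space parametrised by the $d_\ell$, any flow $t_k$ with $k\not\equiv0\pmod{\theta+1}$ freezes all surviving moments.

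Next I would propagate this to the polynomials. By the preceding theorem, under \eqref{mred} one has $P_{n(\theta+1)+\ell}(x)=x^\ell\tilde{P}_n^{(\ell)}(x^{\theta+1})$ and $Q_{n(\theta+1)+\ell}(x)=x^\ell\tilde{Q}_n^{(\ell)}(x^{\theta+1})$, where $\tilde{P}_n^{(\ell)}$, $\tilde{Q}_n^{(\ell)}$ and $\tilde{H}_n^{(\ell)}$ are quasi-determinants in the $d_\ell$. Since a quasi-determinant is a rational function of its entries, the chain rule together with $\p_{t_k}d_\ell=0$ yields $\p_{t_k}\tilde{P}_n^{(\ell)}=\p_{t_k}\tilde{Q}_n^{(\ell)}=0$, whence $\p_{t_k}P_n=\p_{t_k}Q_n=0$, the asserted non-evolution. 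For the complementary compatibility claim I would note that when $(\theta+1)\mid k$ the shift $s\mapsto s+k$ preserves both the sublattice $(\theta+1)\mathbb{Z}$ and its complement, so \eqref{mred} is stable and one obtains the genuine reduced evolution $\p_{t_{c(\theta+1)}}d_\ell=d_{\ell+c}$; whereas for $(\theta+1)\nmid k$ the same shift carries surviving indices off the sublattice, so the reduction cannot persist as a dynamical constraint, consistently with the freezing above.

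The step demanding the most care is conceptual rather than computational. One must insist that ``$P_n$ does not evolve'' is a statement inside the reduced system, in which the off-lattice moments are set to zero identically and not merely along one time-slice. Read in the full, unreduced moment space a flow such as $t_1$ would indeed move $P_n$, because the off-lattice moments become nonzero immediately; this is precisely the failure of \eqref{mred} to persist that the incompatibility half of the proposition records. Making this distinction explicit, and invoking the differentiability of quasi-determinants in their entries to pass from frozen moments to frozen wave functions, is the crux; the congruence bookkeeping itself is immediate.
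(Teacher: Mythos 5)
Your proof is correct and takes essentially the same approach as the paper: the paper's proof is exactly your congruence computation, namely that $\p_{t_k}d_\ell=m_{\ell(\theta+1)+k}$ vanishes under \eqref{mred} whenever $k\not\equiv 0\pmod{\theta+1}$, so the surviving moments, and hence the coefficients of $P_n$ and $Q_n$, are frozen. Your additional remarks — passing from frozen moments to frozen wave functions via the quasi-determinant expressions, and the caveat that the non-evolution statement is to be read inside the reduced system — simply make explicit details the paper leaves implicit.
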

\begin{proof}
From the expression for the  time-dependent weight function \eqref{td}, we know that $\p_{t_k}m_{i}=m_{i+k}$. With the moment reduction condition, it is known that if $i=0$ mod ($\theta+1$), then $i+k =k$ mod $(\theta+1)$. Therefore, only when $k=0$ mod ($\theta+1$) do we have $\p_{t_k}d_i\ne0$, and the corresponding derivative of the coefficient doesn't vanish.
\end{proof}
Therefore, if we take the $t_{c(\theta+1)}$-derivative of the orthogonal relation $\left(\langle \Phi,\Psi\rangle_\theta\right)=\mathcal{H}$, we have
\begin{align*}
\left(
\langle \p_{t_{c(\theta+1)}}\Phi,\Psi\rangle_\theta
\right)+\left(
\langle \Phi,\p_{t_{c(\theta+1)}}\Psi\rangle_\theta
\right)+\left(\langle
x^{c(\theta+1)}\Phi,\Psi\rangle_\theta
\right)=\p_{t_{c(\theta+1)}}\mathcal{H}.
\end{align*}
Moreover, if $c$ is a multiplier of $\theta$, then we get
\begin{subequations}
\begin{align}
\p_{t_{k\theta(\theta+1)}}\Phi&=-(\mathcal{L}^{k(\theta+1)})_{<0}\Phi,\label{time1}\\
\p_{t_{k\theta(\theta+1)}}\Psi&=-(\mathcal{M}^{k(\theta+1)})_{<0}\Psi,\label{time2}
\end{align}
\end{subequations}
by following the proof of Proposition \ref{prop4}.
The first non-trivial flow is $t_{\theta(\theta+1)}$ flow, which could be read by the compatibility conditions of \eqref{sp1}-\eqref{time1}, and \eqref{sp2}-\eqref{time2}. Thus we get
\begin{subequations}
\begin{align}
\p_{t_{\theta(\theta+1)}}\mathcal{L}&=\left[\mathcal{L},(\mathcal{L}^{\theta+1})_{<0}\right],\label{inb1}\\
\p_{t_{\theta(\theta+1)}}\mathcal{M}&=\left[\mathcal{M},(\mathcal{M}^{\theta+1})_{<0}\right].\label{inb2}
\end{align}
\end{subequations}

By realizing that
\begin{align*}
\left((\Lambda^\theta+\mathcal{A}\Lambda^{-1})^{\theta+1}\right)
_{<0}=\left((\mathcal{A}\Lambda^{-1})^{\theta+1}\right)_{<0},
\end{align*}
we get the equation expressed in terms of elements in $\mathcal{A}$ and
\begin{align*}
\p_{t_{\theta(\theta+1)}}(\mathcal{A}\Lambda^{-1})=\left[\Lambda^\theta,(\mathcal{A}\Lambda^{-1})^{\theta+1}
\right].
\end{align*}
Moreover, if we denote $\xi_{i,j}=\gamma_{i(\theta+1)+j}$, then equation \eqref{inb1} could be explicitly written as
\begin{align}\label{bog1}
\p_{t_{\theta(\theta+1)}}\gamma_n=
\gamma_{n+\theta}\cdots\gamma_n-\gamma_n\cdots\gamma_{n-\theta}.
\end{align}
On the other hand, from equations \eqref{inb2}, we obtain the equation in terms of $\mathcal{B}$ and 
\begin{align*}
\p_{t_{\theta(\theta+1)}}\mathcal{B}\Lambda^{-\theta}=\sum_{{i+j=\theta-1 \atop i,j\geq 0}}\Lambda^{i+1}(\mathcal{B}\Lambda^{-\theta})\Lambda^j (\mathcal{B}\Lambda^{-\theta})-\sum_{{i+j=\theta-1 \atop i,j\geq 0}}(\mathcal{B}\Lambda^{-\theta})\Lambda^i (\mathcal{B}\Lambda^{-\theta})\Lambda^{j+1}.
\end{align*}
Explicitly, one has
\begin{align}\label{bog2}
\p_{t_{\theta(\theta+1)}}\zeta_n=\left(
\sum_{i=1}^\theta \zeta_{n+i}
\right)\zeta_n-\zeta_n\left(
\sum_{i=1}^\theta \zeta_{n-i}
\right)
\end{align}
where $\eta_{i,j}=\zeta_{i(\theta+1)+j}$.

In the literature, the commutative versions of equations \eqref{bog1} and \eqref{bog2} are called the Itoh-Narita-Bogoyavleskii (INB) lattices \cite{bogoyavlensky91, casati21, itoh75, narita82}, as additive and multiplicative extensions of the  Lotka-Volterra lattice. In \cite{casati21}, the Hamiltonian structure of non-commutative INB lattice was studied recently. 

%subsection 4.2
\subsection{The B\"acklund transformation for Blaszak-Marciniak three-field equations and its quasi-determinant solutions}\label{sec4.2}
In this section, let us consider the $\theta=2$ case, which gives a B\"acklund transformation for the three-field equations \eqref{nc-bm-2-1}-\eqref{nc-bm-2-3}. It reads from \eqref{bog1} that\footnote{Here, we use $t$ instead of $t_6$ to make the writing simpler.}
\begin{align}\label{bo1}
\p_{t}\gamma_n=\gamma_{n+2}\gamma_{n+1}\gamma_n-\gamma_n\gamma_{n-1}\gamma_{n-2},
\end{align}
and we have the following proposition.
\begin{proposition}
The non-commutative INB lattice \eqref{bo1} has the solutions
\begin{align*}
\gamma_{3n}=\tth_n^{(0)}\left(\tth_{n-1}^{(2)}\right)^{-1},\quad \gamma_{3n+1}=\tth_n^{(1)}\left(\tth_n^{(0)}\right)^{-1},\quad \gamma_{3n+2}=\tth_n^{(2)}\left(
\tth_n^{(1)}
\right)^{-1},
\end{align*}
whose moments $\{d_i\}_{i\in\mathbb{N}}$ satisfy time evolutions $\p_{t}d_i=d_{i+2}$.
\end{proposition}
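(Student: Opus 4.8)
The plan is to verify the proposition directly by computing the $t$-derivatives of the three families of ratios and matching them against the INB lattice \eqref{bo1}. Since the variables $\gamma_{3n},\gamma_{3n+1},\gamma_{3n+2}$ are expressed as left-quotients of the Wronski quasi-determinants $\tth_n^{(\ell)}=H_{n(\theta+1)+\ell}$ (with $\theta=2$), built from the reduced moments $d_i$ evolving by $\p_t d_i=d_{i+2}$, the natural engine is the quasi-determinant derivative formula together with the non-commutative Jacobi identity \eqref{ncj1}, exactly as deployed in the lemmas preceding the three-field theorem. First I would record the analogues of formulas \eqref{pt-xi}, \eqref{pt-H} and \eqref{ddd} in the reduced setting: that is, I would express $\p_t \tth_n^{(\ell)}$ in terms of quasi-determinants of the same order but with a shifted bordering column, and then rewrite those shifted objects via Jacobi identities as products of neighbouring $\tth$'s.

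The key technical step is a set of \emph{homological relations} in the spirit of \eqref{eta-z}, linking the once-differentiated $\tth_n^{(\ell)}$ (whose last bordering entry is $d_{\ell+n+n\theta+\theta}$ rather than $d_{\ell+n+n\theta}$) to products of adjacent quasi-determinants across the three grades. Concretely, since $\theta=2$ and the grading index $\ell$ runs over $\{0,1,2\}$, the derivative of $\tth_n^{(\ell)}$ should split according to whether $\ell+2$ stays within a single grade or wraps around modulo $\theta+1=3$, producing expressions like $\tth_{n}^{(\ell+2)}(\tth_n^{(\ell+1)})^{-1}\tth_n^{(\ell)}$ up to index corrections. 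Once these are in hand, I would substitute the three quotient definitions of $\gamma_{3n+r}$ into the logarithmic-derivative identity $\p_t(AB^{-1})=(\p_t A)B^{-1}-AB^{-1}(\p_t B)B^{-1}$ and collect terms; the telescoping of the numerator and denominator shifts should reproduce the cubic right-hand side $\gamma_{n+2}\gamma_{n+1}\gamma_n-\gamma_n\gamma_{n-1}\gamma_{n-2}$ of \eqref{bo1} after the product of three consecutive quotients collapses by cancellation of the intermediate factors.

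I expect the main obstacle to be bookkeeping in the non-commutative setting: because quasi-determinants do not commute and the Jacobi identity \eqref{ncj1} fixes the left/right placement of inverse factors, one must be scrupulous that every shifted quasi-determinant is rewritten with its inverse on the correct side so that the triple products $\gamma_{n+2}\gamma_{n+1}\gamma_n$ assemble with the intended ordering and the middle factors genuinely cancel. A secondary subtlety is handling the three cases $r\in\{0,1,2\}$ uniformly — in particular the ``wrap-around'' case $\gamma_{3n}=\tth_n^{(0)}(\tth_{n-1}^{(2)})^{-1}$, where the grade index drops by one while $n$ decreases, so the index shifts do not match the other two cases verbatim. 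Rather than treat each case by a separate lengthy computation, I would prove one representative identity in full detail and indicate that the remaining two follow by the identical Jacobi-identity manipulation with shifted indices, since this proposition is structurally the graded counterpart of the already-verified three-field Theorem and its supporting lemmas.
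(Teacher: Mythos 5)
Your plan coincides essentially step for step with the paper's own verification: substitute the quotients into \eqref{bo1} so that the cubic products telescope to the two-term relations \eqref{sub5-1}--\eqref{sub5-3}, rewrite these via the logarithmic-derivative identity $\p_t(AB^{-1})=(\p_t A)B^{-1}-AB^{-1}(\p_t B)B^{-1}$ into the form $(\tth_n^{(\ell)})^{-1}\p_t\tth_n^{(\ell)}$, invoke the reduced-moment analogues of \eqref{pt-H} and \eqref{eta-z} (which carry over verbatim since each $\tth_n^{(\ell)}$ is a Hankel-type quasi-determinant in the $d_i$ with $\p_t d_i=d_{i+2}$), and close with homological identities for the graded $\teta^{(\ell)}_{n+1,n}$ proved by the non-commutative Jacobi identity \eqref{ncj1}. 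The one caveat is that the paper requires two structurally different Jacobi computations rather than a single representative case with shifted indices — the wrap-around identity $\teta^{(0)}_{n+1,n}=\teta^{(2)}_{n,n-1}-\bigl(\tth_n^{(0)}\bigr)^{-1}\tth_n^{(2)}$ (which additionally uses the homological relation \eqref{hm1}) versus the uniform identity $\teta^{(\ell)}_{n+1,n}-\teta^{(\ell+1)}_{n+1,n}=\bigl(\tth_n^{(\ell+1)}\bigr)^{-1}\tth_{n+1}^{(\ell)}$ for $\ell=0,1$ — but this is precisely the subtlety you flagged, so it is a matter of carrying out the bookkeeping rather than a gap in the approach.
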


If we take solutions into the equation \eqref{bo1}, it is necessary to verify that
\begin{subequations}
\begin{align}
\p_t\left(
\tth_n^{(0)}\left(\tth_{n-1}^{(2)}\right)^{-1}
\right)&=\tth_n^{(2)}\left(
\tth_{n-1}^{(2)}
\right)^{-1}-\tth_n^{(0)}\left(\tth_{n-1}^{(0)}\right)^{-1},\label{sub5-1}\\
\p_t\left(
\tth_n^{(1)}\left(\tth_{n}^{(0)}\right)^{-1}
\right)&=\tth_{n+1}^{(0)}\left(
\tth_{n}^{(0)}
\right)^{-1}-\tth_{n}^{(1)}\left(\tth_{n-1}^{(1)}\right)^{-1},\label{sub5-2}\\
\p_t\left(
\tth_n^{(2)}\left(\tth_{n}^{(1)}\right)^{-1}
\right)&=\tth_{n+1}^{(1)}\left(
\tth_{n}^{(1)}
\right)^{-1}-\tth_n^{(2)}\left(\tth_{n-1}^{(2)}\right)^{-1}.\label{sub5-3}
\end{align}
\end{subequations}
By simplifying \eqref{sub5-1}, one gets
\begin{align}\label{altt1}
\left(
\tth_n^{(0)}
\right)^{-1}\p_t \tth_n^{(0)}-\left(\tth_{n-1}^{(2)}
\right)^{-1}\p_t \tth_{n-1}^{(2)}=\left(
\tth_n^{(0)}
\right)^{-1}\tth_n^{(2)}-\left(
\tth_{n-1}^{(0)}
\right)^{-1}H_{n-1}^{(2)}.
\end{align}
According to equations \eqref{pt-H} and \eqref{eta-z}, we know that if we introduce the notation
\begin{align*}
\teta_{n+1,n}^{(\ell)}=\left|\begin{array}{ccccc}
d_\ell&d_{\ell+2}&\cdots&d_{\ell+2n}&d_{\ell+2n+2}\\
d_{\ell+1}&d_{\ell+3}&\cdots&d_{\ell+2n+1}&d_{\ell+2n+3}\\
\vdots&\vdots&&\vdots&\vdots\\
d_{\ell+n}&d_{\ell+n+2}&\cdots&d_{\ell+3n}&d_{\ell+3n+2}\\
0&0&\cdots&\mathbb{I}_p&\boxed{0}\end{array}
\right|,
\end{align*}
then \eqref{altt1} is equal to
\begin{align*}
\teta_{n,n-1}^{(2)}-\teta_{n+1,n}^{(0)}-\left(
\teta_{n-1,n-2}^{(2)}-\teta_{n,n-1}^{(0)}
\right)=\left(
\tth_n^{(0)}
\right)^{-1}\tth_n^{(2)}-\left(
\tth_{n-1}^{(0)}
\right)^{-1}\tth_{n-1}^{(2)}.
\end{align*}
Moreover, from the next proposition, we know that \eqref{sub5-1} is valid.
\begin{proposition}
It holds that
\begin{align*}
\teta_{n+1,n}^{(0)}=\teta_{n,n-1}^{(2)}-\left(
\tth_n^{(0)}
\right)^{-1}\tth_n^{(2)}.
\end{align*}
\end{proposition}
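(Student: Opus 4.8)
The identity to be proved is purely algebraic: no time derivative enters, and $\teta_{n+1,n}^{(0)}$, $\teta_{n,n-1}^{(2)}$, $\tth_n^{(0)}$, $\tth_n^{(2)}$ are all block quasi-determinants in the reduced moments $\{d_i\}$ (the $\tth$'s being the Wronski quasi-determinants of Proposition~\ref{prop5.2}). The plan is to prove it with a single application of the non-commutative Jacobi identity~\eqref{ncj1} to $\teta_{n+1,n}^{(0)}$, viewed as an $(n+2)$-order quasi-determinant, exactly in the spirit of the proofs of~\eqref{eta-z} and~\eqref{ddd}, followed by one auxiliary quasi-Pl\"ucker identity.

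Concretely, I would expand $\teta_{n+1,n}^{(0)}$ by~\eqref{ncj1} using the last $d$-row (the row $d_n,\dots,d_{3n},d_{3n+2}$) and the special bottom row $(0,\dots,0,\mathbb{I}_p,\boxed{0})$ as the two distinguished rows, and the first column together with the boxed column as the two distinguished columns. Deleting the distinguished $d$-row and the first column collapses the main term to $\teta_{n,n-1}^{(2)}$ (the surviving $d$-entries are shifted up by $\theta=2$), while the right-hand cofactor is exactly $\tth_n^{(2)}$. This yields
\begin{align*}
\teta_{n+1,n}^{(0)}=\teta_{n,n-1}^{(2)}-\Pi\,\tth_n^{(2)},
\end{align*}
where $\Pi$ is the product of the two quasi-determinants flanking the inverse in~\eqref{ncj1}, both built from the columns $d_{r},d_{r+2},\dots,d_{r+2n}$ with the box placed in the first column.

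The main obstacle is to show $\Pi=\big(\tth_n^{(0)}\big)^{-1}$. Here $\Pi$ is a left quasi-Pl\"ucker coordinate: a quotient $L\,\mathcal{M}^{-1}$ of two quasi-determinants of the single block $\big(d_{r+2c}\big)_{0\le r,c\le n}$ whose boxes lie in the same (last) row and in the first column, one of the two being formed by replacing the last row with the special $\mathbb{I}_p$-row. Using the inverse-matrix characterisation of quasi-determinants, $|G|_{ij}^{-1}=(G^{-1})_{ji}$, together with the invariance of quasi-determinants under permutations of their non-boxed rows and columns, this quotient equals the inverse of the quasi-determinant of the same block with the box moved to the last column, that is $\big(\tth_n^{(0)}\big)^{-1}$. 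The delicate part throughout is the index bookkeeping, since the moment reduction~\eqref{mred} turns the grading shift $\ell=0\mapsto 2$ into a deletion of the first column (a shift by the integer $\theta=2$), so one must track column indices precisely to recognise the surviving blocks as genuine $\tth_n^{(0)}$, $\tth_n^{(2)}$ and $\teta_{n,n-1}^{(2)}$ rather than reordered or re-boxed variants. Equivalently, one may first invoke the reduced form $\teta_{n+1,n}^{(0)}=-\big(\tth_n^{(0)}\big)^{-1}Z_{n,n}^{(0)}$ of~\eqref{eta-z} and reduce the claim to the column identity $\tth_n^{(0)}\,\teta_{n,n-1}^{(2)}=\tth_n^{(2)}-Z_{n,n}^{(0)}$, which can be verified directly at $n=1$ to pin down all signs and the side on which the inverse acts.
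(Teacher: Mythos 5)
Your proposal is correct and takes essentially the same route as the paper: the paper's proof likewise expands $\teta_{n+1,n}^{(0)}$ by the non-commutative Jacobi identity \eqref{ncj1} with exactly your choice of distinguished rows (the last $d$-row and the $\mathbb{I}_p$-row) and columns (the first and the boxed one), recognising the main term as $\teta_{n,n-1}^{(2)}$ and the right cofactor as $\tth_n^{(2)}$. The only cosmetic difference is in the auxiliary step: where you re-derive the flanking product $\Pi=\bigl(\tth_n^{(0)}\bigr)^{-1}$ from the inverse characterisation $|G|_{ij}^{-1}=(G^{-1})_{ji}$, the paper simply cites the homological relation \eqref{hm1}, which is precisely the quasi-Pl\"ucker identity you describe (your derivation is a valid proof of that relation in the case needed).
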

\begin{proof}
According to the identity \eqref{ncj1} to $(n,n+1)$-rows and $(1,n+1)$-columns, we have
\begin{align*}
\teta_{n+1,n}^{(0)}&=\left|\begin{array}{cccc}
d_2&\cdots&d_{2n}&d_{2n+2}\\
d_3&\cdots&d_{2n+1}&d_{2n+3}\\
\vdots&&\vdots&\vdots\\
d_{n+2}&\cdots&d_{3n}&d_{3n+2}\\
0&\cdots&\mathbb{I}_p&\boxed{0}
\end{array}
\right|\\&-\left|\begin{array}{cccc}
d_0&d_2&\cdots&d_{2n}\\
d_1&d_3&\cdots&d_{2n+1}\\
\vdots&\vdots&&\vdots\\
d_{n-1}&d_{n+1}&\cdots&d_{3n-1}\\
\boxed{0}&0&\cdots&\mathbb{I}_p
\end{array}\right|\left|\begin{array}{cccc}
d_0&d_2&\cdots&d_{2n}\\
d_1&d_3&\cdots&d_{2n+1}\\
\vdots&\vdots&&\vdots\\
d_{n-1}&d_{n+1}&\cdots&d_{3n-1}\\
\boxed{d_n}&d_{n+2}&\cdots&d_{3n}
\end{array}
\right|^{-1}\left|\begin{array}{cccc}
d_2&\cdots&d_{2n}&d_{2n+2}\\
d_3&\cdots&d_{2n+1}&d_{2n+3}\\
\vdots&&\vdots&\vdots\\
d_{n+1}&\cdots&d_{3n-1}&d_{3n+1}\\
d_{n+2}&\cdots&d_{3n}&\boxed{d_{3n+2}}
\end{array}
\right|.
\end{align*}
Moreover, by applying the homological relation \eqref{hm1}, this proof is complete.
\end{proof}
We note that  proving  equations \eqref{sub5-2} and \eqref{sub5-3}, is equivalent to verifying
\begin{align*}
\left(
H_n^{(\ell+1)}
\right)^{-1}\p_t H_n^{(\ell+1)}-\left(
H_n^{(\ell)}
\right)^{-1}\p_t H_n^{(\ell)}=\left(
H_n^{(\ell+1)}
\right)^{-1}H_{n+1}^{(\ell)}-\left(
H_{n-1}^{(\ell+1)}
\right)^{-1}H_n^{(\ell)},\quad \ell=0,1.
\end{align*}
By using \eqref{pt-H} and \eqref{eta-z},
it is known that we only need to prove the following proposition.
\begin{proposition}
It holds that
\begin{align*}
\teta_{n+1,n}^{(\ell)}-\teta_{n+1,n}^{(\ell+1)}=\left(
\tth_{n}^{(\ell+1)}
\right)^{-1}\tth_{n+1}^{(\ell)},\quad \ell=0,1.
\end{align*}
\end{proposition}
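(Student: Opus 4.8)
The plan is to prove both cases $\ell=0,1$ by one uniform computation that parallels the proof of the preceding proposition, using only three ingredients already available: the boundary relation \eqref{eta-z}, the non-commutative Jacobi identity \eqref{ncj1}, and the homological relation \eqref{hm1}. The guiding idea is to realise every quasideterminant on both sides as a bottom-right quasideterminant of bordered minors of one and the same block matrix, and then to read the claim off the corresponding non-commutative Sylvester relation for that border.

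First I would fix the $(n+2)\times(n+2)$ block matrix $B$ with entries $B_{ij}=d_{\ell+2j+i}$, $0\le i,j\le n+1$, so that $\tth_{n+1}^{(\ell)}=|B|_{n+1,n+1}$, while $\tth_n^{(\ell)}$ and $\tth_n^{(\ell+1)}$ are the bottom-right quasideterminants of the minors $B_{\widehat{n+1},\widehat{n+1}}$ and $B_{\hat 0,\widehat{n+1}}$ (delete the bottom row, resp.\ the top row, and the last column). The identification $\tth_n^{(\ell+1)}=|B_{\hat 0,\widehat{n+1}}|$ is the structural fact that turns $(\tth_n^{(\ell+1)})^{-1}\tth_{n+1}^{(\ell)}$ into a Schur-type ratio attached to the corner $(0,n+1)$ of $B$. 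Applying \eqref{eta-z} in the present graded setting writes $\teta_{n+1,n}^{(\ell)}=-(\tth_n^{(\ell)})^{-1}\tilde Z_n^{(\ell)}$ and $\teta_{n+1,n}^{(\ell+1)}=-(\tth_n^{(\ell+1)})^{-1}\tilde Z_n^{(\ell+1)}$, where each bordered quasideterminant has last column the $2$-shifted column $(d_{\ell+2n+2+i})_i$; in $B$-language these are precisely $\tilde Z_n^{(\ell)}=|B_{\widehat{n+1},\hat n}|$ and $\tilde Z_n^{(\ell+1)}=|B_{\hat 0,\hat n}|$, the minors in which the penultimate column $n$ has been deleted. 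Thus all of $\tth_{n+1}^{(\ell)},\tth_n^{(\ell)},\tth_n^{(\ell+1)},\teta_{n+1,n}^{(\ell)},\teta_{n+1,n}^{(\ell+1)}$ become bottom-right quasideterminants of $B$ and of its minors indexed by deleting a row from $\{0,n+1\}$ and a column from $\{n,n+1\}$.

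With everything sitting inside $B$, the identity collapses to the $2\times 2$ non-commutative Sylvester/Jacobi relation \eqref{ncj1} for this border, followed by \eqref{hm1} to transport the boxed entry back to the diagonal; this is the exact analogue of the step in the preceding proof where the three-factor product $P\,Q^{-1}R$ was contracted by \eqref{hm1}. Since neither $B$ nor the chosen border depends on $\ell$ beyond the requirement $\ell+1\le\theta=2$, the cases $\ell=0$ and $\ell=1$ are treated simultaneously, and the boundary family $\ell=2$ (where $\ell+1$ wraps around) is precisely the one handled separately by the preceding proposition.

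The hard part will be the non-commutative bookkeeping. Unlike the commutative Desnanot--Jacobi identity, the Sylvester relation \eqref{ncj1} fixes a definite left/right placement of the inverse, so I must check that the factor produced is $(\tth_n^{(\ell+1)})^{-1}$ acting on the \emph{left}, as the statement demands, and that \eqref{hm1} is invoked in its row form; I also have to reconcile the sign, since the analogous contraction in the preceding proposition carried a minus whereas here the right-hand side appears with a plus. The one routine verification I would carry out in full before quoting \eqref{ncj1} is that the two $2$-shifted bordered columns coming from \eqref{eta-z} genuinely reassemble into $\tth_{n+1}^{(\ell)}$ rather than merely into a $\tilde Z$-type quasideterminant.
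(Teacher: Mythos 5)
Your plan is correct and completable, but it reorganizes the argument rather than reproducing the paper's. The paper does not invoke \eqref{eta-z} inside this proof at all: it borders the $(n+2)\times(n+2)$ moment matrix (your $B$) by one unit row, with $\mathbb{I}_p$ in column $n$, and one unit column, with $\mathbb{I}_p$ in row $n+1$, and evaluates the single bordered quasi-determinant in two different ways with \eqref{ncj1} (two choices of the distinguished row/column pairs), obtaining
\begin{align*}
-\teta_{n+1,n}^{(\ell)}\left(\tth_{n+1}^{(\ell)}\right)^{-1}=-\left(\tth_{n}^{(\ell+1)}\right)^{-1}-\teta_{n+1,n}^{(\ell+1)}\left(\tth_{n+1}^{(\ell)}\right)^{-1},
\end{align*}
from which the claim follows by right multiplication by $\tth_{n+1}^{(\ell)}$; the unit bordering makes both $\teta$'s appear directly as bottom-right quasi-determinants, so no transport of boxed entries is ever needed. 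You instead eliminate the $\teta$'s first via the graded form of \eqref{eta-z}, reducing the claim (after left multiplication by $\tth_n^{(\ell+1)}$) to the single moment-matrix identity $\tth_{n+1}^{(\ell)}=\tilde Z_n^{(\ell+1)}-\tth_n^{(\ell+1)}\bigl(\tth_n^{(\ell)}\bigr)^{-1}\tilde Z_n^{(\ell)}$, a Jacobi identity on $B$ with distinguished rows $\{0,n+1\}$ and columns $\{n,n+1\}$; your identifications of all five quantities as minors of $B$ are correct, and the reduced identity is the non-adjacent Desnanot--Jacobi relation in the commutative limit, so the target is right. Two bookkeeping points you flagged genuinely need care: applying \eqref{ncj1} after permuting row $0$ to the penultimate slot yields the middle and right factors with boxes in row $0$, i.e.\ $\bigl(|B_{\widehat{n+1},\widehat{n+1}}|_{0,n}\bigr)^{-1}|B_{\widehat{n+1},\hat n}|_{0,n+1}$ rather than $(\tth_n^{(\ell)})^{-1}\tilde Z_n^{(\ell)}$, and the conversion is the row-independence of the left quasi-Pl\"ucker coordinate; the relation that effects the transport is \eqref{hm2} (the row homological relation), not \eqref{hm1}, and the two transport factors coincide and cancel precisely because the two minors share all columns except their last --- this is the reason your contraction closes. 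Net comparison: your route is conceptually tidy (everything is a minor of one matrix, the two cases $\ell=0,1$ are uniform, and the $\ell=2$ wrap-around is visibly the preceding proposition), at the price of the extra quasi-Pl\"ucker/homological transport step that the paper's unit-row/column bordering absorbs automatically.
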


\begin{proof}
This proof is based on the observation that
\begin{align*}
\left|\begin{array}{cccccc}
d_\ell&d_{\ell+2}&\cdots&d_{\ell+2n}&d_{\ell+2n+2}&0\\
d_{\ell+1}&d_{\ell+3}&\cdots&d_{\ell+2n+1}&d_{\ell+2n+3}&0\\
\vdots&\vdots&&\vdots&\vdots&\vdots\\
d_{\ell+n}&d_{\ell+n+2}&\cdots&d_{\ell+3n}&d_{\ell+3n+2}&0\\
d_{\ell+n+1}&d_{\ell+n+3}&\cdots&d_{\ell+3n+1}&d_{\ell+3n+3}&\mathbb{I}_p\\
0&0&\cdots&\mathbb{I}_p&0&\boxed{0}
\end{array}
\right|&=-\teta_{n+1,n}^{(\ell)}\left(
\tth_{n+1}^{(\ell)}
\right)^{-1}\\
&=-\left(
\tth_n^{(\ell+1)}
\right)^{-1}-\teta_{n+1,n}^{(\ell+1)}\left(
\tth_{n+1}^{(\ell)}
\right)^{-1},
\end{align*}
where two different non-commutative Jacobi identities are applied to the left-hand side quasi-determinants. The first equality is the application of non-commutative Jacobi identity \eqref{ncj1} to $(n+1,n+2)$-rows and $(n+1,n+2)$-columns while the second equality is obtained by using \eqref{ncj1} to $(1,n+2)$-rows and $(n+1,n+2)$-columns.
\end{proof}

% subsection
\subsection{B\"acklund transformation for bi-graded Toda lattices and fractional Volterra hierarchy}\label{sec4.3}
This part is devoted to finding the B\"acklund transformation for the case where $\theta=b/a\in\mathbb{Q}_+$ with $a, b\in\mathbb{Z}_+$. Consider the moment reduction condition
\begin{align}\label{mred2}
\langle x^i\mathbb{I}_p,x^j\mathbb{I}_p\rangle_{\frac{b}{a}}=m_{i+j\frac{b}{a}}:=\left\{\begin{array}{ll}
d_{\frac{ia+jb}{a(a+b)}},& ia+jb \mod a+b =0,\\
0,& ia+jb \mod a+b\ne 0.
\end{array}
\right.
\end{align}
By a similar manner to the proof of Prop. \ref{prop5.2}, we claim the following proposition.
\begin{proposition}
Under the moment reduction condition \eqref{mred2}, we have
\begin{align*}
H_{n(a+b)+\ell}=\left|\begin{array}{cccc}
d_{\frac{\ell}{a}}&d_{\frac{\ell}{a}+\theta}&\cdots&d_{\frac{\ell}{a}+n\theta}\\
d_{\frac{\ell}{a}+1}&d_{\frac{\ell}{a}+1+\theta}&\cdots&d_{\frac{\ell}{a}+1+n\theta}\\
\vdots&\vdots&&\vdots\\
d_{\frac{\ell}{a}+n}&d_{\frac{\ell}{a}+n+\theta}&\cdots&\boxed{d_{\frac{\ell}{a}+n+n\theta}}
\end{array}
\right|:=\tilde{H}_n^{(\ell)},\quad \ell=0,1,\cdots,a+b-1,
\end{align*}
and for the polynomials, we have
\begin{align*}
P_{n(a+b)+\ell}(x)=x^\ell \tilde{P}^{(\ell)}_n(x^{a+b}),\quad Q_{n(a+b)+\ell}(x)=x^\ell \tilde{Q}_n^{(\ell)}(x^{a+b}),\quad \ell=0,1,\cdots,a+b-1,
\end{align*}
where
\begin{align*}
\tilde{P}_n^{(\ell)}(x)=\left|\begin{array}{ccccc}
d_{\frac{\ell}{a}}&d_{\frac{\ell}{a}+\theta}&\cdots&d_{\frac{\ell}{a}+(n-1)\theta}&\mathbb{I}_p\\
d_{\frac{\ell}{a}+1}&d_{\frac{\ell}{a}+1+\theta}&\cdots&d_{\frac{\ell}{a}+1+(n-1)\theta}&x^{a+b}\mathbb{I}_p\\
\vdots&\vdots&&\vdots&\vdots\\
d_{\frac{\ell}{a}+n}&d_{\frac{\ell}{a}+n+\theta}&\cdots&d_{\frac{\ell}{a}+n+(n-1)\theta}&\boxed{x^{n(a+b)}\mathbb{I}_p}
\end{array}
\right|,
\end{align*}
and
\begin{align*}
\tilde{Q}_n^{(\ell)}(x)=\left|\begin{array}{ccccc}
d_{\frac{\ell}{a}}&d_{\frac{\ell}{a}+\theta}&\cdots&d_{\frac{\ell}{a}+(n-1)\theta}&d_{\frac{\ell}{a}+n\theta}\\
d_{\frac{\ell}{a}+1}&d_{\frac{\ell}{a}+1+\theta}&\cdots&d_{\frac{\ell}{a}+1+(n-1)\theta}&d_{\frac{\ell}{a}+1+n\theta}\\
\vdots&\vdots&&\vdots&\vdots\\
d_{\frac{\ell}{a}+n-1}&d_{\frac{\ell}{a}+n-1+\theta}&\cdots&d_{\frac{\ell}{a}+n-1+(n-1)\theta}&d_{\frac{\ell}{a}+n-1+n\theta}\\
\mathbb{I}_p&x^{a+b}\mathbb{I}_p&\cdots&x^{(n-1)(a+b)}\mathbb{I}_p&\boxed{x^{n(a+b)}\mathbb{I}_p}
\end{array}
\right|.
\end{align*}
\end{proposition}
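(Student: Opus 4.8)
The plan is to follow the strategy of Proposition \ref{prop5.2} verbatim, replacing the single modulus $\theta+1$ of the integer case by $a+b$ and keeping track of the two parameters separately. The essential first step is to determine the sparsity pattern that the reduction \eqref{mred2} imposes on the moment arrays in \eqref{nf}, \eqref{pn} and \eqref{qn}. Writing $\theta=b/a$, the $(i,j)$ moment $m_{i+j\theta}$ carries the $d$-index $(ia+jb)/(a(a+b))$ and survives precisely when $ia+jb\equiv 0\pmod{a+b}$. Since $a\equiv-b\pmod{a+b}$ we have $ia+jb\equiv(j-i)b\pmod{a+b}$, and because $a,b$ are coprime so are $b$ and $a+b$; hence the surviving entries are exactly those with $i\equiv j\pmod{a+b}$. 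This congruence is the analogue of the condition used implicitly in the integer reduction \eqref{mred} and is what makes the block reduction possible.

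Next I would treat $H_{n(a+b)+\ell}$. Set $N=n(a+b)+\ell$; by \eqref{nf} this is an $(N+1)\times(N+1)$ quasi-determinant whose $(r,c)$ entry is $m_{r+c\theta}$, boxed at $(N,N)$. Grouping the row indices $0,\dots,N$ and the column indices $0,\dots,N$ by their residue modulo $a+b$ and applying the corresponding simultaneous row/column permutation block-diagonalises the array into $a+b$ blocks, one per residue $\rho$. Counting the indices in $\{0,\dots,N\}$ shows that the residues $\rho=0,\dots,\ell$ produce blocks of size $n+1$ and the residues $\rho=\ell+1,\dots,a+b-1$ produce blocks of size $n$ (a check confirms the sizes sum to $N+1$). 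The boxed position $N\equiv\ell$ lies in the residue-$\ell$ block, which has size $n+1$; by the block property of quasi-determinants only this block contributes, so $H_N$ equals the quasi-determinant of the residue-$\ell$ block. Finally, for that block the row $r=\ell+s(a+b)$ and column $c=\ell+t(a+b)$ give $ra+cb=(a+b)(\ell+sa+tb)$, i.e. the $d$-index $\ell/a+s+t\theta$; this is exactly the $(s,t)$ entry of $\tilde H_n^{(\ell)}$, completing this part.

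The polynomial identities are obtained the same way, the only new feature being the single non-moment line. For $P_N$ I would use the bordered (Schur-complement) form of the quasi-determinant \eqref{pn}: the last row of moments is supported, by the congruence above, only on residue-$\ell$ columns, and since the $N\times N$ moment core is block-diagonal after permutation, the correction term collapses onto the residue-$\ell$ block together with the entries of the polynomial column sitting in residue-$\ell$ rows. Those entries are $x^{\ell+s(a+b)}=x^{\ell}(x^{a+b})^{s}$, so pulling the scalar $x^{\ell}$ out of the boxed column yields $P_N(x)=x^{\ell}\tilde P_n^{(\ell)}(x^{a+b})$. The computation for $Q_N^{\top}$ from \eqref{qn} is identical after interchanging the roles of rows and columns: the polynomial row collapses onto the residue-$\ell$ columns, the surviving bordering moment column reproduces the last column $d_{\ell/a+s+n\theta}$ of $\tilde Q_n^{(\ell)}$, and the scalar $x^{\ell}$ factors out of the boxed row, giving $Q_N^{\top}(x)=x^{\ell}\tilde Q_n^{(\ell)}(x^{a+b})$.

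The step I expect to require the most care is the permutation and block-diagonalisation together with the bookkeeping of block sizes: one must verify that the boxed entry genuinely lands in an $(n+1)$-dimensional block, so that the reduced quasi-determinant has the correct size $n+1$, and that the surviving sub-array is invertible. Invertibility is inherited from the moment condition in Definition \ref{def1.1}, since each residue block is a principal reduced Hankel-type moment array in the $d$-sequence; and the identity $ra+cb=(a+b)(\ell+sa+tb)$ is what guarantees the entries land on the $d$-lattice in the right places. Everything else, namely the extraction of the scalar factor $x^{\ell}$ and the passage between $P$ and $Q$, is routine once the congruence $i\equiv j\pmod{a+b}$ and the block structure are in place.
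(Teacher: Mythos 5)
Your proposal is correct and follows exactly the route the paper intends: the paper proves this proposition only by reference to the block-matrix-and-permutation argument of Proposition \ref{prop5.2}, which is precisely what you carry out, with the useful extra step of making the sparsity congruence $i\equiv j \pmod{a+b}$ (via $a\equiv -b$ and $\gcd(b,a+b)=1$) and the entry bookkeeping $ra+cb=(a+b)(\ell+sa+tb)$ explicit. Your block-size count $(\ell+1)(n+1)+(a+b-1-\ell)n=N+1$ and the Schur-complement collapse onto the residue-$\ell$ block, with $x^{\ell}$ factored from the polynomial border, match the paper's construction in all details.
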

Moreover, orthogonal relations under this moment reduction condition can be written as 
\begin{align}\label{oc5}
\langle P_{n(a+b)+\ell}(x),Q_{m(a+b)+k}(x)\rangle_\theta=\tilde{H}_n^{(\ell)}\delta_{n,m}\delta_{\ell,k}.
\end{align}
In analogy to \eqref{span}, we know that there is a graded polynomial space 
\begin{align*}
\mathbb{R}^{p\times p}[x]= \mathbb{R}_0^{p\times p}[x]\oplus \mathbb{R}_1^{p\times p}[x]\oplus\cdots\oplus \mathbb{R}_{a+b-1}^{p\times p}[x],
\end{align*}
and
\begin{align*}
P_{n(a+b)+\ell}(x)\in \text{span}\{x^\ell\mathbb{I}_p,x^{\ell+a+b}\mathbb{I}_p,\cdots,x^{\ell+n(a+b)}\mathbb{I}_p\}.
\end{align*}
According to the quasi-symmetry property \eqref{qscon}, we have the following recurrence relations.
\begin{proposition}
For monic bi-orthogonal polynomials $\{P_n(x)\}_{n\in\mathbb{N}}$ satisfying the orthogonal relation \eqref{oc5}, we have recurrence relations
\begin{align*}
x^bP_{n(a+b)+\ell}(x)=P_{n(a+b)+b+\ell}(x)+\xi_{n,\ell}P_{(n-1)(a+b)+b+\ell}(x),\quad \ell=0,\cdots,a-1
\end{align*}
with $\xi_{n,\ell}=\tilde{H}_n^{(\ell)}\left(
\tilde{H}_{n-1}^{(b+\ell)}
\right)^{-1}$, and 
\begin{align*}
x^bP_{n(a+b)+\ell}(x)=P_{(n+1)(a+b)+\ell-a}(x)+\xi_{n,\ell}P_{n(a+b)+\ell-a}(x),\quad \ell=a,\cdots,a+b
\end{align*}
with $\xi_{n,\ell}=\tilde{H}_n^{(\ell)}\left(
\tilde{H}_n^{(\ell-a)}
\right)^{-1}$. Similarly the $\{Q_n(x)\}_{n\in\mathbb{N}}$,   satisfy the recurrence relations
\begin{align*}
x^aQ_{n(a+b)+\ell}(x)=Q_{n(a+b)+a+\ell}(x)+\eta_{n,\ell}Q_{(n-1)(a+b)+a+\ell}(x),\quad \ell=0,\cdots,b-1
\end{align*}
with $\eta_{n,\ell}^\top=\left(
\tilde{H}_{n-1}^{(a+\ell)}
\right)^{-1}\tilde{H}_n^{(\ell)}$, and
\begin{align*}
x^aQ_{n(a+b)+\ell}(x)=Q_{(n+1)(a+b)+\ell-b}(x)+\eta_{n,\ell}Q_{n(a+b)+\ell-b}(x),\quad \ell=b,\cdots,a+b
\end{align*}
with
$\eta_{n,\ell}^\top=\left(
\tilde{H}_n^{(\ell-b)}
\right)^{-1}\tilde{H}_n^{(\ell)}$.
\end{proposition}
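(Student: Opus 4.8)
The plan is to adapt the graded-recurrence argument already used for the integer case in \eqref{ss1}--\eqref{ss2}, now with the finer $(a+b)$-gradation induced by \eqref{mred2}. First I would write $x^{b}P_{n(a+b)+\ell}(x)$ as a left-linear combination of the orthogonal basis $\{P_{m}(x)\}_{m\in\mathbb{N}}$, which is permissible since these polynomials form a basis of the left module $\mathbb{R}^{p\times p}[x]$. The coefficient of each $P_{j}(x)$ equals $\langle x^{b}P_{n(a+b)+\ell}(x),Q_{j}(x)\rangle_{\theta}\,\tilde{H}_{\cdot}^{-1}$, and the decisive step is to convert it, via the quasi-symmetry property \eqref{qscon}, into $\langle P_{n(a+b)+\ell}(x),x^{a}Q_{j}(x)\rangle_{\theta}\,\tilde{H}_{\cdot}^{-1}$.

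Next I would use the gradation. Since $P_{n(a+b)+\ell}(x)$ involves only powers congruent to $\ell$ modulo $a+b$, multiplication by $x^{b}$ shifts it into grade $\ell+b\bmod(a+b)$, and the graded orthogonality \eqref{oc5} immediately kills every $P_{j}$ lying outside that grade. Thus only indices $j=k(a+b)+(b+\ell)$ survive when $\ell=0,\dots,a-1$ (no wraparound, $\ell+b<a+b$) and only $j=k(a+b)+(\ell-a)$ survive when $\ell=a,\dots,a+b$ (wraparound); this is exactly the case split in the statement. I would then impose two degree constraints at once: the surviving $P_{j}$ cannot exceed the degree $n(a+b)+\ell+b$ of the left-hand side, while $\langle P_{n(a+b)+\ell},x^{a}Q_{j}\rangle_{\theta}$ vanishes unless $x^{a}Q_{j}$ attains degree at least $n(a+b)+\ell$, because $P_{n(a+b)+\ell}$ is orthogonal to all lower monomials $x^{i}\mathbb{I}_{p}$ by \eqref{or-2}. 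Together these pin $k$ to exactly two consecutive values, giving the asserted two-term truncation.

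Evaluating the coefficients is then short. The leading coefficient is $\mathbb{I}_{p}$ by monicity, since the top term of $x^{b}P_{n(a+b)+\ell}$ coincides with that of the monic $P$ of equal degree. For the subleading coefficient I would observe that the relevant $x^{a}Q_{\cdot}$ is monic of degree exactly $n(a+b)+\ell$ and sits in grade $\ell$, so its lower-order part is annihilated by $\langle P_{n(a+b)+\ell},\,\cdot\,\rangle_{\theta}$ and only the top monomial $x^{n(a+b)+\ell}\mathbb{I}_{p}$ survives; rewriting that monomial as the monic $Q_{n(a+b)+\ell}$ (they differ by lower-order, hence orthogonal, terms) yields $\langle P_{n(a+b)+\ell},x^{a}Q_{\cdot}\rangle_{\theta}=\tilde{H}_{n}^{(\ell)}$. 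The two cases thus produce $\xi_{n,\ell}=\tilde{H}_{n}^{(\ell)}\bigl(\tilde{H}_{n-1}^{(b+\ell)}\bigr)^{-1}$ and $\xi_{n,\ell}=\tilde{H}_{n}^{(\ell)}\bigl(\tilde{H}_{n}^{(\ell-a)}\bigr)^{-1}$, matching the two normalizers in the statement.

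For the $\{Q_{n}(x)\}$ recurrences I would argue symmetrically: expand $x^{a}Q_{n(a+b)+\ell}(x)$ as a left combination $\sum_{j}\eta_{\cdot}Q_{j}(x)$, pair against $P_{j}(x)$ using the right-bimodule rule \eqref{bimodule} (which transposes the coefficient, hence the appearance of $\eta_{n,\ell}^{\top}$), and invoke \eqref{qscon} in the reverse direction to rewrite $\langle P_{j},x^{a}Q_{n(a+b)+\ell}\rangle_{\theta}$ as $\langle x^{b}P_{j},Q_{n(a+b)+\ell}\rangle_{\theta}$. The identical grading-plus-degree truncation and the same monicity evaluation then deliver $\eta_{n,\ell}^{\top}=\bigl(\tilde{H}_{n-1}^{(a+\ell)}\bigr)^{-1}\tilde{H}_{n}^{(\ell)}$ and $\eta_{n,\ell}^{\top}=\bigl(\tilde{H}_{n}^{(\ell-b)}\bigr)^{-1}\tilde{H}_{n}^{(\ell)}$. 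I expect the only genuine difficulty to be bookkeeping: keeping the modular arithmetic of the grade index consistent --- in particular the wraparound at $\ell=a$ (respectively $\ell=b$) and the boundary value $\ell=a+b$ --- so that every recurrence collapses to precisely two terms with the correct block indices on the $\tilde{H}$'s.
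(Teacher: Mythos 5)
Your proof is correct and follows essentially the same route the paper takes (and explicitly spells out for the integer-$\theta$ analogue in Section \ref{sec4.1}): expand $x^bP$ (resp.\ $x^aQ$) in the bi-orthogonal basis, transfer $x^b\leftrightarrow x^a$ via the quasi-symmetry \eqref{qscon}, and let the graded orthogonality \eqref{oc5} truncate the expansion to two terms, with the normalizing quasi-determinants $\tilde H_n^{(\ell)}$ emerging from monicity. Your extra care with the degree bounds, the monicity evaluation of the subleading coefficient, and the wraparound at $\ell=a$ (resp.\ $\ell=b$) only makes explicit what the paper leaves implicit.
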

If we denote $\Phi=(P_0(x),P_1(x),\cdots)$ and $\Psi=(Q_0(x),Q_1(x),\cdots)$, then the corresponding recurrence relations can be written into spectral problem form
\begin{align*}
x^b\Phi&=(\Lambda^b+\mathcal{A}\Lambda^{-a})\Phi, \quad \mathcal{A}=\text{diag}(\xi_{0,0},\cdots,\xi_{0,a+b},\xi_{1,0},\cdots,\xi_{1,a+b},\cdots),\\
x^a\Psi&=(\Lambda^a+\mathcal{B}\Lambda^{-b})\Psi,\quad \mathcal{B}=\text{diag}(\eta_{0,0},\cdots,\eta_{0,a+b},\eta_{1,0},\cdots,\eta_{1,a+b},\cdots).
\end{align*}
Such a spectral problem corresponds to the so-called fractional Volterra hierarchy, which was studied in the cubic Hodge integrals and integrable systems in \cite{liu18}.

In particular, if we consider $t_{kb(a+b)}$-flows, then we obtain the corresponding integrable lattices
\begin{align*}
\p_{t_{kb(a+b)}}\mathcal{L}=\left[\mathcal{L},\left(
\mathcal{L}^{k(a+b)}
\right)_{<0}\right],\quad 
\p_{t_{kb(a+b)}}\mathcal{M}=\left[\mathcal{M},\left(
\mathcal{M}^{k(a+b)}
\right)_{<0}\right],
\end{align*}
where $\mathcal{L}=\Lambda^b+\mathcal{A}\Lambda^{-a}$ and $\mathcal{M}=\Lambda^a+\mathcal{B}\Lambda^{-b}$.

\section*{Acknowledgement}
This work is partially funded by grants (NSFC12101432, NSFC12175155). SHL would like to thank Dr. Jesper Ipsen for fruitful discussions on the Muttalib-Borodin model and fractional powers of bi-orthogonal functions. SHL would also like to thank Prof. Di Yang for sharing his ideas on the bigraded Toda hierarchy and applications in Frobenius manifolds.

%Appendix A
\appendix\label{appendixa}
\section*{Appendix A}
\renewcommand{\thesection}{A} 
\setcounter{equation}{0}
\setcounter{theorem}{0}
\subsection{Basic quasi-determinant identities}\label{AppendixA1}

Quasi-determinants were first introduced by Gelfand and Retakh in the early 1990s for a matrix with non-commutative entries \cite{gelfand91}. Since this paper deals with matrix-valued orthogonal polynomials and non-commutative integrable systems by using quasi-determinants, we give a brief introduction to quasi-determinants as well as the basic properties we used in this paper. For a detailed reference, please see \cite{gelfand91,gelfand05,krob95}.

\begin{definition}\label{qd-def}
Let $A$ be an $n\times n$ matrix over a ring $\mathcal{R}$.
For $i, j=1, 2, \dots, n$, let $r_i^j$ be the $i$-th row of $A$ without the $j$-th entry, $c_j^i$ be the $j$-th column without the $i$-th entry, and $A^{i,j}$ be the submatrix of $A$ without the $i$-th row and $j$-th column of $A$.
Assume that $A^{i,j}$ is invertible.
Then there are $n^2$ quasi-determinants of $A$, denoted as $|A|_{i,j}$ for $1\leq i, j\leq n$, as follows
\begin{align*}
|A|_{i,j}=a_{i,j}-r_i^j\left(A^{i,j}\right)^{-1}c_j^i,
\end{align*}
where $a_{i,j}$ is the $(i,j)$-th entry of $A$. For convenience, in this paper, we denote 
\begin{align*}
|A|_{i,j}=\left|\begin{array}{cc}
A^{i,j}&c_j^i\\
r_i^j&\boxed{a_{ij}}
\end{array}
\right|.
\end{align*}
\end{definition}
Since the concept of quasi-determinant was proposed, it has been used in different branches of mathematics, such as representation theory, combinatorics, non-commutative geometry and so on. For self-consistency, we list below, several basic properties of quasi-determinants which we have used in this article.

\begin{enumerate}[0]
\item[$\bullet$]
Quasi-determinants can be used to solve linear systems with non-commutative coefficients.

\begin{proposition}(\cite[Thm. 1.6.1]{gelfand05})\label{p-ls}
Let $A=(a_{i,j})$ be an $n\times n$ matrix over a ring $\mathcal{R}$.
Assume that all the quasi-determinants $|A|_{i,j}$  are defined and invertible.
Then
\begin{align*}
\left\{\begin{array}{c}
a_{1,1}x_1+\cdots+a_{1,n}x_n=\xi_1\\
\vdots\\
a_{n,1}x_1+\cdots+a_{n,n}x_n=\xi_n\end{array}
\right.
\end{align*}
has a solution $x_i\in\mathcal{R}$ if and only if
\begin{align*}
x_i=\sum_{j=1}^n|A|_{j,i}^{-1}\xi_j.
\end{align*}
\end{proposition}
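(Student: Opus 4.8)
The plan is to solve the system $Ax=\xi$ one component at a time by non-commutative Gaussian elimination, exploiting the Schur-complement structure that is already built into Definition \ref{qd-def}, and then to recombine the result into the asserted sum. Fix the component $x_i$ to be recovered and split the unknowns into $x_i$ together with the remaining vector $\hat{x}$ (the entries of $x$ with $x_i$ deleted); for each pivot row $j$ write $\hat{\xi}^{(j)}$ for the right-hand side with $\xi_j$ deleted. The hypothesis that every $|A|_{i,j}$ is defined guarantees that each $A^{j,i}$ is invertible, which is exactly what the elimination requires, and also that $A$ itself is invertible in the sense of Gelfand--Retakh.

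First I would isolate the $n-1$ equations other than the $j$-th. In block form these read $A^{j,i}\hat{x}+c_i^j x_i=\hat{\xi}^{(j)}$, with $A^{j,i}$, $c_i^j$, $r_j^i$ as in Definition \ref{qd-def}. Since $A^{j,i}$ is invertible, I solve $\hat{x}=(A^{j,i})^{-1}(\hat{\xi}^{(j)}-c_i^j x_i)$ and substitute into the $j$-th equation $r_j^i\hat{x}+a_{j,i}x_i=\xi_j$, obtaining
\begin{align*}
\left(a_{j,i}-r_j^i (A^{j,i})^{-1}c_i^j\right)x_i=\xi_j-r_j^i(A^{j,i})^{-1}\hat{\xi}^{(j)},
\end{align*}
whose left coefficient is precisely $|A|_{j,i}$. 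Hence, whenever the system is solvable, $x_i=|A|_{j,i}^{-1}\xi_j-|A|_{j,i}^{-1}r_j^i(A^{j,i})^{-1}\hat{\xi}^{(j)}$. This relation already determines $x_i$ from the data, giving uniqueness (and the necessity direction of the ``if and only if'').

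The remaining and genuinely harder step would be to show that the correction term reassembles into the clean sum, that is,
\begin{align*}
-|A|_{j,i}^{-1}r_j^i(A^{j,i})^{-1}\hat{\xi}^{(j)}=\sum_{k\neq j}|A|_{k,i}^{-1}\xi_k.
\end{align*}
Matching the coefficient of each $\xi_k$ with $k\neq j$, this is equivalent to the identity $|A|_{k,i}^{-1}=-|A|_{j,i}^{-1}\left(r_j^i(A^{j,i})^{-1}\right)_k$, which is exactly the statement that the $(i,k)$-entry of $A^{-1}$ equals $|A|_{k,i}^{-1}$. I would derive this from the standard homological (row/column expansion) relations among the quasi-determinants of $A$, checking first the $2\times2$ case, where the identity reduces to the one-line computation $-(a_{11}-a_{12}a_{22}^{-1}a_{21})^{-1}a_{12}a_{22}^{-1}=(a_{21}-a_{22}a_{12}^{-1}a_{11})^{-1}$, and then promoting it to general $n$ by the same expansion. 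Substituting this back yields $x_i=\sum_{k=1}^n|A|_{k,i}^{-1}\xi_k$, and since this is the $i$-th entry of $A^{-1}\xi$ it is conversely a solution of $Ax=\xi$, settling sufficiency as well. The main obstacle is entirely concentrated in this homological identity: the elimination is routine bookkeeping, whereas recognizing the off-pivot correction as the remaining entries of the inverse matrix is the crux and is where the genuine algebra of quasi-determinants enters.
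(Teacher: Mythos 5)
First, a point of comparison: the paper does not prove this proposition at all --- it is quoted verbatim from \cite[Thm.~1.6.1]{gelfand05} as background in the appendix, so there is no internal proof to match against; your attempt has to stand on its own. Your elimination step does stand: isolating the rows other than the $j$-th, solving $\hat{x}=(A^{j,i})^{-1}(\hat{\xi}^{(j)}-c_i^j x_i)$ (legitimate, since $|A|_{j,i}$ being defined means precisely that $A^{j,i}$ is invertible) and substituting into the $j$-th row correctly produces $|A|_{j,i}\,x_i=\xi_j-r_j^i(A^{j,i})^{-1}\hat{\xi}^{(j)}$, which settles the necessity direction. But the step you yourself flag as the crux is a genuine gap as written: the identity $-|A|_{j,i}^{-1}\bigl(r_j^i(A^{j,i})^{-1}\bigr)_k=|A|_{k,i}^{-1}$ \emph{is} the Gelfand--Retakh inversion formula $(A^{-1})_{i,k}=|A|_{k,i}^{-1}$ (Thm.~1.2.1 of \cite{gelfand05}), i.e.\ the substantive content of the theorem, and your derivation of it consists of the $2\times 2$ computation plus the phrase ``promoting it to general $n$ by the same expansion.'' At general $n$ the quantity $\bigl(r_j^i(A^{j,i})^{-1}\bigr)_k$ involves the full inverse of an $(n-1)\times(n-1)$ block, no induction scheme is specified, and the homological relations you would invoke are themselves theorems requiring proof; so the argument is circular-adjacent unless you either cite the inversion formula outright (defensible, given that the paper cites the whole proposition) or prove it.

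There is a clean way to close the gap entirely inside your own framework, with no new quasi-determinant identities. By Proposition \ref{p-inva} you may permute row $j$ and column $i$ to the corner, so that
\begin{align*}
A=\begin{pmatrix}A^{j,i}&c_i^j\\ r_j^i&a_{j,i}\end{pmatrix}
=\begin{pmatrix}\mathbb{I}&0\\ r_j^i(A^{j,i})^{-1}&1\end{pmatrix}
\begin{pmatrix}A^{j,i}&c_i^j\\ 0&|A|_{j,i}\end{pmatrix},
\end{align*}
and both factors are two-sided invertible because $A^{j,i}$ and $|A|_{j,i}$ are. Hence $A$ is invertible, the system is uniquely solvable for \emph{every} right-hand side, and $x_i=\sum_k (A^{-1})_{i,k}\,\xi_k$ with coefficients $(A^{-1})_{i,k}$ that do not depend on which pivot you eliminate with. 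Now run your elimination once for each pivot: with pivot $k$ it reads $x_i=|A|_{k,i}^{-1}\xi_k-|A|_{k,i}^{-1}r_k^i(A^{k,i})^{-1}\hat{\xi}^{(k)}$, and since $\hat{\xi}^{(k)}$ contains no $\xi_k$, the coefficient of $\xi_k$ is manifestly $|A|_{k,i}^{-1}$; pivot-independence then forces $(A^{-1})_{i,k}=|A|_{k,i}^{-1}$ for all $k$, which is exactly your missing identity, and sufficiency follows at once because the asserted formula is then the $i$-th entry of $A^{-1}\xi$. (Your $2\times2$ check, incidentally, is correct, including the implicit use of invertibility of $a_{12}$, which the hypothesis supplies via $A^{2,1}=a_{12}$.) With either this pivot-comparison argument or an explicit citation of the inversion theorem inserted, your proof is complete; without one of them, the ``promotion to general $n$'' is an assertion, not a proof.
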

\end{enumerate}

\begin{enumerate}[0]
\item[$\bullet$]  
Invariance of quasi-determinants under elementary row or column operations.

\begin{proposition}(\cite[Prop 2.2]{krob95})\label{p-inva}
A permutation of the rows or columns of a quasi-determinant does not change its value.
\end{proposition}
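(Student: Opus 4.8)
The plan is to reduce the statement to the identity expressing the entries of $A^{-1}$ through quasi-determinants, which is itself an immediate consequence of Proposition \ref{p-ls}. First I would record that if $A=(a_{i,j})$ is an invertible $n\times n$ matrix over $\mathcal{R}$ all of whose quasi-determinants are defined and invertible, then
\begin{align}\label{inv-char}
(A^{-1})_{j,i}=|A|_{i,j}^{-1},\qquad\text{equivalently}\qquad |A|_{i,j}=\left((A^{-1})_{j,i}\right)^{-1}.
\end{align}
This follows by applying Proposition \ref{p-ls} to the linear system $Ax=e_k$, where $e_k$ is the $k$-th standard basis column vector (carrying the unit of $\mathcal{R}$ in its $k$-th slot); its solution $x$ is the $k$-th column of $A^{-1}$, and comparing the two expressions for $x_i$ gives $(A^{-1})_{i,k}=|A|_{k,i}^{-1}$, which is \eqref{inv-char} after relabelling. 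The conceptual point is that \eqref{inv-char} converts a statement about a single boxed entry into a statement about the inverse matrix, on which permutations act transparently.

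Next I would realise a permutation $\sigma$ of rows as left multiplication by a permutation matrix $P$ whose nonzero entries equal the central unit $1\in\mathcal{R}$, so that the permuted matrix is $PA$ and $(PA)^{-1}=A^{-1}P^{-1}$. The boxed entry sitting in row $i$ of $A$ is carried to row $\sigma(i)$ of $PA$, so the object to compare with $|A|_{i,j}$ is $|PA|_{\sigma(i),j}$. Using \eqref{inv-char} together with $(PA)^{-1}=A^{-1}P^{-1}$, a short index computation gives $\big((PA)^{-1}\big)_{j,\sigma(i)}=(A^{-1})_{j,i}$, whence $|PA|_{\sigma(i),j}=\left((A^{-1})_{j,i}\right)^{-1}=|A|_{i,j}$. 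Because $P$ has central entries, the relation $(PA)^{-1}=A^{-1}P^{-1}$ is valid over the noncommutative ring $\mathcal{R}$, so no reordering of genuinely noncommuting factors is required. The column case is entirely analogous: a permutation of columns is right multiplication by a permutation matrix $P$, giving $(AP)^{-1}=P^{-1}A^{-1}$, and tracking the boxed entry from column $j$ to column $\sigma(j)$ yields $|AP|_{i,\sigma(j)}=|A|_{i,j}$ in the same way.

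I expect the main obstacle to be the hypotheses rather than the algebra: identity \eqref{inv-char} needs $A$ itself to be invertible, whereas the definition of $|A|_{i,j}$ presupposes only that the submatrix $A^{i,j}$ is invertible. To cover the general case I would reduce to adjacent transpositions directly from the definition $|A|_{i,j}=a_{i,j}-r_i^j(A^{i,j})^{-1}c_j^i$. For a swap of two rows that both differ from the boxed row $i$, the column vector $c_j^i$ and the rows of $A^{i,j}$ are permuted by the same swap matrix $Q$, and the cancellation $(QA^{i,j})^{-1}(Qc_j^i)=(A^{i,j})^{-1}Q^{-1}Qc_j^i=(A^{i,j})^{-1}c_j^i$ shows invariance with no invertibility of $A$ at all; the same bookkeeping handles columns not equal to $j$. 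The only case genuinely needing care is a transposition that moves the boxed entry itself, and there I would either pass to the invertible generic situation and invoke \eqref{inv-char}, or expand both quasi-determinants and match terms. In the write-up I would present the inverse-matrix computation as the clean proof and relegate the transposition bookkeeping to a short remark, since every permutation factors into transpositions.
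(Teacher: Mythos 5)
The paper offers no proof of Proposition \ref{p-inva} --- it is imported verbatim from Krob--Leclerc --- so your argument stands on its own, and its main route has a genuine gap. The proposition assumes only that the submatrix $A^{i,j}$ is invertible, not $A$ itself, as you noticed; but your proposed repair does not work. First, even when $A$ is invertible, the identity $(A^{-1})_{j,i}=|A|_{i,j}^{-1}$ as you derive it from Proposition \ref{p-ls} can be applied to the permuted matrix $PA$ only under the hypothesis that \emph{all} quasi-determinants of $PA$ are defined and invertible --- and up to the identification you are in the middle of proving, that is the statement itself, so the argument is circular unless you import the stronger Gelfand--Retakh inversion theorem (invertibility of a single entry of $B^{-1}$ forces $|B|_{i,j}$ to be defined and equal to its inverse), which the paper nowhere states. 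Second, the escape hatch of ``passing to the invertible generic situation'' is unavailable here: $\mathcal{R}$ is an arbitrary, possibly noncommutative, ring, and there is no Zariski-density or specialization argument to appeal to in that setting.

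Fortunately, the elementary computation you relegate to a remark is already a complete proof, with no case distinction and no factorization into transpositions. The point you miss is that the box travels with its entry, so the ``hard case'' of a permutation moving the boxed row does not exist: for \emph{any} row permutation $\sigma$, realized by the permutation matrix $P$, write $B=PA$ and let $Q$ be the $(n-1)\times(n-1)$ permutation matrix induced by $\sigma$ on the rows other than $i$ (via the order-preserving identifications of $\{1,\dots,n\}\setminus\{i\}$ with $\{1,\dots,n\}\setminus\{\sigma(i)\}$). Then directly from Definition \ref{qd-def} one has $B_{\sigma(i),j}=a_{i,j}$, $r_{\sigma(i)}^j(B)=r_i^j(A)$, $B^{\sigma(i),j}=QA^{i,j}$ and $c_j^{\sigma(i)}(B)=Qc_j^i(A)$, whether or not $\sigma$ fixes $i$, whence
\begin{align*}
|B|_{\sigma(i),j}=a_{i,j}-r_i^j\left(QA^{i,j}\right)^{-1}Qc_j^i
=a_{i,j}-r_i^j\left(A^{i,j}\right)^{-1}Q^{-1}Qc_j^i=|A|_{i,j},
\end{align*}
using only the invertibility of $A^{i,j}$ that the definition already presupposes; the column case is symmetric, with $Q$ acting on the right of $A^{i,j}$ and of $r_i^j$. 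So the correct write-up inverts yours: promote the cancellation computation to the proof and delete the inverse-matrix detour entirely.
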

For example, 
\begin{align*}
\left|\begin{array}{ccc}
A&B&C\\
D&f&g\\
E&h&\boxed{i}\end{array}
\right|
=\left|\begin{array}{ccc}
B&A&C\\
f&D&g\\
h&E&\boxed{i}
\end{array}
\right|
=\left|\begin{array}{ccc}
A&B&C\\
E&h&\boxed{i}\\
D&f&g
\end{array}
\right|.
\end{align*}
Moreover, this proposition could be written in a general form (\cite[Eq. 8]{gilson07})
\begin{align*}
\left|
\left(\begin{array}{cc}
F&0\\
E&h
\end{array}
\right)
\left(\begin{array}{cc}
A&B\\
D&f
\end{array}
\right)
\right|_{n,n}
=h\left|\begin{array}{cc}
A&B\\
D&\boxed{f}
\end{array}
\right|.
\end{align*}
\end{enumerate}

\begin{enumerate}[0]
\item[$\bullet$]  
Equivalent conditions for a zero quasi-determinant.
\begin{proposition}(\cite[Prop. 1.4.6]{gelfand05})\label{p-equi}
The following statements are equivalent if the quasi-determinant $|A|_{ij}$ is defined.\\
({\romannumeral1})~~$|A|_{ij}$=0;\\
({\romannumeral2})~~The $i$-th row of the matrix $A$ is a left linear combination of the other rows of $A$.\\
({\romannumeral3})~~The $j$-th column of the matrix $A$ is a right linear combination of the other columns of $A$.
\end{proposition}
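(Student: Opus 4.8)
The plan is to normalize to the case $i=j=n$ and then settle each equivalence by a one-line linear solve that uses only the invertibility of $A^{n,n}$. First I would invoke Proposition \ref{p-inva}: permuting rows and columns changes neither the value of the quasi-determinant nor the truth of statements (ii) and (iii), so moving the $i$-th row to the bottom and the $j$-th column to the right reduces everything to $|A|_{n,n}$. In this normalized form the defining relation reads
\begin{align*}
|A|_{n,n}=a_{n,n}-r_n^n\left(A^{n,n}\right)^{-1}c_n^n,
\end{align*}
where the first $n-1$ rows of $A$ form the block $\left(A^{n,n}\;\;c_n^n\right)$, the last row is $\left(r_n^n\;\;a_{n,n}\right)$, and $A^{n,n}$ is invertible by hypothesis.

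For the equivalence (i)$\Leftrightarrow$(ii), I would note that the $n$-th row being a left linear combination of the remaining rows means precisely that some row vector $\lambda\in\mathcal{R}^{1\times(n-1)}$ satisfies $r_n^n=\lambda A^{n,n}$ and $a_{n,n}=\lambda c_n^n$ simultaneously. Since $A^{n,n}$ is invertible, the first equation pins down $\lambda=r_n^n\left(A^{n,n}\right)^{-1}$ uniquely; substituting it into the second yields $a_{n,n}=r_n^n\left(A^{n,n}\right)^{-1}c_n^n$, that is, $|A|_{n,n}=0$. Conversely, if $|A|_{n,n}=0$ then this same $\lambda$ exhibits the $n$-th row as a left combination of the others, closing the loop.

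The equivalence (i)$\Leftrightarrow$(iii) is the mirror-image argument carried out on columns: the $n$-th column being a right linear combination of the others means some column vector $\mu\in\mathcal{R}^{(n-1)\times1}$ satisfies $c_n^n=A^{n,n}\mu$ and $a_{n,n}=r_n^n\mu$, and invertibility forces $\mu=\left(A^{n,n}\right)^{-1}c_n^n$, again producing $a_{n,n}=r_n^n\left(A^{n,n}\right)^{-1}c_n^n$. The only thing to watch — and the reason the statement pairs ``left'' with rows and ``right'' with columns — is the handedness of the coefficients: the left-acting row vector is forced to equal the left block $r_n^n\left(A^{n,n}\right)^{-1}$, while the right-acting column vector is forced to equal the right block $\left(A^{n,n}\right)^{-1}c_n^n$, so that the cancellation against $A^{n,n}$ goes through over the non-commutative ring $\mathcal{R}$. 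Because $A^{n,n}$ is invertible each coefficient vector is uniquely determined, so there is no genuine obstacle; the whole argument amounts to two symmetric applications of the same solve.
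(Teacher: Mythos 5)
Your proof is correct. The paper itself does not prove this proposition --- it quotes it from \cite[Prop.~1.4.6]{gelfand05} --- and your argument is essentially the standard one from that reference: the invertibility of $A^{i,j}$ forces the coefficient vector of any putative left (row) or right (column) combination to be $r_i^j\left(A^{i,j}\right)^{-1}$ or $\left(A^{i,j}\right)^{-1}c_j^i$ respectively, and the remaining entry constraint is exactly the vanishing of $|A|_{i,j}$; your attention to the handedness of the coefficients over the non-commutative ring is the one point that genuinely needs care, and you handle it correctly. The preliminary normalization to $i=j=n$ via Proposition \ref{p-inva} is harmless but dispensable, since the defining formula of Definition \ref{qd-def} already applies verbatim to an arbitrary position $(i,j)$.
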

\end{enumerate}

\begin{enumerate}[0]
\item[$\bullet$]  
Non-commutative Jacobi identity.
\noindent
There are several identities for quasi-determinants, as an analogy of Sylvester identity for determinants. For a general Sylvester's identity for quasi-determinants, please refer to \cite{gelfand91}. In this article, we mainly make use of the simplest one, which is called the non-commutative Jacobi identity \cite{ gilson07}
\begin{align}\label{ncj1}
\left|\begin{array}{ccc}
A&B&C\\
D&f&g\\
E&h&\boxed{i}
\end{array}
\right|
=\left|
\begin{array}{cc}
A&C\\
E&\boxed{i}
\end{array}
\right|-\left|\begin{array}{cc}
A&B\\
E&\boxed{h}
\end{array}
\right|\left|\begin{array}{cc}
A&B\\
D&\boxed{f}
\end{array}
\right|^{-1}\left|\begin{array}{cc}
A&C\\
D&\boxed{g}
\end{array}
\right|.
\end{align}
According to Prop. \ref{p-inva}, it has the following alternative form
\begin{align}\label{ncj2}
\left|\begin{array}{ccc}
B&A&C\\
f&D&g\\
h&E&\boxed{i}
\end{array}
\right|=\left|\begin{array}{cc}
A&C\\
E&\boxed{i}
\end{array}
\right|-\left|\begin{array}{cc}
B&A\\
\boxed{h}&E
\end{array}
\right|\left|\begin{array}{cc}
B&A\\
\boxed{f}&D
\end{array}
\right|^{-1}\left|\begin{array}{cc}
A&C\\
D&\boxed{g}
\end{array}
\right|.
\end{align}
\end{enumerate}

\begin{enumerate}[0]
\item[$\bullet$]  
Homological relations in terms of quasi-Pl\"ucker coordinates.

\noindent
Given a matrix $A$ with $(n+k)$ rows and $n$ columns. 
$A_i$ is the $i$-th row of $A$.
$A_I$ is a submatrix of $A$ having rows with indices in $I$, where $I$ is a subset of $\{1,2,\dots, n+k\}$.
Denote $A_{\{1,2,\dots, n+k\}\backslash\{i\}}$ by $A_{\hat i}$.
Given $i, j\in\{1,2,\dots, n+k\}$ and the subset $I$, where the number of entries of $I$ is $\#I=(n-1)$ and $j\notin I$.
Then the (right) quasi-Pl\"ucker coordinates are given by \cite{gelfand91,gilson07} 
\begin{align*}
r_{ij}^I=-\left|\begin{array}{cc}
A_I&0\\
A_i&\boxed{0}\\
A_j& 1\end{array}
\right|.
\end{align*}
By using quasi-Pl\"ucker coordinates, one could state the following homological relations 
\begin{align}
\left|\begin{array}{ccc}
A&B&C\\
D&f&g\\
E&\boxed{h}&i\end{array}
\right|&=\left|\begin{array}{ccc}
A&B&C\\
D&f&g\\
E&h&\boxed{i}\end{array}
\right|\left|\begin{array}{ccc}
A&B&C\\
D&f&g\\
0&\boxed{0}&1\end{array}
\right|,\label{hm1}\\
\left|\begin{array}{ccc}
A&B&C\\
D&f&\boxed{g}\\
E&h&i\end{array}
\right|&=\left|\begin{array}{ccc}
A&B&0\\
D&f&\boxed{0}\\
E&h&1\end{array}
\right|\left|\begin{array}{ccc}
A&B&C\\
D&f&g\\
E&h&\boxed{i}\end{array}
\right|,\label{hm2}
\end{align}
which were used in this article.
\end{enumerate}

\begin{enumerate}[0]
\item[$\bullet$]  
Derivatives of general quasi-determinants.

\noindent
Let $A$, $B$, $C$ and $d$ be functions of $t$, then
\begin{align}\label{dqd}
\left|\begin{array}{cc}
A &B \\
C &\boxed{d} 
\end{array}
\right|'
=d'-C'A^{-1}B-CA^{-1}B'+CA^{-1}A'A^{-1}B,
\end{align}
where prime denotes the derivative with respect to $t$.
In particular, since we mainly consider   Wronski quasi-determinants in this article, we have the following Wronskian-type derivative formula \cite[eqs. 22, 23]{gilson07}
\begin{align}\label{dqd-2}
\left|\begin{array}{cc}
A &B \\
C &\boxed{d} 
\end{array}
\right|'
=\left|\begin{array}{cc}
A &B \\
C' &\boxed{d'} 
\end{array}
\right|
+\sum_{j=1}^{n}\left|\begin{array}{cc}
A &e_j^\top \\
C &\boxed{0} 
\end{array}
\right|
\left|\begin{array}{cc}
A &B\\
\left(A^j\right)' &\boxed{\left(B^j\right)'} 
\end{array}
\right|,
\end{align}
and
\begin{align}\label{dqd-3}
\left|\begin{array}{cc}
A &B \\
C &\boxed{d} 
\end{array}
\right|'=\left|\begin{array}{cc}
A &B' \\
C &\boxed{d'} 
\end{array}
\right|
+\sum_{j=1}^{n}\left|\begin{array}{cc}
A &\left(A_j\right)' \\
C &\boxed{\left(C_j\right)'} 
\end{array}
\right|
\left|\begin{array}{cc}
A &B\\
e_j &\boxed{0} 
\end{array}
\right|,
\end{align}
where $e_j$ is a block unit vector whose $j$-th position is the unit element, and $A^j$ (respectively $A_j$) is the $j$-th row (respectively column) of $A$.
\end{enumerate}

\end{document}